\documentclass[9pt, a4paper]{article}
\usepackage[margin=1in]{geometry} % full-width
% \usepackage[a4paper,top=1in,left=1in,right=1in,bottom=1.5in]{geometry}

% AMS Packages
\usepackage{cancel}
\usepackage{amsmath}
\usepackage{amsthm}
\usepackage{amssymb}
\usepackage{braket}
\usepackage{enumitem}   
\DeclareMathOperator*{\argmin}{arg\,min}

% Graphics
\usepackage{tikz}
\usepackage[dvipsnames]{xcolor} \usepackage{quantikz}
\usepackage{svg}
\usepackage{import}
\usepackage{adjustbox}

% Unicode
\usepackage[utf8]{inputenc}
\usepackage{hyperref}
\hypersetup{
    colorlinks=true,
    linkcolor=blue,
    filecolor=magenta,      
    urlcolor=cyan,
    citecolor=BrickRed,
    pdftitle={Overleaf Example},
    pdfpagemode=FullScreen,
}

% Natbib
\usepackage[sort&compress, numbers,square]{natbib}
\bibliographystyle{unsrt}

% Theorem, Lemma, etc
\theoremstyle{plain}
\newtheorem{theorem}{Theorem}[section]

\newtheorem{lemma}{Lemma}[section]

\newtheorem{proposition}{Proposition}[section]

\theoremstyle{definition}
\newtheorem{definition}[theorem]{Definition}

\numberwithin{equation}{section}

\usepackage{graphicx, color}
\graphicspath{{fig/}}

% Python code
\usepackage{tcolorbox}
\tcbuselibrary{minted,breakable,xparse,skins}

\definecolor{bg}{gray}{0.95}
\DeclareTCBListing{mintedbox}{O{}m!O{}}{%
  breakable=false,
  listing engine=minted,
  listing only,
  minted language=#2,
  minted style=default,
  minted options={
    linenos,
    %gobble=0,
    %breaklines=true,
    %breakafter=,,
    fontsize=\small,
    numbersep=8pt,
    #1},
  boxsep=0pt,
  left skip=0pt,
  right skip=0pt,
  left=19pt,
  right=0pt,
  top=6pt,
  bottom=6pt,
  arc=5pt,
  leftrule=0pt,
  rightrule=0pt,
  bottomrule=2pt,
  toprule=2pt,
  colback=bg,
  colframe=white!70,
  enhanced,
  overlay={%
    \begin{tcbclipinterior}
    \fill[orange!20!white] (frame.south west) rectangle ([xshift=15pt]frame.north west);
    \end{tcbclipinterior}},
  #3}

\usepackage{algorithm, algpseudocode} % use algorithm and algorithmicx for typesetting algorithms
\usepackage{mathrsfs} % for \mathscr command

\usepackage{lipsum}
\usepackage{titling} 

% Author info
\setlength{\droptitle}{-1.5cm} 

\title{An Introduction to the \\ Quantum Approximate Optimization Algorithm}
\author{
	Alessandro Giovagnoli\\ \small
	 German Aerospace Center (DLR), Germany \\
	 \small alessandro.giovagnoli@dlr.de
}

\date{
	%$^1$alessandro.giovagnoli@dlr.de\\%
}

\begin{document}
\DeclarePairedDelimiter\ceil{\lceil}{\rceil}

% No indent
\setlength\parindent{0pt}

	\maketitle

        \vspace{-20pt}
	
	\begin{abstract}
        The Quantum Approximate Optimization Algorithm (QAOA) is a promising variational quantum algorithm introduced to tackle classically intractable combinatorial optimization problems. This tutorial offers a comprehensive, first-principles introduction to QAOA and its properties, focusing on its application to Quadratic and Polynomial Unconstrained Binary Optimization (QUBO and PUBO) problems.
        The tutorial begins by outlining variational quantum circuits and QUBO problems, focusing on their key properties and the encoding of problem constraints through quadratic penalty terms. Next, it explores the QAOA in detail, covering its Hamiltonian formulation, gate decomposition, and example applications, along with their implementation and performance results. This is followed by an analysis of the algorithm’s energy landscape, where proofs are provided for its symmetry and periodicity, and where a resulting parameter space reduction is proposed. Finally, the tutorial extends these concepts to PUBO problems, generalizing the results to higher-order Hamiltonians and discussing the associated symmetries and circuit construction.
	\end{abstract}

	\tableofcontents

\newpage
\section{Background}\label{sec:background}

    The Quantum Approximate Optimization Algorithm (QAOA) is a hybrid quantum-classical algorithm that has been proposed to find approximate solutions to combinatorial optimization problems through a Variational Quantum Circuit (VQC). We briefly introduce VQCs and a particular class of combinatorial optimization problems, namely Quadratic Unconstrained Binary Optimization (QUBO) problems, which will be the focus of the first part of this document. 
    
    \subsection{Variational Quantum Circuits}
    Variational quantum circuits have become a popular solution in the current Noisy Intermediate-Scale Quantum (NISQ) era to implement quantum variational algorithms \cite{Cerezo2021}. VQCs leverage parametrized quantum gates whose optimal parameters are learned through classical optimization techniques. 
    In general, a VQC consists of a unitary operator $U(\theta)$ that has the form
    \begin{equation}\label{eq:U-circuit-in-general}
        U(\theta) = \prod\limits_{l=1}^L U_l(\theta_l) W_l
        ,
    \end{equation}
    where every $W_l$ is a fixed unitary while every $U_l(\theta_l)$ is the parametrized unitary gate representing the $l$-th layer and $\theta_l$ the vector containing the parameters of the layer. Each layer $l$ has the form 
    \begin{equation}
        U_l(\theta_l) = \prod\limits_{j=1}^M e^{- i \theta_{lj} V_j}, 
    \end{equation}    
    with $V_j$ the Pauli matrix generating the gate. 
    Given an initial state $\psi_0$ and an observable $O$, which can represent either the Hamiltonian of a physical problem whose energy we want to minimize or the cost function of an optimization problem, we define the output state $\psi(\theta) := U(\theta) \psi_0$ and the cost function $C(\theta) := \braket{\psi(\theta) | O | \psi(\theta)}$. The optimal parameters $\theta^*$ are those that minimize the average of the cost function $C(\theta)$, namely 
    \begin{equation}
        \theta^* = \argmin_{\theta} \braket{\psi(\theta) | O | \psi(\theta)}
        .
    \end{equation}
    More generally, the input state of a VQC can be represented as a mixed state described by a density matrix $\rho$. The cost function is then given by $\mathcal{C}(\theta) = \mathrm{Tr}\left[ U(\theta)\, \rho\, U^{\dagger}(\theta) O \right]$. 
    
    The optimal parameters $\theta^*$ are usually found through a classical algorithm and a classical computer: once multiple measurements have been performed and the average value of the Hamiltonian on the final state is computed, the parameters $\theta$ are updated in order to minimize $C(\theta)$. The procedure is repeated until convergence or a threshold value is reached. 
    
    Often, the optimization strategy used for VQCs is the gradient descent. Calculating the gradient of a VQC is particularly easy due to the parameter-shift rule \cite{Wierichs2022generalparameter}, which states that 
    \begin{equation}\label{eq:parameter-shift-rule}
        \partial_{\theta_k} C(\theta) \propto C(\theta + \pi/2) - C(\theta - \pi/2),
    \end{equation}
    allowing to compute each derivative $\partial_{\theta_k} C(\theta)$ of the cost function $C(\theta)$  by evaluating the circuit twice on the same parameters $\theta$, only shifted by a constant. 

    While VQCs offer significant flexibility in encoding complex quantum states, their trainability is often challenged by phenomena such as \emph{barren plateaus}, which consist in the gradient of the cost function vanishing exponentially with system size, making optimization difficult. Other factors affecting trainability include circuit depth, entanglement structure, and the choice of parameter initialization. Understanding these limitations, alongside the \emph{expressivity} of VQCs, which is defined as their ability to approximate a wide class of unitary matrices, is crucial for designing effective and efficient quantum variational algorithms
    \cite{PhysRevA.103.032430, PRXQuantum.3.010313, Ragone2024}.
    
    \subsection{Quadratic Unconstrained Binary Optimization Problems}\label{sec:qubo}
    Quadratic Unconstrained Binary Optimization (QUBO) problems are optimization problems characterized by a quadratic cost function whose variables are not subject to any constraints and take binary values. In many real problems, the variables are actually subject to one or multiple constraints, but these can be usually rewritten in a quadratic expression and summed to the cost function, in which case they are referred to as penalties. Using penalties allows us to rewrite constrained binary optimization problems as QUBO problems \cite{1811.11538}. 

    More rigorously, a QUBO problem can be defined by a vector of binary variables $x \in \{0,1\}^n$, a real-valued matrix $Q \in \mathbb{R}^{n \times n}$ and a real-valued vector $c \in \mathbb{R}^n$ and  can be written as 
    \begin{equation}\label{eq:qubo-definition}
        \text{minimize} \;\;\; x^t Q x + c^t x
        .
    \end{equation}
    We define the cost function $\mathcal{C}(x)$ as 
    \begin{equation}
        \mathcal{C}(x) := x^t Q x + c^t x 
        ,
    \end{equation}
    where $\mathcal{C}(x)$ can contain the problem constraints absorbed in  $Q$ and $c$. 
    
    In the following, we will present a short list of QUBO problems properties, as well as different examples of how optimization constraints can be rewritten in the cost function.

    \newtheorem*{propertyblock}{Properties}
    \newtheorem*{penaltyblock}{Penalties}

    \begin{propertyblock}[of QUBO Problems]
    Quadratic unconstrained binary optimization problems are characterized by the following properties:

    \begin{enumerate}[label=(\roman*)]
        \item \textbf{Binary Variables:} Since $x_i \in \{0, 1\}$, it always holds that $x^2_i = x_i$. This means that the term $c^t x$ in Equation (\ref{eq:qubo-definition}) can be rewritten as $c^t x = x^t C x$ with $C = \text{diag}\{c_1, \dots, c_n\}$ and thus $x^t Q x + c^t x = x^t(Q + C)x = x^t Q' x$.

        \item \textbf{Symmetric Matrix:} Any QUBO can be represented with a symmetric matrix \( Q \), since $x^T Q x = x^T \left(\frac{Q + Q^T}{2}\right) x$.
        
        \item \textbf{Maximization Problems:} QUBO problems can also be seen as maximization problems by minimizing $-\mathcal{C}(x)$.
    
        \item \textbf{Trivial Solutions:} If $Q_{ij} \geq 0 \; \forall i,j$ then the optimal solution is $x^* = (0, \dots, 0)$. If $Q_{ij} < 0 \; \forall i,j$ then the optimal solution is $x^* = (1, \dots, 1)$. 

        \item \label{property:v} \textbf{Scale Invariance:} If $x^*$ is an optimal solution for $\mathcal{C}(x)$, then it is also an optimal solution for $\mathcal{C}(x)/k$, with $k > 0$, since $\argmin\limits_{x} \left( \mathcal{C}(x)\right) = \argmin\limits_{x} \left(\mathcal{C}(x) / k\right)$.

        \item \textbf{NP-Hardness:} QUBO problems are NP-hard problems since a QUBO problem can be mapped to the Max Cut problem, which is proven to be NP-hard \cite{punnen2022qubo}. Also, the space of possible solutions grows exponentially with $2^n$.

    \end{enumerate}
    \end{propertyblock}

    We also give some examples of constraints on the variables and show how they can be encoded in the cost function. Namely, we can associate to each constraint $i=1, \dots, M$ a scalar function $P_i(x)$, called penalty, such that $P_i(x) > 0$ if the constraint $i$ is not respected and $P_i(x) = 0$ otherwise. We define the coefficient $p_i \in \mathbb{R_+}$ that determines the relative weight of the penalty with respect to the unconstrained cost function $x^t Q x + c^t x$. We can then rewrite the constrained cost function as 
    \begin{equation}
        \mathcal{C}(x) = x^t Q x + c^t x + \sum_{i = 1}^M p_i P_i(x)
        .
    \end{equation}
    \begin{penaltyblock}[of QUBO Problems]
    In the following we consider combinatorial binary optimization problems of size $n$ and we show examples of constraints that can be mapped to quadratic penalties $P(x)$ such that $P(x) = 0$ if the constraint is satisfied and $P(x) > 0$ otherwise:
    \begin{enumerate}
        \item 
        $x_i + x_j \leq 1 \;\;\; \to \;\;\; P(x_i, x_j) = x_i x_j$
        .
        
        \item 
        $x_i + x_j \geq 1 \;\;\; \to \;\;\; P(x_i, x_j) = (1-x_i) (1-x_j)$
        .
        
        \item 
        $x_i = x_j \;\;\; \;\;\; \;\;\; \to \;\;\; P(x_i, x_j) = x_i(1-x_j) + (1 - x_i)x_j$ 
        .
        
        \item 
        $\sum\limits_{i \in I} x_i \leq 1
        \;\;\;\;\;\; 
        \to
        \;\;\; P(x) = \sum\limits_{i,j \in I, i \neq j} x_i x_j$
        .
        
        \item 
        $\sum\limits_{i \in I} x_i = W \;\;\;\; \to \;\;\; P(x) = \left(\sum\limits_{i \in I} x_i - W \right)^2$
        .
        
        \item 
        \label{ex:slack-penalty}
        % https://www.iccs-meeting.org/archive/iccs2023/papers/140770086.pdf
        $\sum\limits_{i=1}^n w_i x_i \leq W$ with $w_i, W \in \mathbb{Z} 
        \;\; \to \;\; 
        P(x) = 
        \left(
            W - \sum\limits_{i=1}^n w_i x_i - \sum\limits_{i=n+1}^{n+m} 2^{i-(n+1)} x_{i}
        \right)^2
        $
        ,
        
        with $m = \ceil*{\log_2(W+1)}$. To define this constraint we considered that the inequality $\sum_{i=1}^n w_i x_i \leq W$ can be written as an equality taking the difference $S$ into account. Namely $\sum_{i=1}^n w_i x_i + S = W$, with $S \in \mathbb{N}$, which corresponds to the penalty $P(x) = (W - \sum_{i} w_i x_i - S)^2$. We now rewrite $S$ in its binary base. Since the maximum value of $S$ is $W$, we need $m := \ceil*{\log_2(W+1)}$ new binary slack variables $x_{n+1}, \dots, x_{n+m}$ to write $S = \sum_{i = n+1}^{n+m} s^{i-(n+1)}x_{i}$. 

        \item 
        \label{ex:unbalanced-penalty}
        $\sum\limits_{i=1}^n w_i x_i \leq W$ with $w_i, W \in \mathbb{R} 
        \;\; \to \;\; 
        P_1(x) = \sum\limits_{i=1}^n w_i x_i - W
        $
        \;\;
        and
        \;\;
        $
        P_2(x) = 
        \left(
            \sum\limits_{i=1}^n w_i x_i - W
        \right)^2
        $
        .
     
        This is a second possibility to express the same constraint of the previous example in a quadratic penalty which extends to real coefficients. The penalty $P_2$ penalizes every solution whose sum $\sum_{i=1}^n w_i x_i$ is distant from the threshold $W$, while $P_1$ penalizes solutions whose sum is above $W$ and favors solutions whose sum is below $W$. The total penalty, given by $p_1P_1(x) + p_2P_2(x)$, is not exact as penalty (\ref{ex:slack-penalty}) but it does not require extra slack variables.
    \end{enumerate}
    \end{penaltyblock}

\section{The QAOA}\label{sec:QAOA}

    In this section we derive the quantum approximate optimization algorithm which was first introduced by E. Fahri and J. Goldstone in \cite{1411.4028} to obtain approximate solutions to combinatorial optimization problems. We first highlight its analogy with quantum annealing and then develop the QAOA algorithm for QUBO problems in full generality.
    
    \subsection{Spin Hamiltonian}
        We start by showing that the cost function 
        \begin{equation}\label{eq:cost-func-binary}
            \mathcal{C}(x) = \sum\limits_{i,j = 1}^n 
            Q_{ij}
            x_i x_j  + \sum\limits_{i=1}^n c_i x_i 
            ,
        \end{equation}
        with variables $x \in \{0,1\}^{n}$, can be rewritten as a spin Hamiltonian with variables $s \in \{-1,1\}^{n}$, where $s$ represents the values of the spin in a given direction. To do this we perform the change of variable $s_i = 2 x_i - 1$ that maps $x_i \in \{0,1\}$ to $s_i \in \{-1, 1\}$.
        
        Plugging this into the cost function by substituting $x_i = \frac{s_i + 1}{2}$ we obtain the Hamiltonian $\mathcal{H}(s) = \mathcal{C}\left(\frac{s_i + 1}{2}\right)$, namely
        \begin{equation}\label{eq:qubo-ham}
        \begin{aligned}    
            \mathcal{H}(s) &= \sum_{i, j = 1}^n  \frac{Q_{ij}}{4}(s_i + 1)(s_j + 1) + \sum_{i = 1}^n \frac{c_i}{2} (s_i + 1) 
            \\
            & =  
            \sum_{i, j = 1}^n \frac{Q_{ij}}{4} s_i s_j +  \sum_{i = 1}^n \frac{1}{2} \left( c_i + \sum_{j = 1}^n Q_{ij}   \right) s_i + \bcancel{\left( \sum_{i, j = 1}^n \frac{Q_{ij}}{4} + \sum_{i = 1}^n \frac{c_i}{2}  \right)}
            ,
        \end{aligned}
        \end{equation}
        where the last term in brackets is a constant with respect to the $s$ variables and we can thus delete it. Physically this can be interpreted by remembering that the Hamiltonian, being an energy, can always be shifted by a constant.

        We remember that, as discussed in Property (ii), $Q_{ij} = Q_{ji}$, which allows us to rewrite the sum only on the indices where $i \leq j$ provided that we count twice the coefficients where $i \neq j$ and only once the coefficients where $i = j$. We then define
        \begin{equation}
            a^{ij} := 
            \begin{cases}
                \frac{Q_{ij}}{2}     &\text{if $i \neq j$}\\
                \frac{Q_{ij}}{4}     &\text{if $i = j$}\\
            \end{cases}    
            \;\;\;\;
            \text{}
            \;\;\;\;
            b^i := \frac{1}{2} \left( c_i + \sum\limits_{j=1}^n Q_{ij} \right)
        \end{equation}
        and rewrite the Hamiltonian as
        \begin{equation}\label{eq:ham-cost-function}
            \mathcal{H}(s) 
            =
            \sum\limits_{\substack{i,j = 1 \\ i \leq j}}^n a^{ij}s_is_j + \sum\limits_{i=1}^n b^i s_i
            ,
        \end{equation}
        which has the same form as the binary cost function of Equation (\ref{eq:cost-func-binary}) but with different coefficients and variables $s \in \{-1,1\}^{n}$. 
        
        By defining $J_{ij} = - Q_{ij}/4$ and $h_\text{i} = - \frac{1}{2}(c_i + \sum_{j=1}^n Q_{ij})$ we can rewrite Equation (\ref{eq:qubo-ham}) as the Hamiltonian of the Ising model, namely
        \begin{equation}\label{eq:ising-hamiltonian}
            \mathcal{H}(s) = - \sum_{i, j = 1}^n J_{ij} s_i s_j - \sum_{i = 1}^n h_\text{i} s_i
            .
        \end{equation}
        Equation (\ref{eq:ising-hamiltonian}) shows how finding the optimal solution of a QUBO cost function is equivalent to finding the lowest energy eigenstate of a physical Hamiltonian, namely the Ising Hamiltonian. Encoding the cost function of a QUBO problem into an Ising Hamiltonian is the approach used in quantum annealing \cite{rajak2023qa}, where the optimal solution is found by physically evolving a system of spins until they result in the lowest energy configuration of the Ising Hamiltonian encoding the problem. The adiabatic theorem, which we will explore in the next section, guarantees that the evolution yields the optimal solution, provided that the change in the Hamiltonian in slow enough.

    \subsection{Adiabatic Theorem and Trotterization}

        The adiabatic theorem \cite{Born1928}
        is a result of quantum mechanics presented by Born and Fock in 1928 which states the following.
        \begin{theorem}[Adiabatic Theorem]
            A physical system remains in its instantaneous eigenstate if a given perturbation is acting on it slowly enough and if there is a gap between the eigenvalue and the rest of the Hamiltonian's spectrum.
        \end{theorem}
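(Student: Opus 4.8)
The plan is to treat this as the standard quantum-mechanical adiabatic theorem: work with the time-dependent Schr\"odinger equation for a Hamiltonian that varies on a slow timescale and show that, in the slow limit, the amplitude for leaking out of the initial eigenstate vanishes. First I would introduce a rescaled time $s = t/T \in [0,1]$ so that the Hamiltonian is $H(s)$ and $i\,\partial_t\ket{\psi} = H(s)\ket{\psi}$ becomes $i\,T^{-1}\partial_s\ket{\psi} = H(s)\ket{\psi}$; the slowness of the perturbation is then encoded in the single large parameter $T$. I would assume that for every $s$ the instantaneous eigenproblem $H(s)\ket{n(s)} = E_n(s)\ket{n(s)}$ has discrete, non-degenerate spectrum, that the level of interest, say $\ket{0(s)}$, is separated from the rest by a gap $\Delta(s) = \min_{n\neq 0}|E_n(s) - E_0(s)| \geq \Delta_{\min} > 0$, and that $s\mapsto H(s)$ is smooth enough (say $C^2$) for the derivatives below to exist.

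Next I would expand the exact solution in the instantaneous basis, $\ket{\psi(t)} = \sum_n c_n(t)\, e^{-i\int_0^t E_n\,dt'}\ket{n(s)}$, substitute into the Schr\"odinger equation, and project onto $\bra{m(s)}$ to obtain the standard coupled equations
\[
\partial_t c_m = - c_m \braket{m | \partial_t m} - \sum_{n\neq m} c_n\, e^{-i\int_0^t (E_n - E_m)\,dt'}\,\braket{m | \partial_t n} .
\]
Differentiating the eigenvalue equation gives $\braket{m | \partial_t n} = \braket{m | \partial_t H | n}/(E_n - E_m)$ for $m\neq n$, which makes the gap appear explicitly in the denominator; the diagonal term only generates the Berry phase and transfers no population. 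Starting from $c_0(0)=1$ and $c_{n\neq 0}(0)=0$, I would integrate the equation for a target $c_m$ with $m\neq 0$ and integrate by parts once in $s$: each $\partial_t n = T^{-1}\partial_s n$ supplies a factor $1/T$, and integrating the rapidly oscillating phase $e^{-i\int (E_n - E_m)\,dt'}$, whose argument grows at rate $\propto T(E_n-E_m)$, contributes a further denominator $\propto T\,\Delta$. Hence the boundary and remainder terms are all $O\!\big(1/(T\,\Delta_{\min}^2)\big)$ uniformly in $s$, so $|c_{n\neq 0}(1)| = O(1/T)$ and $|c_0(1)| \to 1$ as $T\to\infty$, which is precisely the claim.

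The main obstacle, and the step where the gap hypothesis is indispensable, is the uniform control of the terms produced by the integration by parts: without a positive lower bound on $\Delta(s)$ the factor $1/(E_n - E_m)$ can diverge --- for instance at an avoided crossing that closes --- the estimate breaks down, and population can be transferred at $O(1)$ cost however slowly the Hamiltonian is driven. A secondary point I would flag but not develop in a tutorial is that a genuinely continuous or degenerate spectrum requires the sharper machinery of Kato's adiabatic theorem rather than the eigenbasis expansion sketched here.
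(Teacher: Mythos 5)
The paper does not actually prove this theorem: it is quoted as a background result of Born and Fock and supported only by a citation, so there is no in-paper argument to compare yours against. Your sketch is the standard textbook proof and is essentially sound: the rescaling $s=t/T$, the expansion in the instantaneous eigenbasis with the dynamical phase factored out, the coupled equations for the $c_m$, the identity $\braket{m|\partial_t n}=\braket{m|\partial_t H|n}/(E_n-E_m)$ exposing the gap in the denominator, and the integration by parts harvesting the $1/T$ are all correct, and you correctly locate the role of the gap hypothesis and the limitation to discrete non-degenerate spectra. The one step you should not wave through even in a sketch is closing the estimate after the integration by parts: the right-hand side of the equation for $c_m$ still contains the unknown amplitudes $c_n$ (and the integration by parts produces a term with $\partial_s c_n$ under the integral), so the bound $|c_{n\neq 0}|=O(1/T)$ does not follow from a single pass but requires either a Gronwall-type bootstrap --- using $\partial_s c_n = O(1/T)$ from the equation itself --- or Kato's method of comparing the true evolution with the parallel-transport evolution generated by the spectral projections. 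As stated, your argument implicitly assumes the $c_n$ stay bounded and slowly varying, which is true but is exactly what needs to be proved; flagging this would make the sketch honest about where the real work lies.
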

        We stress that by ``slowly enough'' it is intended that the energy provided to modify the system Hamiltonian is smaller than the gap between the lowest energy state and the next one. The way in which this theorem can be exploited to find the lowest eigenstate of a desired Hamiltonian is the following: starting from a simple Hamiltonian whose lowest eigenstate is known, we can vary the Hamiltonian slowly enough so that the state remains in the lowest energy configuration. At the end of the time evolution, the system will be in the lowest energy state of the target Hamiltonian. This process is visually sketched in Figure~\ref{fig:ad-theorem}.
        
        \vspace{10pt}
        \begin{figure}[ht]
          \centering
          \includesvg[width = 360 pt]{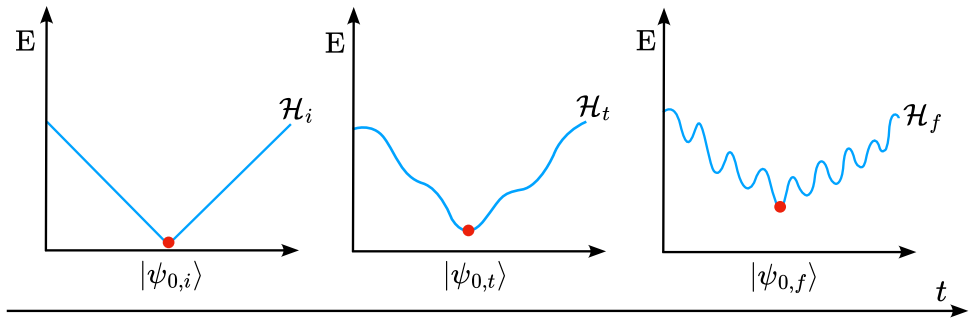}
          \caption{Pictorial representation of the Adiabatic Theorem. Starting from a system in the lowest energy state of a simple Hamiltonian, a unitary evolution is performed until the Hamiltoinian of interest is reached. According to the adiabathic theorem, now the system will be in the lowest energy state of the target Hamiltonian, provided that there is a gap between the lowest energy state and the next one.}
          \label{fig:ad-theorem}
        \end{figure}
        \vspace{10pt}
        
        To perform the unitary evolution we can define a time-dependent Hamiltonian which is composed of an initial Hamiltonian $\mathcal{H}_{\text{i}}$, whose lowest energy eigenstate is known, and a final Hamiltonian $\mathcal{H}_{\text{f}}$, the one encoding the QUBO problem. We start from Equation(\ref{eq:ham-cost-function}), which shows the Hamiltonian encoding the cost function of the optimization problem expressed with variables $s_i \in \{-1,1\}$. We now map the variables $s$ to the physical observables corresponding to the spin along the $z$-axis, which  is the Pauli-Z matrix $\sigma_z$. By mapping $s_i \to \sigma^i_z$ we obtain the final Hamiltonian $\mathcal{H}_\text{f}$, namely
        \begin{equation}\label{eq:final-qubo-ham}
        	\mathcal{H}_\text{f} 
        	:=
        	\sum_{\substack{i,j = 1 \\ i \leq j}}^n 
        	a^{ij} \sigma^i_z \sigma^j_z 
        	+
        	\sum_{i = 1}^n 
        	b^i
        	\sigma^i_z
        	.
        \end{equation}
        We stress the fact that writing $\sigma^i_z$ is a shorthand for the tensor product 
        \begin{equation}
        	\begin{aligned}
        		\sigma^i_z 
        		= 
        		\mathbb{I} \otimes \ldots  \underbrace{{\sigma_z}}_\text{$i$-th qubit}  \ldots \otimes \mathbb{I}
        		,
        	\end{aligned}
        \end{equation}
        and that, as a consequence, $\sigma^i_z \sigma^j_z$ is also a shorthand notation for
        \begin{equation}\label{eq:tensor-product-relations}
        	\begin{aligned}
        		\sigma^i_z \sigma^j_z 
        		&:=
        		\mathbb{I} \otimes \ldots \underbrace{\sigma_z}_\text{$i$-th qubit} \ldots \underbrace{\sigma_z}_\text{$j$-th qubit} \ldots \otimes \mathbb{I}
        		\;\;\;
        		\text{when}
        		\;\;\;
        		i \neq j
        		,
        		\\
        		\sigma^i_z \sigma^i_z 
        		&:= 
        		\mathbb{I} \otimes \ldots \underbrace{\left(\sigma_z\right)^2}_\text{$i$-th qubit} \ldots \otimes \mathbb{I} = \mathbb{I} \otimes \ldots \otimes \mathbb{I}
        		=
        		\mathbb{I}_{2^n}
        		.
        	\end{aligned}
        \end{equation}
        In the following we will discard the identities on the qubits where no gate is placed to make the calculations more readable. 

        Once an initial Hamiltonian $\mathcal{H}_\text{i}$, whose lowest energy eigenstate $\psi_{0, \text{i}}$ is known, has been chosen, we can define the time-evolving Hamiltonian
        \begin{equation}
            \mathcal{H}(t) :=  (1-t)\mathcal{H}_\text{i} + t \mathcal{H}_\text{f}  \;\;\; \;\;\; t \in [0,1]
            ,
        \end{equation}
        and apply the unitary evolution given by $U(t)$ that evolves the initial state $\psi_{0,\mathrm{i}}$ into the final state $\psi_{0,\mathrm{f}}$ as $\psi_{0,\mathrm{f}} = U(t)\,\psi_{0,\mathrm{i}}$. Since the Hamiltonian $\mathcal{H}(t)$ is time-dependent, as discussed in Appendix \ref{app:approximation-of-dyson}, the time evolution operator $U(t)$ is given by 
        \begin{align}
        	U(t) &= \mathcal{T}\exp\!\left\{-i\int_{t_0}^{t}
        	\mathcal{H}(s)\,ds\right\},
        \end{align}
        where we set $\hbar = 1$. $\mathcal{T}$ denotes the time-ordering operator, which arranges Hamiltonians at later times to the left. Setting $t_0 = 0$ and $t = 1$ according to the time interval $[0,1]$ that we chose for $\mathcal{H}(t)$, we get
        \begin{align}\label{eq:unitary-time-dependant}
        	U(1) 
        	= 
        	\mathcal{T} \exp\!\left\{
        	-i\int_{0}^{1} \mathcal{H}(s)\,ds
        	\right\}
        	.
        \end{align}
        In Appendix \ref{app:approximation-of-dyson} we prove that $U(1)$ can be approximated through the Trotterization at the first order, shown in Proposition \ref{prop:trotter}, as 
        \begin{align}\label{eq:approximation-of-U}
        	U(1) \approx
        	\mathcal{T}	\prod\limits_{k = 1}^{p} e^{- \mathrm{i} (1-t_k) \Delta t H_\mathrm{i}} e^{- \mathrm{i} t_k \Delta t H_\mathrm{f}}
        	,
        \end{align}
        where we defined $\Delta t = 1/p$ and $t_k = k \Delta t$. We then define the QAOA $k-$th layer operator 
        \begin{equation}\label{eq:L-operator}
        	L_k := e^{- \mathrm{i} (1-t_k) \Delta t H_\mathrm{i}} e^{- \mathrm{i} t_k \Delta t H_\mathrm{f}}
        	,
        \end{equation}
        which allows us to write the final state as 
        \begin{align}\label{eq:approximation-layer}
        	\psi_{0, \mathrm{f}} = U(1)\psi_{0, \mathrm{i}} \approx L_p \dots L_1 \psi_{0, \mathrm{i}}
        	,
        \end{align}
        where we applied the time-ordering operator $\mathcal{T}$ to arrange the operators in decreasing order. We notice from Equation \ref{eq:L-operator} that the $k-$th layer $L_k$ presents the coefficients $(1-t_k) \Delta t$ and $t_k \Delta t$ in the exponentials.  We generalize such coefficients by substituting them with free parameters, namely $(1-t_k) \Delta t \to \beta_k/2$ and $t_k \Delta t \to \gamma_k/2$, that represent the time steps of the $k$-th layer. The factor $1/2$ is due to the fact that later we will show that $\beta_k$ and $\gamma_k$ so defined can be recognized as rotation angles, the free parameters of the variational algorithm as presented in Equation (\ref{eq:U-circuit-in-general}).
        
        The reason why we introduce the free parameters is that, although the exact unitary evolution ensures that the lowest energy state $\psi_{0, \text{f}}$ is reached, we applied the approximation introduced in Equation (\ref{eq:approximation-of-U}), which facilitates the decomposition of the unitary evolution into base gates, but at the cost of losing a rigorous guarantee of convergence. Modifying the ``speed" of the time steps by optimizing the learnable parameters might thus help the evolution to converge or make it faster.
        By introducing the free parameters $\beta_k$ and $\gamma_k$ for each layer, we rewrite the $k-$th layer operator $L_k$ as
        \begin{equation}
        	L_k(\beta_k, \gamma_k) = e^{- \mathrm{i} \frac{\beta_k}{2} H_\mathrm{i}} e^{- \mathrm{i} \frac{\gamma_k}{2} H_\mathrm{f}} 
        	,
        \end{equation}
        and we rewrite the unitary operator $U(1)$ as $U(\vec{\beta}, \vec{\gamma})$, also arraning the layers in decreasing time order. We finally obtain
        \begin{equation}
        	\begin{aligned}
        		U(\vec{\beta}, \vec{\gamma})
        		& =
        		L_{p}(\beta_{p}, \gamma_{p}) \dots L_1(\beta_{1}, \gamma_1)
        		=
        		\left(
        		e^{-\mathrm{i} \frac{\beta_{p}}{2} \mathcal{H}_\text{i}}
        		e^{-\mathrm{i} \frac{\gamma_{p}}{2} \mathcal{H}_\text{f}}
        		\right)
        		\dots
        		\left(
        		e^{-\mathrm{i} \frac{\beta_{1}}{2} \mathcal{H}_\text{i}}
        		e^{-\mathrm{i} \frac{\gamma_{1}}{2} \mathcal{H}_\text{f}}
        		\right)  
        		.
        	\end{aligned}
        \end{equation}
        We define the lowest energy state of the final Hamiltonian when the evolution is truncated after $p$ layers as
        \begin{equation}\label{eq:series-of-layers}
        	\psi^{(p)}_{0, \text{f}} := L_p(\beta_p, \gamma_p) L_{p-1}(\beta_{p-1}, \gamma_{p-1}) \dots L_1(\beta_1, \gamma_1) \psi_{0, \text{i}}.
        \end{equation}
        We are then guaranteed the following two results.
        \begin{enumerate}
        	\item When $p \to \infty$ we obtain the optimal solution, since, due to Equation (\ref{eq:final-approx}) we have that
        	\begin{equation}
        		\lim\limits_{p \to \infty}\psi^{(p)}_{0, \text{f}} = \psi_{0, \text{f}}
        		.
       	\end{equation}
       	\item Adding one layer one can always increase the performances by finding a solution with a lower value of $C(\theta)$, namely
       	\begin{equation}\label{eq:adding-layer-improves}
       		\braket{\psi^{(p)}_{0, \text{f}} | \mathcal{H}_\text{f} | \psi^{(p)}_{0, \text{f}}} \leq \braket{\psi^{(p-1)}_{0, \text{f}} | \mathcal{H}_\text{f} | \psi^{(p-1)}_{0, \text{f}}}
       		.
       	\end{equation}
       	This is true since the circuit of depth $p-1$ can always be seen as a particular case of a circuit of depth $p$ where the parameters of the last layer are set to zero, namely $L_p(0,0) = \mathbb{I}$. This implies that the addition of a new layer, once the optimal parameters have been found, can only decrease the average value of the cost function or leave it unchanged, and thus produce better solutions of the optimization problem. However, we highlight the fact that the QAOA algorithm has a performance guarantee only for $p \to \infty$, so if the number of layers, and consequently the depth of the circuit, tends to infinity. When, as it must be, the sequence of layers is truncated, the evolution does not necessarily yield the optimal solution.
     \end{enumerate}
     
     \subsection{Initial Hamiltonian}
        The initial Hamiltonian that is usually chosen is the mixer Hamiltonian, namely 
        \begin{equation}\label{eq:mixer-ham}
            \mathcal{H}_\text{i} = - \sum\limits_{i = 1}^n \sigma^i_x
            ,
        \end{equation}
        where $n$ is the number of qubits and $\sigma^i_x$ the Pauli-$X$ matrix applied to the $i$-th qubit. It is easy to check that the lowest energy state of $\mathcal{H}_\text{i}$ is
        \begin{equation}
            \ket{+}^{\otimes n} = H^{\otimes n} \ket{0, \dots, 0} 
            .
        \end{equation}
        In fact, we know that the eigenvalues of $\sigma^i_x$ are $\pm 1$. From this follows that the lowest eigenvalue of $\mathcal{H}_\text{i}$ is $-n$. We can then see explicitly that $\ket{+}^{\otimes n}$ is the lowest energy eigenstate of $\mathcal{H}_\text{i}$, since
        \begin{equation}\label{eq:lowest-energy-mixer-ham}
        \begin{aligned}
                \mathcal{H}_\text{i} \ket{+}^{\otimes n}
                =
                \left( - \sum\limits_{i = 1}^n \sigma^i_x \right) \ket{+}^{\otimes n}
                = 
                - \sum\limits_{i = 1}^n  \ket{+}_1 \dots \sigma^i_x \ket{+}_i \dots \ket{+}_{n} 
                = 
                - \sum\limits_{i = 1}^n \ket{+}^{\otimes n}
                = -n \ket{+}^{\otimes n}
                ,
        \end{aligned}
        \end{equation}
        where we used the fact that $\sigma_x \ket{+} = \ket{+}$. Since $\ket{+}^{\otimes n}$ is the lowest energy eigenstate of $\mathcal{H}_\text{i}$ it can be used as a starting point for the time-evolution. This initial state can be created applying Hadamrd gates on every qubit, namely $H \otimes \dots \otimes H$.

        Having chosen $\mathcal{H}_\text{i}$ and $\mathcal{H}_\text{f}$ we will proceed with the gate decomposition of their exponential.
                
        \subsection{Gate Decomposition of the Layer Operator}
        In the following, we focus on a generic layer indexed by $p$, without assuming it to be the final layer of the circuit. Specifically, we analyze the $p$-th layer, denoted $L_p(\beta_p, \gamma_p)$, and show how it can be decomposed into a sequence of elementary quantum gates.
		We start by writing the generic $p-$th QAOA layer as
        \begin{equation}
            L_p(\beta_p, \gamma_p) = U_{\mathrm{i}}(\beta_p)  U_{\mathrm{f}}(\gamma_p) 
            ,
        \end{equation}
        where we defined 
        \begin{equation}\label{eq:Ui-Uf-definitions}
        \begin{aligned}
            U_{\mathrm{i}}(\beta_p) 
            &:=
            e^{-\mathrm{i} \frac{\beta_p}{2} \mathcal{H}_\text{i}} \\
            U_{\mathrm{f}}(\gamma_p) 
            &:= 
            e^{-\mathrm{i} \frac{\gamma_p}{2} \mathcal{H}_\text{f}}
            ,
        \end{aligned}
        \end{equation}
        and we proceed by decomposing both $U_\mathrm{i}$ and $U_\mathrm{f}$ separately.

        \subsection*{Decomposition of $U_{\mathrm{i}}(\beta_p)$}

        We start by computing the exponential of $\mathcal{H}_\text{i}$. The fact that the exponential of a Pauli matrix corresponds to a rotation operator is well known. Nevertheless, we include in the following an explicit derivation, as a similar calculation will later be extended to exponentials involving tensor products of Pauli matrices. 
        
        A direct calculation yields
        \begin{equation}\label{eq:exp-of-sigmax}
        \begin{aligned}
                U_{\mathrm{i}}(\beta_p) 
                &=
                e^{- \mathrm{i} \frac{\beta_p}{2} \sum\limits_{i = 1}^n \sigma^i_x } 
                \\
                &\stackrel{(a)}{=} 
                \prod\limits_{i = 1}^n e^{-\mathrm{i} \frac{\beta_p}{2} \sigma^i_x}
                \\
                &\stackrel{(b)}{=}  
                \prod\limits_{i = 1}^n 
                \left[
                    \sum\limits_{k = 0}^{\infty} (-1)^{k} \frac{(\beta_p/2)^{2k}}{(2k)!} \mathbb{I} 
                    -
                    \mathrm{i} \sum\limits_{k = 0}^{\infty} (-1)^k \frac{(\beta_p/2)^{2k + 1}}{(2k + 1)!} \sigma^i_x 
                \right]
                \\
                &=
                \prod\limits_{i = 1}^n
                \left[
                    \cos(\beta_p/2) \mathbb{I} 
                    -
                    \mathrm{i} \sin(\beta_p/2) \sigma^i_x 
                \right]
                \\
                &=
                \prod\limits_{i = 1}^n
                R^i_x(\beta_p)
                ,
        \end{aligned}
        \end{equation}
        where in equality ($a$) we separated the exponential function of a sum into the product of the exponential functions because $[\sigma^i_z, \sigma^j_z] = 0 \; \forall i,j$, since they act on different qubits. In equality $(b)$ we divided the sum in even and odd indices and used the fact that $(\mathrm{i} \sigma_x)^{2k} = (-1)^k \mathbb{I}$ and $(\mathrm{i} \sigma_x)^{2k+1} = (-1)^k \mathrm{i} \sigma_x$.
        In conclusion, $U_\mathrm{i}(\beta_p)$ acts as a series of $R_x$ gates on every qubit with angle $\beta_p$.         
        \subsection*{Decomposition of $U_{\mathrm{f}}(\gamma_p)$}
        
        We now compute the exponential of $\mathcal{H}_\text{f}$ and obtain
        \begin{equation}\label{eq:U-f}
        \begin{aligned}
            U_{\mathrm{f}}(\gamma_p) 
            &=
            e^{-\mathrm{i} \frac{\gamma_p}{2} \mathcal{H}_\text{f}} 
            \\
            &\stackrel{(a)}{=} 
            \exp \left\{ -\mathrm{i} \sum\limits_{\substack{i,j = 1 \\ i\leq j}}^n\frac{a_p^{ij}}{2} \sigma^i_z \sigma^j_z - \mathrm{i}\sum\limits_{i = 1}^n \frac{b_p^{i}}{2} \sigma^i_z \right\} 
            \\
            &=
            \prod\limits_{\substack{i,j = 1 \\ i\leq j}}^n \exp\left\{-\mathrm{i} \frac{a_p^{ij}}{2} \sigma^i_z \sigma^j_z \right\} \prod\limits_{i= 1}^n \exp\left\{-\mathrm{i} \frac{b_p^{i}}{2} \sigma^i_z\right\} 
            \\
            &\stackrel{(b)}{=} 
            \prod\limits_{\substack{i,j = 1 \\ i\leq j}}^n
            \exp\left\{-\mathrm{i} \frac{a_p^{ij}}{2} \sigma^i_z \sigma^j_z \right\} 
            \prod\limits_{i= 1}^n R^i_z\left(b_p^{i}\right)
            ,
        \end{aligned} 
        \end{equation}
        where in equality $(a)$ we defined the coefficients
        \begin{equation}\label{eq:a-and-b}
        \begin{aligned}    
            a_p^{ij} 
            &:=
            \gamma_p a^{ij} \\
            b_p^{i} 
            &:=\gamma_p b^i
            ,
        \end{aligned}
        \end{equation}
        and in $(b)$ we computed the terms $\exp\left\{-\mathrm{i} \frac{b_p^{i}}{2} \sigma^i_z\right\} = R^i_z\left(b_p^{i}\right)$ analogously to Equation $(\ref{eq:exp-of-sigmax})$. We now proceed to decompose the term $\exp\left\{ -\mathrm{i} \frac{a_p^{ij}}{2} \sigma^i_z \sigma^j_z \right\}$ in the last line of Equation (\ref{eq:U-f}). 
        
        We divide the cases where $i\neq j$ and $i = j$.

        \begin{enumerate}
            \item 
            When $i \neq j$, by the defintion of tensor product between operators, we have
            \begin{equation}
                \sigma^i_z \sigma^j_z
                =
                \sigma^i_z \otimes \sigma^j_z
                =
                \begin{pmatrix}
                    1 & 0\\
                    0 & -1
                \end{pmatrix}
                \otimes 
                \sigma_z
                =
                \begin{pmatrix}
                    \sigma_z & 0\\
                    0 & -\sigma_z
                \end{pmatrix}
                =
                \text{diag}\{1,-1,-1,1\}
                ,
            \end{equation}
            from which follows that
            \begin{equation} \label{eq:exp-term}
            \begin{aligned}
                \exp\left\{-\mathrm{i} \frac{a_p^{ij}}{2} \sigma^i_z \sigma^j_z \right\} 
                &=
                \exp\left\{
                    \text{diag}\left\{-\mathrm{i} \frac{a_p^{ij}}{2}, \mathrm{i} \frac{a_p^{ij}}{2}, \mathrm{i} \frac{a_p^{ij}}{2}, -\mathrm{i} \frac{a_p^{ij}}{2}\right\}
                \right\} 
                \\
                & \stackrel{(a)}{=}
                \text{diag}\left\{
                    e^{-\mathrm{i} a_p^{ij}/2}, 
                    e^{\mathrm{i} a_p^{ij}/2}, 
                    e^{\mathrm{i} a_p^{ij}/2}, 
                    e^{-\mathrm{i} a_p^{ij}/2}
                \right\}   
                \\
                &= 
                \begin{pmatrix}
                    1 & 0 & 0 & 0 \\
                    0 & 1 & 0 & 0 \\
                    0 & 0 & 0 & 1 \\
                    0 & 0 & 1 & 0 \\
                \end{pmatrix} 
                \begin{pmatrix}
                    e^{-\mathrm{i} a_p^{ij}/2} & 0 & 0 & 0 \\
                    0 & e^{\mathrm{i} a_p^{ij}/2} & 0 & 0 \\
                    0 & 0 & e^{-\mathrm{i} a_p^{ij}/2 } & 0 \\
                    0 & 0 & 0 & e^{\mathrm{i} a_p^{ij}/2} \\
                \end{pmatrix}
                \begin{pmatrix}
                    1 & 0 & 0 & 0 \\
                    0 & 1 & 0 & 0 \\
                    0 & 0 & 0 & 1 \\
                    0 & 0 & 1 & 0 \\
                \end{pmatrix} 
                \\
                &\stackrel{(b)}{=}
                \text{CNOT}(i,j)
                \left[\mathbb{I} \otimes \begin{pmatrix}
                    e^{-\mathrm{i} a_p^{ij}/2} & 0 \\
                    0 & e^{\mathrm{i}a_p^{ij}/2} \\
                \end{pmatrix}\right]
                \text{CNOT}(i,j) 
                \\
                &=
                \text{CNOT}(i,j)
                \left[
                \mathbb{I} \otimes R_z(a_p^{ij})
                \right]
                \text{CNOT}(i,j) 
                \\
            \end{aligned}
            \end{equation}
            where in (a) we used the fact that the exponential function of a diagonal operator is the diagonal operator of the exponential function. In (b) we used the matrix representation of the CNOT$(i,j)$ gate which can be directly computed considering that CNOT$(i,j) = \ket{0}_i \bra{0}_i \otimes \mathbb{I}_j + \ket{1}_i\bra{1}_i \otimes X_j$. 
    
            \item 
            When $i=j$, then $\sigma^i\otimes\sigma^i = \mathbb{I}_4$ as discussed in Equation (\ref{eq:tensor-product-relations}). From this follows that
            \begin{equation} \label{eq:exp-term-sigmaisigmai}
            \begin{aligned}
                \exp\left\{-\mathrm{i} \frac{a_p^{ii}}{2} \sigma^i_z \otimes \sigma^i_z \right\}
                &=
                \exp\left\{-\mathrm{i} \frac{a_p^{ii}}{2} \mathbb{I}_4 \right\}
                =
                e^{-\mathrm{i} \frac{a_p^{ii}}{2}} \mathbb{I}_4
                .
            \end{aligned}
            \end{equation}
            This term only adds a global phase and can thus be ignored. 
        \end{enumerate}
        Plugging the result of Equation (\ref{eq:exp-term}) in Equation (\ref{eq:U-f}) and remembering to discard the term with $i = j$ we finally obtain
        \begin{equation} \label{eq:Uf-layer}
            U_{\mathrm{f}}(\gamma_p) =
            \prod\limits_{\substack{i,j = 1 \\ i < j}}^n
            \text{CNOT}(i,j) \left[\mathbb{I} \otimes R_z(a_p^{ij})\right] \text{CNOT}(i,j)
            \prod\limits_{i= 1}^n 
            R^i_z(b_p^{i})
            .
        \end{equation}
        The general $p$-th layer operator of the QAOA circuit can finally be expressed as 
        \begin{equation}
            L_p(\beta_p, \gamma_p) = 
            \left(
                \prod\limits_{i = 1}^n
                R^i_x(\beta_p)
            \right)
            \left(
                \prod\limits_{\substack{i,j = 1 \\ i < j}}^n
                \text{CNOT}(i,j) \left[\mathbb{I} \otimes R_z(\gamma_p a^{ij})\right] \text{CNOT}(i,j)
                \prod\limits_{i= 1}^n
                R^i_z(\gamma_p b^{i})
            \right)
            .
        \end{equation}
        \subsection{The $R_{ZZ}$ Gate}
        The sequence of gates computed in Equation (\ref{eq:exp-term}) is known as $R_{ZZ}$ gate. Supposing that $i<j$, we define
        \begin{equation}\label{eq:Rzz}
            R^{i \to j}_{ZZ}(\theta) 
            :=
            \exp \left\{-i\frac{\theta}{2} \sigma^i_z \sigma^j_z\right\}
            ,
        \end{equation}
        where the superscript $i \to j$ reminds us that the CNOT controls qubit $i$ and acts on qubit $j$. From Equation (\ref{eq:exp-term}) we already know how to decompose $R^{i \to j}_{ZZ}(\theta)$, namely
        \begin{equation}
            R^{i \to j}_{ZZ}(\theta)
            =
            \text{CNOT}(i,j) 
            \left[\mathbb{I} \otimes R_z(\theta)\right] 
            \text{CNOT}(i,j)
            =
            \begin{tikzpicture}[baseline={([yshift=-.5ex]current bounding box.center)},vertex/.style={anchor=base,
            circle,fill=black!25,minimum size=18pt,inner sep=2pt}]
            \node[scale=0.7] {
                \begin{quantikz}
                & \ctrl{1} &                    & \ctrl{1} & \\
                & \targ{}  & \gate{R_Z(\theta)} & \targ{}  &
                \end{quantikz}
                };
            \end{tikzpicture}
        \end{equation}
        We now show that the $R_{ZZ}$ gate is invariant under the swap of the two qubits $i,j$. 
        
        \begin{proposition}
            The gate $R_{ZZ}$ is invariant under the swap of qubits. That is, supposing $i<j$, 
            \begin{equation}
            R^{i \to j}_{ZZ}(\theta)
            =
            \begin{tikzpicture}[baseline={([yshift=0ex]current bounding box.center)},vertex/.style={anchor=base,
            circle,fill=black!25,minimum size=18pt,inner sep=2pt}]
            \node[scale=0.7] {
                \begin{quantikz}
                \lstick{$i$} & \ctrl{1} &                    & \ctrl{1} & \\
                \lstick{$j$} & \targ{}  & \gate{R_Z(\theta)} & \targ{}  &
                \end{quantikz}
                };
            \end{tikzpicture}
            =
            \begin{tikzpicture}[baseline={([yshift=-1ex]current bounding box.center)},vertex/.style={anchor=base,
            circle,fill=black!25,minimum size=18pt,inner sep=2pt}]
            \node[scale=0.7] {
                \begin{quantikz}
                \lstick{$i$} & \targ{}   & \gate{R_Z(\theta)} & \targ{}   & \\
                \lstick{$j$}& \ctrl{-1} &                    & \ctrl{-1} &
                \end{quantikz}
                };
            \end{tikzpicture}
            = 
            R^{j \to i}_{ZZ}(\theta)
            .
            \end{equation}
        \end{proposition}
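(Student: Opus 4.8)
The plan is to reduce everything to the single observation that the generator $\sigma^i_z\sigma^j_z$ of the $R_{ZZ}$ gate is symmetric under exchanging the two qubit labels. First I would recall from Equations (\ref{eq:Rzz}) and (\ref{eq:exp-term}) that the left-hand circuit in the statement is by definition $R^{i\to j}_{ZZ}(\theta) = \exp\{-\mathrm{i}\frac{\theta}{2}\sigma^i_z\sigma^j_z\}$, and that the decomposition carried out in Equation (\ref{eq:exp-term}) never used the ordering $i<j$ — only $i\neq j$ — so it is valid for any pair of distinct wires with the labels in either order.

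Next I would invoke the commutativity of single-qubit Paulis acting on distinct wires, $[\sigma^i_z,\sigma^j_z]=0$ (the same fact already used in step $(a)$ of Equation (\ref{eq:exp-of-sigmax})), to conclude that $\sigma^i_z\sigma^j_z=\sigma^j_z\sigma^i_z$ as operators on the full Hilbert space, and hence
\[
R^{i\to j}_{ZZ}(\theta)
= \exp\!\Big\{-\mathrm{i}\tfrac{\theta}{2}\,\sigma^i_z\sigma^j_z\Big\}
= \exp\!\Big\{-\mathrm{i}\tfrac{\theta}{2}\,\sigma^j_z\sigma^i_z\Big\}
= R^{j\to i}_{ZZ}(\theta).
\]
I would then identify the rightmost circuit in the statement with $R^{j\to i}_{ZZ}(\theta)$: it is precisely the decomposition of Equation (\ref{eq:exp-term}) with the roles of $i$ and $j$ interchanged, i.e. a CNOT controlled by $j$ and targeting $i$, an $R_z(\theta)$ on the target wire $i$, and a second CNOT controlled by $j$. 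Chaining the three identities yields the claimed equality of the two circuits, both representing the operator $R^{i\to j}_{ZZ}(\theta)=R^{j\to i}_{ZZ}(\theta)$. (Equivalently, one could note that the second circuit is the SWAP-conjugate of the first, and that $\mathrm{SWAP}$ commutes with $\sigma_z\otimes\sigma_z$, but the generator argument is the most economical.)

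For completeness, and as a safeguard against sign or ordering slips when translating the diagrams into operator products, I would also verify the middle equality by brute force: multiplying out $\text{CNOT}(j,i)\,[R_z(\theta)\otimes\mathbb{I}]\,\text{CNOT}(j,i)$ as $4\times 4$ matrices and checking that it equals $\text{diag}\{e^{-\mathrm{i}\theta/2}, e^{\mathrm{i}\theta/2}, e^{\mathrm{i}\theta/2}, e^{-\mathrm{i}\theta/2}\}$, which is exactly the diagonal matrix obtained for the left circuit in Equation (\ref{eq:exp-term}). The only real obstacle here is bookkeeping — being consistent about the circuit-reading convention (rightmost gate acts first) and about which tensor factor corresponds to which wire — so that both diagrams are genuinely seen to collapse onto the same symmetric generator; once that is pinned down, the mathematical content is immediate.
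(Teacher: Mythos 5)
Your proposal is correct, but your primary argument is a genuinely different route from the paper's. The paper proves the proposition by pure matrix computation: it multiplies out $\text{CNOT}(i,j)\,[\mathbb{I}\otimes R_z(\theta)]\,\text{CNOT}(i,j)$ to obtain the diagonal matrix $\text{diag}\{e^{-\mathrm{i}\theta/2}, e^{\mathrm{i}\theta/2}, e^{\mathrm{i}\theta/2}, e^{-\mathrm{i}\theta/2}\}$ and then re-factors that same diagonal matrix as $\text{CNOT}(j,i)\,[R_z(\theta)\otimes\mathbb{I}]\,\text{CNOT}(j,i)$, using the explicit computational-basis representations of both CNOT orientations. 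You instead argue at the level of the generator: since $\sigma^i_z\sigma^j_z$ denotes the same tensor-product operator regardless of which index is written first, the exponential is manifestly symmetric under $i\leftrightarrow j$, and the decomposition of Equation (\ref{eq:exp-term}) applies with the roles of control and target exchanged. This is more conceptual and generalizes immediately (e.g.\ to the $R_{Z^k}$ gates of the PUBO section, where the choice of target qubit is likewise arbitrary), whereas the paper's computation buys an explicit, convention-proof verification that the two circuit diagrams really denote the same $4\times 4$ matrix. The one place your main argument is thinner than the paper's is exactly the bookkeeping point you flag yourself: Equation (\ref{eq:exp-term}) is written in a fixed tensor-factor ordering (control wire first), so ``applying it with $i$ and $j$ swapped'' silently changes which factor of $\mathbb{C}^2\otimes\mathbb{C}^2$ corresponds to which drawn wire, and translating back to the original wire layout is precisely where a sign or ordering slip could hide. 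Your proposed brute-force check of the middle equality closes that gap and is, in substance, the paper's entire proof; so your write-up contains the paper's argument as its safety net while leading with a cleaner reason why the result must hold.
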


        \begin{proof}
            A direct calculation shows that
        \begin{equation}\label{eq:Rzz-invariance}
            \begin{aligned}
                R^{i \to j}_{ZZ}(\theta) 
                &=
                \text{CNOT}(i,j)
                    \left[
                    \mathbb{I} \otimes R_z(\theta)
                    \right]
                \text{CNOT}(i,j)
                \\
                &=
                \begin{pmatrix}
                    1 & 0 & 0 & 0 \\
                    0 & 1 & 0 & 0 \\
                    0 & 0 & 0 & 1 \\
                    0 & 0 & 1 & 0 \\
                \end{pmatrix} 
                \begin{pmatrix}
                    e^{-\mathrm{i} \frac{\theta}{2}} & 0 & 0 & 0 \\
                    0 & e^{\mathrm{i} \frac{\theta}{2}} & 0 & 0 \\
                    0 & 0 & e^{-\mathrm{i} \frac{\theta}{2} } & 0 \\
                    0 & 0 & 0 & e^{\mathrm{i} \frac{\theta}{2}} \\
                \end{pmatrix}
                \begin{pmatrix}
                    1 & 0 & 0 & 0 \\
                    0 & 1 & 0 & 0 \\
                    0 & 0 & 0 & 1 \\
                    0 & 0 & 1 & 0 \\
                \end{pmatrix} 
                \\
                &=
                \text{diag}\{
                    e^{-\mathrm{i} \theta/2}, 
                    e^{\mathrm{i} \theta/2}, 
                    e^{\mathrm{i} \theta/2}, 
                    e^{-\mathrm{i} \theta/2}
                \}   
                \\
                &=
                \begin{pmatrix} 
                    1 & 0 & 0 & 0 \\ 
                    0 & 0 & 0 & 1 \\ 
                    0 & 0 & 1 & 0 \\ 
                    0 & 1 & 0 & 0 
                \end{pmatrix}
                \begin{pmatrix} 
                    e^{-\mathrm{i} \frac{\theta}{2}} & 0 & 0 & 0 \\ 
                    0 & e^{-\mathrm{i} \frac{\theta}{2}} & 0 & 0 \\ 
                    0 & 0 & e^{\mathrm{i} \frac{\theta}{2}} & 0 \\ 
                    0 & 0 & 0 & e^{\mathrm{i} \frac{\theta}{2}} 
                \end{pmatrix}
                \begin{pmatrix} 
                    1 & 0 & 0 & 0 \\ 
                    0 & 0 & 0 & 1 \\ 
                    0 & 0 & 1 & 0 \\ 
                    0 & 1 & 0 & 0 
                \end{pmatrix}
                \\
                &=
                \text{CNOT}(j,i)
                    \left[
                     R_z(\theta) \otimes \mathbb{I}
                    \right]
                \text{CNOT}(j,i)
                \\
                &= R^{j \to i}_{ZZ}(\theta) 
                ,
            \end{aligned} 
            \end{equation}
            where we used the fact that
            \begin{equation}
            \begin{aligned}
                \text{CNOT}(i,j) 
                &=
                \ket{0}_i \bra{0}_i \otimes \mathbb{I}_j + \ket{1}_i\bra{1}_i \otimes X_j 
                \\
                \text{CNOT}(j,i) 
                &=
                \mathbb{I}_i \otimes \ket{0}_j \bra{0}_j + X_i \otimes
                \ket{1}_j\bra{1}_j
                ,
            \end{aligned}            
            \end{equation}
            which can be directly computed in the computational basis to obtain the matrix representations used in Equation (\ref{eq:Rzz-invariance}).
        \end{proof}

        This proposition shows that the two gates $R^{i \to j}_{ZZ}(\theta)$ and $R^{j \to i}_{ZZ}(\theta)$ can be interchanged when implementing the QAOA architecture, and, for this reason, we will drop the superscript $i \to j$ and refer more generally to the $R_{ZZ}(\theta)$ gate.

        \subsection{Scaling of the Rotation Parameters}

        We perform another fundamental manipulation by remembering Property \ref{property:v} of the QUBO problems shown in Section \ref{sec:qubo}, which states that if $s^*$ is an optimal solution for $\mathcal{H}_\text{f}(s)/k$, with $k > 0$, then it is also an optimal solution for $\mathcal{H}_\text{f}(s)$. This allows us to rescale the final Hamiltonian, which is equivalent to rescaling the coefficients $a^{ij}$ and $b^i$. The chosen rescaling factor $k > 0$ is the coefficient of the Hamiltonian with the highest absolute value, namely 
        \begin{equation}\label{eq:k}
            k = \max_{\substack{i,j,h = 1,\dots,n}} \{\, |a^{ij}|,\, |b^h| \,\}.
        \end{equation}
        Mapping the final Hamiltonian $\mathcal{H}_\text{f}$ with parameters $a^{ij}, b^i$ to $\mathcal{H}_\text{f}/k$ with parameters $a^{ij}/k, b^i/k$ does not change the optimal solution $s^*$ but dramatically facilitates the optimization of the QAOA algorithm. By performing this operation the rotation parameters are constrained on a smaller scale where the redundancies due to the periodicity in $2 \pi$ are removed, which makes the classical optimization of the cost function smoother. In Figure \ref{fig:qaoa-energylandscape} the functions  $\mathcal{H}_{\text{f}}(\beta, \gamma)$ and $\mathcal{H}_{\text{f}}(\beta, \gamma)/k$ are compared, respectively in 2B and 2C, for a particular problem instance. This fact will be examined in greater detail later in this work. 

        Taking into account the scaling of the final Hamiltonian, the general $p$-th layer of the QAOA can be finally written as
        \begin{equation}\label{eq:general-architecutre-of-layer}
            L_p(\beta_p, \gamma_p) = 
            \left(
                \prod\limits_{i = 1}^n
                R^i_x(\beta_p)
            \right)
            \left(
                \prod\limits_{\substack{i,j = 1 \\ i < j}}^n
                \text{CNOT}(i,j) 
                \left[
                    \mathbb{I} \otimes R_z\left(\frac{\gamma_p a^{ij}}{k}\right)
                \right] \text{CNOT}(i,j)
                \prod\limits_{i= 1}^n
                R^i_z\left(\frac{\gamma_p b^i}{k}\right)
            \right)
            .
        \end{equation}

    \subsection{Example Problems}
    In this section, we examine two representative combinatorial optimization problems, namely the Max Cut and the Knapsack problem, to illustrate how a general QUBO problem can be expressed within the QAOA framework and subsequently implemented as a parameterized quantum circuit.
    
    \subsection*{Example 1: Max Cut Problem}
        The Max Cut problem is often used as a benchmark example for the QAOA. We use the theoretical framework developed so far to obtain the QAOA circuit that finds its optimal solution. 
        \\
        
        \textbf{Problem Statement}: Let $\mathcal{G}(V,E)$ be an undirected graph, where $V$ is the set of vertices and $E$ the set of edges, labeled by a pair $(i,j)$ that represents the edge between node $i$ and node $j$. The Max Cut problem consists in labelling the nodes of the graph in two groups $\{0,1\}$ in such a way that encircling the nodes of the same group will lead to the maximum number of cuts through the edges. 
        \\
        
        According to the problem statement the cost function has to encode the following rule: if node $i$ and node $j$ have the same label they belong to the same group, and no cut will happen through them. In this case, we want the cost function to yield a $0$, otherwise a $1$. We then compute the quantity $x_i + x_j - 2 x_i x_j$, which is $0$ if $x_i = x_j$ and is $1$ otherwise. 

        Summing on every edge $(i,j) \in E$ and inverting the sign considering that the cost function will be minimized, we obtain
        \begin{equation}
            \mathcal{C}(x) = \sum\limits_{(i,j) \in E} -(x_i + x_j - 2x_i x_j)
            .
        \end{equation}
        The change of variables from $x_i \in \{0, 1\}$ to $s_i \in \{-1, 1\}$ yields 
        \begin{equation}\label{eq:maxcut-ham}
            \mathcal{H}_\text{f}(s) 
            =
            \mathcal{C}\left( \frac{s+1}{2} \right)
            =
            \sum\limits_{(i,j) \in E} \frac{1}{2} (s_i s_j-1) 
            =
            \sum\limits_{(i,j) \in E} \frac{1}{2} s_i s_j 
            -
            \bcancel{\sum\limits_{(i,j) \in E} \frac{1}{2}}
            ,
        \end{equation}
        which is the usual Hamiltonian for the Max Cut problem. Comparing this Hamiltonian with the one of Equation (\ref{eq:final-qubo-ham}) we recognize the coefficients
        \begin{equation}
            a^{ij} = 
            \begin{cases}
                1/2 \;\;\;\; \text{if} \;\; (i,j) \in E \\
                0 \;\;\;\;\;\;\; \text{otherwise}
            \end{cases}
            \;\;\; \text{and} \;\;\;\;\;\;\;
            b^i = 0
            \;\;\; \forall i
            .
        \end{equation}
        The scaling coefficient is $k=\frac{1}{2}$, since
        \begin{equation}
            k = \max_{\substack{i,j,h = 1,\dots,n}} \{\, |a^{ij}|,\, |b^h| \,\} = \frac{1}{2}.
        \end{equation}
        Here all the coefficients have the same values $1/2$, meaning that rescaling by $k=1/2$ will not be particularly helpful to make the optimization smoother. Indeed this can be confirmed in the plots 1B and 1C of Figure \ref{fig:qaoa-energylandscape}.

        The QAOA architecture for the Max Cut problem is then given by the generic $p-$th layer operator
        \begin{equation}\label{eq:operator-maxcut}
            L_p(\beta_p, \gamma_p) = 
            \left(
                \prod\limits_{i = 1}^n
                R^i_x(\beta_p)
            \right)
            \left(
                \prod\limits_{(i,j) \in E}^n
                \text{CNOT}(i,j) \left[\mathbb{I} \otimes R_z(\gamma_p)\right] \text{CNOT}(i,j)
            \right)
            .
        \end{equation}
        This is a particularly simple example where no linear term in the spins is present in the Hamiltonian, leading to $b^i = 0 \;\; \forall i$, and where the $a^{ij}$ are either $1/2$ or $0$. Rescaling the Hamiltonian by $k = 1/2$ thus removes every coefficient in the rotation angle of the gate $R_z$, as it can be seen in Equation (\ref{eq:operator-maxcut}). 
        
        Figure \ref{fig:qaoa-circuit} shows the implementation of the QAOA circuit with only one layer for a specific Max Cut problem given by the graph of vertices $V = \{ 0,1,2,3 \}$ and edges $E = \{(0,1), (1,2), (2,3), (3,0)\}$.
        
        \begin{figure}[ht]
        \begin{center}    
        \begin{tikzpicture}
        \node[scale=0.7] {
        \begin{quantikz}[column sep=9pt, row sep={0.8cm,between origins}]
        \lstick{$\ket{0}$} & \gate{H}\gategroup[4,steps=1,style={rounded corners, inner xsep=2pt, inner ysep=20pt}, background, label style={label position=below, anchor=north, yshift=-0.2cm}]{{\sc $\psi_{0, \mathrm{i}}$}} & & \gategroup[4,steps=16,style={rounded corners, inner xsep=1pt, inner ysep=20pt},background,label style={label position=below,anchor=north,yshift=-0.2cm}]{{\sc $L_p$}} &  \ctrl{1}\gategroup[4,steps=12,style={dashed,rounded corners, fill=blue!20, inner xsep=+2.5pt}, background]{{\sc $U_{\mathrm{f}}$}} &            & \ctrl{1} &         &                    &       &        &                  &       & \targ{}   & \gate{R_z\left(\gamma_p\right)} & \targ{}   & & \gate{R_x(\beta_p)}\gategroup[4,steps=1,style={dashed,rounded corners, fill=blue!20, inner xsep=+3pt}, background]{{\sc $U_{\mathrm{i}}$}} & \qw & \meter{} \\
    
        \lstick{$\ket{0}$}  & \gate{H} & &   & \targ{}  & \gate{R_z\left(\gamma_p\right)} & \targ{}  & \ctrl{1}  &    & \ctrl{1} &    &    &   &    &   &   & & \gate{R_x\left(\beta_p\right)} & \qw & \meter{} \\
    
        \lstick{$\ket{0}$}   & \gate{H} &  & &   &  &  & \targ{} & \gate{R_z\left(\gamma_p\right)} & \targ{}  & \ctrl{1}  &   & \ctrl{1} &  &  &  & & \gate{R_x(\beta_p)} & \qw &\meter{} \\
    
        \lstick{$\ket{0}$} & \gate{H} &  & & & &  &  &   &    & \targ{}   & \gate{R_z\left(\gamma_p\right)} & \targ{}  & \ctrl{-3} &  & \ctrl{-3} & & \gate{R_x(\beta_p)} & \qw & \meter{}
        \end{quantikz}
        };
        \end{tikzpicture}
        \caption{Example of one specific implementation of the QAOA for the Max Cut problem with a simple square graph with four vertices $V = \{ 0,1,2,3 \}$ connected by the edges $E = \{(0,1), (1,2), (2,3), (3,0)\}$. The two layers $U_{\mathrm{i}}$ and $U_{\mathrm{f}}$ together represent the main QAOA layer $L_p$, which here has been repeated only once, thus $p=1$. The layer $L_p$ can be repeated multiple times to achieve a better approximation of the lowest energy eigenstate.}
        \label{fig:qaoa-circuit}
        \end{center}
        \vspace{-15pt}
        \end{figure}
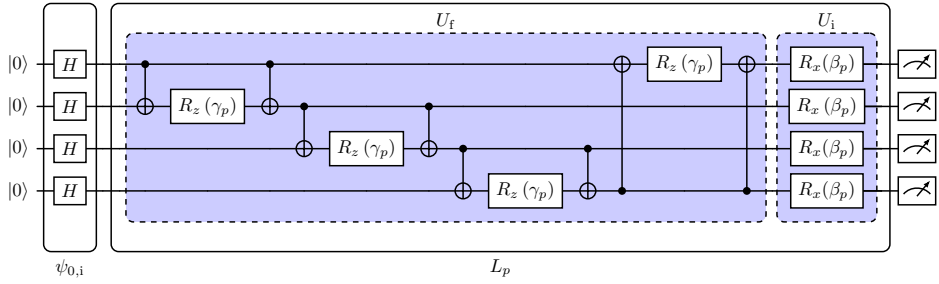

    \subsection*{Example 2: Knapsack Problem}
        The Knapsack problem is a more interesting problem for the QAOA since it also presents the linear term and a non-trivial penalty. 
        \\
        
        \textbf{Problem Statement}: A knapsack must be filled with $p$ objects. Each object $i$ has a value $v_i$ and a weight $w_i$, where both $v_i, w_i \in \mathbb{R}_+$. We want to maximize the value of the knapsack given the constraint that the weight cannot exceed the maximum weight $W \in \mathbb{R_+}$. If the $i-$th object is selected then we set $x_i=1$, otherwise $x_i=0$. The cost function and the constraint of the problem are
        
        \begin{equation}
        \begin{aligned}    
            \text{maximize}\;\;\;\;
            &\sum\limits_{i=1}^{n} v_i x_i
            \\
            \text{subject to}\;\;\;\; 
            &\sum\limits_{i=1}^{n}
            w_ix_i
            \leq
            W
            .
        \end{aligned}
        \end{equation} 
        \\

        Since we need a minimization problem, we invert the sign of the objective function. We rewrite the constraint by combining two penalties as discussed in Example (\ref{ex:unbalanced-penalty}), namely $P_1(x) = \sum\limits_{i=1}^n w_i x_i - W$ and $P_2(x) = P_1(x)^2$, and we define their corresponding relative weights $p_1$ and $p_2$. The final cost function that absorbs the penalties is
        \begin{equation}\label{eq:cost-function-of-knapsack}
        \begin{aligned}
            \mathcal{C}(x) 
            &=
            -\sum\limits_{i=1}^{n} v_i x_i
            +
            p_1
            \left(
                \sum\limits_{i=1}^n w_i x_i - W
            \right)
            +
            p_2
            \left(
               \sum\limits_{i=1}^n w_i x_i - W
            \right)^2
            ,
        \end{aligned}
        \end{equation}
    which we expand, obtaining
    \begin{equation}
    \begin{aligned}
        \mathcal{C}(x) 
        &=
        -\sum\limits_{i=1}^{n} v_i x_i
        +
        p_1 \sum\limits_{i=1}^n w_i x_i - \bcancel{p_1 W}
        +
        p_2 \sum\limits_{i,j=1}^n w_i w_j x_i x_j
        -
        2 p_2 W \sum\limits_{i=1}^n w_i x_i
        +
        \bcancel{(p_2 W)^2} 
        \\
        &=
        \sum\limits_{i,j=1}^n p_2 w_i w_j x_i x_j
        +
        \sum\limits_{i=1}^n (-v_i + p_1 w_i - 2 p_2 W w_i) x_i
        .
    \end{aligned}
    \end{equation}
    We recognize the coefficients
    \begin{equation}\label{eq:KNP-Q-c}
    \begin{aligned}
        Q_{ij} 
        = 
        p_2 w_i w_j
        \;\;\;\;
        \text{and}
        \;\;\;\;
        c_i = -v_i +p_1 w_i - 2 p_2 W w_i
        ,
    \end{aligned}
    \end{equation}
    from which follow 
    \begin{equation}\label{eq:KNP-a-b}
    \begin{aligned}
        a^{ij} = 
        \begin{cases}
            \frac{Q_{ij}}{2}     &\text{if $i \neq j$}\\
            \frac{Q_{ij}}{4}     &\text{if $i = j$}\\
        \end{cases}     
        \;\;\;\;
        \text{and}
        \;\;\;\;
        b^i
        =
        \frac{1}{2} \left( c_i + \sum_{j = 1}^n Q_{ij} \right)
        .
    \end{aligned}
    \end{equation}
    Equation (\ref{eq:KNP-a-b}) could be further expanded writing the explicit form of $Q_{ij}$ and $c_i$ but, since this does not give any useful insight into the problem, we leave it as is. 
    
    We can now rewrite the Hamiltonian as
    \begin{equation}
        \mathcal{H}_\text{f} 
        =
        \sum_{\substack{i,j = 1 \\ i \leq j}}^{n} 
        a^{ij} \sigma^i_z \sigma^j_z 
        +
        \sum_{i = 1}^{n} 
        b^i
        \sigma^i_z
        .
    \end{equation}
    The scaling coefficient $k$, namely
    \begin{equation}
        k = \max_{\substack{i,j,h = 1,\dots,n}} \{\, |a^{ij}|,\, |b^h| \,\}
        ,
    \end{equation}
    in this case is nontrivial since the Hamiltonian coefficients $a^{ij}$ and $b^h$ are not all the same, as in the Max Cut problem. Rescaling by the factor $k$ thus greatly increases the smoothness of the optimization process, as can be seen in the plots 2B and 2C of Figure \ref{fig:qaoa-energylandscape}.
    
    In conclusion, the general $p-$th layer operator for the Knapsack problem is 
    \begin{equation}\label{eq:operator-knapsack}
            L_p(\beta_p, \gamma_p) = 
            \left(
                \prod\limits_{i = 1}^{n}
                R^i_x(\beta_p)
            \right)
            \left(
                \prod\limits_{\substack{i,j = 1 \\ i < j}}^{n} 
                \text{CNOT}(i,j) 
                \left[
                    \mathbb{I} \otimes R_z\left(\frac{\gamma_p a^{ij}}{k}\right)
                \right] \text{CNOT}(i,j)
                \prod\limits_{i= 1}^{n} 
                R^i_z\left(\frac{\gamma_p b^i}{k}\right)
            \right)
            .
    \end{equation}
    In Figure \ref{fig:KSP-circuit} we show the QAOA circuit for a generic implementation of the Knapsack problem with $n=4$ variables which displays the gates pattern.

    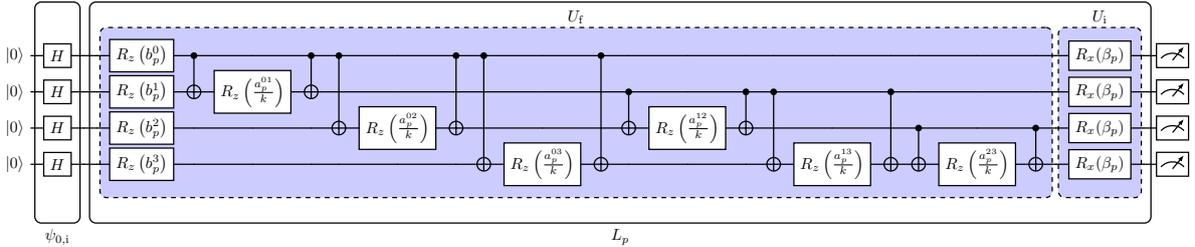
\begin{figure}[ht]
        \begin{center}       
        \begin{tikzpicture}
        \node[scale=0.6] {
        \begin{quantikz}[column sep=8pt, row sep={0.8cm,between origins}]
        \lstick{$\ket{0}$}   & \gate{H}\gategroup[4,steps=1,style={rounded corners, inner xsep=2pt, inner ysep=20pt}, background, label style={label position=below, anchor=north, yshift=-0.2cm}]{{\sc $\psi_{0, \mathrm{i}}$}} && \gategroup[4,steps=23,style={rounded corners, inner xsep=1pt, inner ysep=20pt},background,label style={label position=below,anchor=north,yshift=-0.2cm}]{{\sc $L_p$}}& \gate{R_z\left(b^0_p\right)}\gategroup[4,steps=19,style={dashed,rounded corners, fill=blue!20, inner xsep=+2.5pt}, background]{{\sc $U_{\mathrm{f}}$}} & \ctrl{1}  &                                           & \ctrl{1}  & \ctrl{2}  &                                           & \ctrl{2}  & \ctrl{3}  &                                           & \ctrl{3}  &           &                                           &           &           &                                           &           &           &                                           &          & &\gate{R_x(\beta_p)}\gategroup[4,steps=1,style={dashed,rounded corners, fill=blue!20, inner xsep=+3pt}, background]{{\sc $U_{\mathrm{i}}$}}  & & \meter{} \\
       
        \lstick{$\ket{0}$}   & \gate{H} &&& \gate{R_z\left(b^1_p\right)} & \targ{}   & \gate{R_z\left(\frac{a^{01}_p}{k}\right)} & \targ{}   &           &                                           &           &           &                                           &           & \ctrl{1}  &                                           & \ctrl{1}  & \ctrl{2}  &                                           & \ctrl{2}  &           &                                           &          &  &\gate{R_x(\beta_p)} & & \meter{} \\
     
        \lstick{$\ket{0}$}   & \gate{H} &&& \gate{R_z\left(b^2_p\right)} &           &                                           &           & \targ{}   & \gate{R_z\left(\frac{a^{02}_p}{k}\right)} & \targ{}   &           &                                           &           & \targ{}   & \gate{R_z\left(\frac{a^{12}_p}{k}\right)} & \targ{}   &           &                                           &           & \ctrl{1}  &                                           &\ctrl{1}  &  &\gate{R_x(\beta_p)} & & \meter{} \\
      
        \lstick{$\ket{0}$}   & \gate{H} &&& \gate{R_z\left(b^3_p\right)} &           &                                           &           &           &                                           &           & \targ{}   & \gate{R_z\left(\frac{a^{03}_p}{k}\right)} & \targ{}   &           &                                           &           & \targ{}   & \gate{R_z\left(\frac{a^{13}_p}{k}\right)} & \targ{}   & \targ{}   & \gate{R_z\left(\frac{a^{23}_p}{k}\right)} & \targ{}  & &\gate{R_x(\beta_p)}  & &  \meter{}
        \end{quantikz}
        };
        \end{tikzpicture}
        \vspace{-10pt}
        \caption{Example of one generic implementation of the QAOA circuit with one layer for the Knapsack problem with $n=4$ variables. The coefficients are $a_p^{ij} = \gamma_p a^{ij}$ and $b^i_p = \gamma_p b^i$, with $a^{ij}, b^i$ as defined in Equation (\ref{eq:KNP-a-b}).}
        \label{fig:KSP-circuit}
        \end{center}
        \vspace{-10pt}
    \end{figure}

\section{Energy Landscape and Symmetries}   

    In this chapter further details about the algorithm are analyzed. We will study the loss function landscape, also known as the energy landscape, the symmetries and periodicity of the QAOA architecture and a consequent way to restrict the parameter space.

    \subsection{Energy Landscape}

    Let $\psi(\vec{\beta},\vec{\gamma}) = L_p(\beta_p, \gamma_p) \dots L_1(\beta_1, \gamma_1) \psi_{0, \text{i}}$ be the output state of the QAOA circuit with target Hamiltonian $\mathcal{H}_\text{f}$, with $\vec{\beta} = (\beta_1, \dots, \beta_p)$ and $\vec{\gamma} = (\gamma_1, \dots, \gamma_p)$.

    \begin{definition}[Energy landscape]
        We call \textit{energy landscape} of the QAOA the average value of the target Hamiltonian as a function of the parameters. Namely
    \end{definition}
    \begin{equation}\label{eq:energy-landscape}
        \braket{\mathcal{H}_\text{f}}(\vec{\beta}, \vec{\gamma}) 
        =
        \braket{\psi(\vec{\beta}, \vec{\gamma})|
        \mathcal{H}_\text{f}
        |\psi(\vec{\beta}, \vec{\gamma})}
        .
    \end{equation}
    In the specific case of the QAOA architecture the energy landscape is a function of $2p$ variables, with $p$ the number of layers, but the energy landscape can be defined in general for every VQC with parameters $\theta$ as $\braket{\mathcal{H}}(\theta) = \braket{\psi(\theta)|\mathcal{H}|\psi(\theta)}$. 

    Due to the presence of rotation gates, the QAOA energy landscape is expected to be highly sinusoidal, exhibiting multiple crests and troughs, as shown in Figure \ref{fig:qaoa-energylandscape}. Here we present the energy landscape after one layer of the QAOA architecture for the Max Cut (row 1) and the Knapsack (row 2) problems given one specific choice of the coefficients. For each problem we show the three-dimensional graph of the energy landscape (column A), its two-dimensional projection (column B) and, for comparison, the same two-dimensional graph of the energy landscape when the rotation parameters have not been rescaled by the factor $k$ presented in Equation (\ref{eq:k}) (column C). The graphs have been restricted to the interval $\beta, \gamma  \in [-\pi, \pi]$ due to the periodicity of the rotation parameters.
        
    \begin{figure*}[ht]
         \centering
         \includegraphics[width = 460pt]{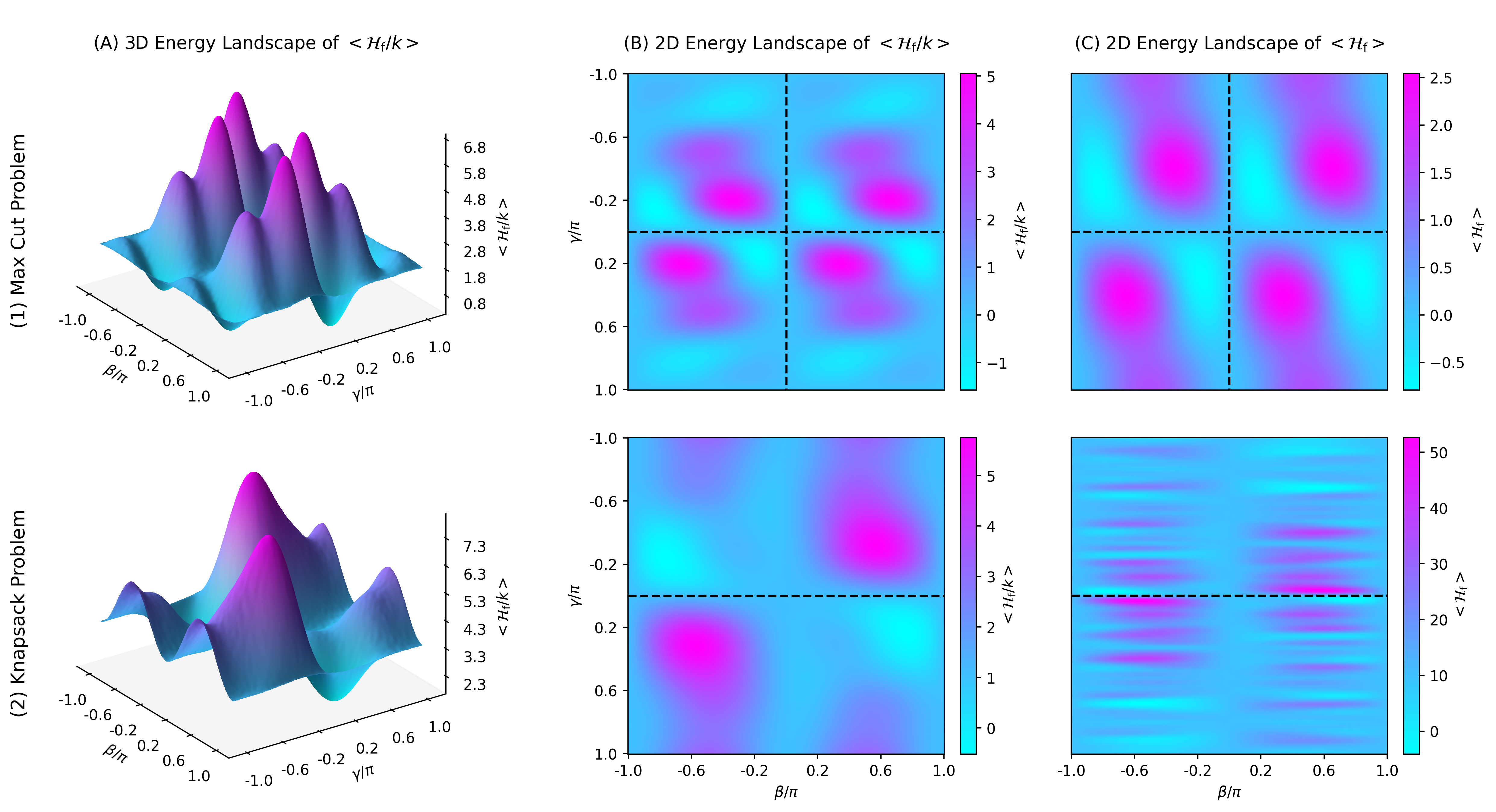}
         \caption{Plots of the energy landscape of one QAOA layer. Row 1 and row 2: plots of the energy landscape after one QAOA layer for, respectively, the Max Cut and Knapsack problem. Column A: three dimensional plot of the energy landscape of the scaled Hamiltonian $\mathcal{H}_\text{f}/k$. Column B: two dimensional plot of the energy landscape of the scaled Hamiltonian $\mathcal{H}_\text{f}/k$. Column C: two dimensional plot of the energy landscape of the Hamiltonian $\mathcal{H}_\text{f}$.}
         \label{fig:qaoa-energylandscape}
    \end{figure*}

    From Figure \ref{fig:qaoa-energylandscape} one can also see how the energy landscape of both problems presents symmetries, which are highlighted by the dotted lines in the two-dimensional graphs: the Max Cut problem energy landscape is symmetric with respect to the origin of the axis, namely the point $(\beta, \gamma) = (0,0)$, and it is periodic in the parameter $\beta$ of period $\pi$. The energy landscape of the Knapsack problem is instead only symmetric with respecto to the origin. It can be proven that this properties hold in general provided some assumptions on the form of the Hamiltonian, and this is what we will investigate in the next section.

    \subsection{Symmetry and Periodicity}
    The fact that the parameter space is symmetric with respect to the origin of the axis can be proven to always be true.

    \begin{proposition}[QAOA Symmetry for QUBO problems]\label{prop:simmetry}
        The energy landscape of the QAOA that implements a QUBO problem is symmetric with respect to the origin. Namely 
        \begin{equation}\label{eq:symmetry}
            \braket{\mathcal{H}_\mathrm{f}}(-\vec{\beta}, -\vec{\gamma}) = \braket{\mathcal{H}_\mathrm{f}}( \vec{\beta}, \vec{\gamma})
        .
        \end{equation}
    \end{proposition}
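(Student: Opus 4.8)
The plan is to exhibit an anti-unitary (or unitary) symmetry that implements the sign flip $(\vec\beta,\vec\gamma)\to(-\vec\beta,-\vec\gamma)$ on the circuit while leaving the expectation value of $\mathcal H_\mathrm{f}$ invariant. The natural candidate is complex conjugation $K$ in the computational basis. First I would observe that $\mathcal H_\mathrm{f}$ is a real diagonal matrix (it is a sum of $\sigma_z$'s and $\sigma_z\sigma_z$'s, each diagonal with real entries), so $K\mathcal H_\mathrm{f}K = \mathcal H_\mathrm{f}^* = \mathcal H_\mathrm{f}$, and similarly $\mathcal H_\mathrm{i} = -\sum_i\sigma_x^i$ is real, so $K\mathcal H_\mathrm{i}K = \mathcal H_\mathrm{i}$. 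Next I would record how $K$ acts on the layer exponentials: since $K$ is anti-linear, $K\,e^{-\mathrm i\theta A}\,K = e^{+\mathrm i\theta\, KAK} = e^{\mathrm i\theta A}$ whenever $A$ is real. Applying this to $U_\mathrm{i}(\beta_k) = e^{-\mathrm i(\beta_k/2)\mathcal H_\mathrm{i}}$ and $U_\mathrm{f}(\gamma_k) = e^{-\mathrm i(\gamma_k/2)\mathcal H_\mathrm{f}}$ gives $K\,L_k(\beta_k,\gamma_k)\,K = L_k(-\beta_k,-\gamma_k)$. Finally I would check the initial state: $\psi_{0,\mathrm i} = \ket{+}^{\otimes n}$ has all real, nonnegative amplitudes, so $K\psi_{0,\mathrm i} = \psi_{0,\mathrm i}$.

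Putting these together, with $U(\vec\beta,\vec\gamma) = L_p\cdots L_1$ and inserting $K^2 = \mathbb I$ between consecutive layers,
\begin{equation}
K\,\psi(\vec\beta,\vec\gamma) = K\,L_p\cdots L_1\,\psi_{0,\mathrm i}
= \bigl(KL_pK\bigr)\cdots\bigl(KL_1K\bigr)\,K\psi_{0,\mathrm i}
= L_p(-\beta_p,-\gamma_p)\cdots L_1(-\beta_1,-\gamma_1)\,\psi_{0,\mathrm i}
= \psi(-\vec\beta,-\vec\gamma).
\end{equation}
Then, using that $K$ is an anti-unitary conjugation ($\braket{K\phi|K\chi} = \braket{\chi|\phi} = \overline{\braket{\phi|\chi}}$) together with $K\mathcal H_\mathrm{f}K = \mathcal H_\mathrm{f}$,
\begin{equation}
\braket{\mathcal H_\mathrm{f}}(-\vec\beta,-\vec\gamma)
= \braket{K\psi(\vec\beta,\vec\gamma)|\mathcal H_\mathrm{f}|K\psi(\vec\beta,\vec\gamma)}
= \braket{K\psi|\,K\mathcal H_\mathrm{f}\,\psi}
= \overline{\braket{\psi|\mathcal H_\mathrm{f}|\psi}}
= \braket{\mathcal H_\mathrm{f}}(\vec\beta,\vec\gamma),
\end{equation}
where the last equality holds because $\braket{\mathcal H_\mathrm{f}}(\vec\beta,\vec\gamma)$ is a real number (expectation of a Hermitian operator). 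This establishes \eqref{eq:symmetry}.

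The only genuinely delicate point is handling the anti-linearity of $K$ carefully: one must track that conjugating $e^{-\mathrm i\theta A}$ flips the sign of the exponent (this is exactly where the $(-\vec\beta,-\vec\gamma)$ comes from), and that the inner-product identity for an anti-unitary introduces a complex conjugate which is then harmless because the quantity is real. An alternative, fully equivalent route avoids anti-linear operators entirely: write the energy landscape as an explicit real trigonometric expression in $\vec\beta,\vec\gamma$ — each gate contributes cosines and sines of the angles, and the reality of $\mathcal H_\mathrm{f}$, $\mathcal H_\mathrm{i}$ and $\psi_{0,\mathrm i}$ forces every surviving term to be a product of an even number of sines (hence invariant under a global sign flip of all angles) — but this is more computational, so I would present the conjugation argument as the main proof. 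If one prefers to stay with unitaries, note that on this real-gate circuit $K$ acts on states simply as entrywise conjugation, and the statement $\overline{\psi(\vec\beta,\vec\gamma)} = \psi(-\vec\beta,-\vec\gamma)$ can be verified directly from the matrix entries of $R_x$, $R_z$ and CNOT, all of which are real except for the diagonal phase factors $e^{\mp\mathrm i(\cdot)}$ that conjugate to $e^{\pm\mathrm i(\cdot)}$.
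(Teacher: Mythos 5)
Your proposal is correct and follows essentially the same route as the paper: both proofs rest on the facts that $\mathcal{H}_\mathrm{i}$, $\mathcal{H}_\mathrm{f}$ and $\ket{+}^{\otimes n}$ are real, that complex conjugation sends $L(\beta,\gamma)$ to $L(-\beta,-\gamma)$, and that the expectation value is real; you merely package the componentwise conjugation identities of the paper into a single anti-unitary operator $K$. The argument is sound as written.
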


    \begin{proof}
        We show this using only one layer $L(\beta, \gamma)$, since the generalization to multiple layers is straightforward. We  first show two useful properties.

        \begin{enumerate}[label=(\roman*)]
            \item If $\psi$ is real valued, meaning that $\psi_i = \psi^*_i$ $\forall i$, then
                \begin{equation}
                    \braket{\psi,A\psi}^* 
                    =
                    \left[ \sum\limits_i \psi_i (A \psi)^*_i \right]^* 
                    =
                    \sum\limits_i \psi_i (A^* \psi)^*_i
                    =
                    \braket{\psi,A^*\psi} 
                    .
                \end{equation}
            \item  For the layer operator $L(\beta, \gamma)$, and analogously for $L^\dagger(\beta, \gamma)$, it holds that
                \begin{equation}
                    L^*(-\beta,-\gamma) 
                    =
                    \left(
                        e^{\mathrm{i}(\beta/2)H_\text{i}}
                        e^{\mathrm{i}(\gamma/2)H_\text{f}}
                    \right)^*
                    =
                    e^{-\mathrm{i}(\beta/2)H_\text{i}}e^{-\mathrm{i}(\gamma/2)H_\text{f}}
                    =
                    L(\beta, \gamma)
                    .
                \end{equation}
        \end{enumerate}

        We now prove the main claim of Equation (\ref{eq:symmetry}):
        \begin{equation}
        \begin{aligned}
            \braket{\mathcal{H}_\text{f}}(-\beta, -\gamma) 
            &=
            \braket{\psi(-\beta, -\gamma) | \mathcal{H}_\text{f} | \psi(-\beta, -\gamma)}
            \\
            &=
            \braket{
                \psi_{0, \text{i}} |
                L^\dagger(-\beta, -\gamma)  \mathcal{H}_\text{f}  L(-\beta, -\gamma)
                | \psi_{0, \text{i}}
            }
            \\
            &\stackrel{(a)}{=}
            \braket{
                \psi_{0, \text{i}} |
                L^\dagger(-\beta, -\gamma)  \mathcal{H}_\text{f}  L(-\beta, -\gamma)
                | \psi_{0, \text{i}}
            }^*
            \\
            &\stackrel{(b)}{=}
            \braket{
                \psi_{0, \text{i}} |
                \left[L^\dagger(-\beta, -\gamma)\right]^*  \mathcal{H}_\text{f}^*  L^*(-\beta, -\gamma)
                | \psi_{0, \text{i}}
            }
            \\
            &\stackrel{(c)}{=}
            \braket{
                \psi_{0, \text{i}} |
                L^\dagger(\beta, \gamma)  
                \mathcal{H}_\text{f}  L(\beta, \gamma)
                | \psi_{0, \text{i}}
            }
            \\
            &=
            \braket{\mathcal{H}_\text{f}}(\beta, \gamma) 
            ,
        \end{aligned}
        \end{equation}
        where in $(a)$ we used the fact that $\braket{\mathcal{H}_\text{f}} \in \mathbb{R}$; in $(b)$ we used property $(\text{i})$, which can be applied since $\psi_{0, \text{i}} = \ket{+}^{\otimes n}$ is real valued; in $(c)$ we used property $(\text{ii})$ on $L$ and $L^\dagger$ and the fact that $\mathcal{H}_\text{f}$ as defined in Equation (\ref{eq:final-qubo-ham}) is real.
        \end{proof}
        We now turn our attention to the periodicity of the QAOA parameter space. The periodicity of $2\pi$ in the parameters $\beta$ and $\gamma$ is straightforward, as it arises directly from their interpretation as rotation angles in the unitary operators. The periodicity in $\beta$ of period $\pi$ is instead not true in general and only holds true for QUBO problems whose final Hamiltonian does not present linear terms in the spins.

        Before proving the main result we prove a useful lemma.
        \begin{lemma}\label{lemma:commutation}
            If the final Hamiltonian $\mathcal{H}_\mathrm{f}$ of the QAOA has only quadratic terms in the spins then the operator $\sigma^1_x \dots \sigma^n_x$ commutes with every layer of the QAOA circuit. Namely 
            \begin{equation}\label{eq:lemma-main-claim}
                [\sigma^1_x \dots \sigma^n_x, L_k(\beta, \gamma)] = 0
                ,
            \end{equation}
            where $L_k$ is any arbitrary layer of the QAOA.
        \end{lemma}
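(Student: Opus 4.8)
The plan is to reduce the claim to two elementary commutator identities. Recall $L_k(\beta,\gamma) = e^{-\mathrm{i}\frac{\beta}{2}\mathcal{H}_\mathrm{i}}\,e^{-\mathrm{i}\frac{\gamma}{2}\mathcal{H}_\mathrm{f}}$ and abbreviate $M := \sigma^1_x\cdots\sigma^n_x$. Since all operators here are finite matrices, $[M,H]=0$ implies $M e^{-\mathrm{i} t H} = e^{-\mathrm{i} t H}M$ (expand the exponential series term by term), and $[M,A]=[M,B]=0$ implies $[M,AB]=0$. Hence it suffices to establish $[M,\mathcal{H}_\mathrm{i}]=0$ and $[M,\mathcal{H}_\mathrm{f}]=0$; the rescaling of $\mathcal{H}_\mathrm{f}$ by the positive constant $k$ of Equation~(\ref{eq:k}) plays no role, since it does not affect commutators.

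The first identity is immediate: $\mathcal{H}_\mathrm{i}=-\sum_{i=1}^n \sigma^i_x$ is a sum of single-qubit $X$ operators, and $M$ is a product of the very same operators on distinct qubits, so each $\sigma^j_x$ — and therefore $\mathcal{H}_\mathrm{i}$ — commutes with $M$.

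For the second identity I would use conjugation. By hypothesis $\mathcal{H}_\mathrm{f}=\sum_{i\le j} a^{ij}\sigma^i_z\sigma^j_z$; the $i=j$ terms are multiples of the identity by Equation~(\ref{eq:tensor-product-relations}) and commute with everything, so it remains to check $[M,\sigma^i_z\sigma^j_z]=0$ for $i<j$. On qubit $i$ the only factor of $M$ acting nontrivially is $\sigma^i_x$, and $\sigma_x\sigma_z=-\sigma_z\sigma_x$, so $M\sigma^i_z M^{-1}=-\sigma^i_z$, and likewise $M\sigma^j_z M^{-1}=-\sigma^j_z$. Consequently
\begin{equation}
    M\,\sigma^i_z\sigma^j_z\,M^{-1}=(-\sigma^i_z)(-\sigma^j_z)=\sigma^i_z\sigma^j_z,
\end{equation}
which gives $[M,\sigma^i_z\sigma^j_z]=0$, and summing over $i<j$ yields $[M,\mathcal{H}_\mathrm{f}]=0$. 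Combining the two identities with the reductions above completes the proof.

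There is no genuine difficulty in this argument; the one point deserving attention is the parity cancellation in the last display. A quadratic term touches an \emph{even} number of the qubits on which $M$ acts as $\sigma_x$, so the two anticommutation sign flips cancel. A linear term $b^i\sigma^i_z$, by contrast, would produce a single uncancelled sign, giving $M\sigma^i_z M^{-1}=-\sigma^i_z$ and hence \emph{anticommutation} rather than commutation with $M$. This is precisely why the hypothesis ``only quadratic terms in the spins'' cannot be dropped, and it foreshadows why the period-$\pi$ symmetry in $\beta$ fails once linear terms are present.
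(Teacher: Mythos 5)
Your proof is correct and follows essentially the same route as the paper's: both rest on the anticommutation $\sigma_x\sigma_z=-\sigma_z\sigma_x$ applied twice so that the two sign flips cancel on a quadratic term, and then lift commutation to the exponentials. The only (harmless) difference is organizational — you commute $M$ with the full generators $\mathcal{H}_\mathrm{i}$ and $\mathcal{H}_\mathrm{f}$ before exponentiating, whereas the paper commutes $M$ with each exponential factor of the decomposed layer separately.
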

        \begin{proof}
            We first show that 
            \begin{equation}\label{eq:commutator-2-sigma}
            \begin{aligned}
                \left[ \sigma^i_x \otimes \sigma^j_x, \sigma^i_z \otimes \sigma^j_z \right]  = 0,
            \end{aligned} 
            \end{equation}
            which we can directly compute, removing the $i,j$ indices for notational simplicity, as
            \begin{equation}\label{eq:commutator-proof}
                \begin{aligned}
                    \sigma_x \otimes \sigma_x 
                    \cdot 
                    \sigma_z \otimes \sigma_z
                    =
                    \sigma_x \sigma_z \otimes \sigma_x \sigma_z
                    =
                    (-\sigma_z \sigma_x) \otimes (-\sigma_z \sigma_x)
                    =
                    \sigma_z \otimes \sigma_z \cdot \sigma_x \otimes \sigma_x
                    ,
                \end{aligned}
            \end{equation}
            where we used the fact that $\sigma_x \sigma_z = -\sigma_z \sigma_x$. From Equation (\ref{eq:commutator-2-sigma}) it follows that also
            \begin{equation}\label{eq:commutator-sigma-exp}
                \left[ \sigma^i_x \otimes \sigma^j_x, \exp \left\{ \sigma^i_z \otimes \sigma^j_z \right\} \right]  = 0,
            \end{equation}
            since in general, if $[A,B] = 0$, then $[A, f(B)] = 0$, which can easily be proven by expanding $f(B)$ in Taylor series. Equation (\ref{eq:commutator-sigma-exp}) implies that  
            \begin{equation}\label{eq:commutator-sigmax-Rzz}
                \left[
                    \sigma^1_x \dots \sigma^n_x, 
                    \prod\limits_{i < j = 1}^n 
                    \exp\left\{-\mathrm{i} \frac{a_k^{ij}}{2} \sigma^i_z \sigma^j_z \right\} 
                \right] = 0
                ,
            \end{equation}
            since, for every $i<j$ we can always decompose $\sigma^1_x  \dots \sigma^i_x \dots \sigma^j_x \dots \sigma^n_x$ into two factors, namely
            \begin{equation}
            \begin{aligned}   
            \sigma^1_x  \dots \sigma^i_x \dots \sigma^j_x \dots \sigma^n_x = 
                (\sigma^1_x \dots \mathbb{I} \dots \mathbb{I} \dots \sigma^n_x) \cdot (\mathbb{I} \dots \sigma^i_x \dots \sigma^j_x \dots \mathbb{I})
                ,
            \end{aligned}
            \end{equation}
            where both commute with the term $\exp\left\{-\mathrm{i} \frac{a_k^{ij}}{2} \sigma^i_z \sigma^j_z \right\}$: the first because the qubits involved are different, and the second because of Equation (\ref{eq:commutator-sigma-exp}). By repeating this procedure for every pair of $i,j$ with $i<j$ we obtain the result of Equation (\ref{eq:commutator-sigmax-Rzz}).

            It is now easy to prove that $\sigma^1_x \dots \sigma^n_x$ commutes with $L(\beta, \gamma)$: by explicitly writing $L(\beta, \gamma)$, namely
            \begin{equation}
                L_k(\beta_k, \gamma_k) = 
                \prod\limits_{i = 1}^n 
                \exp\left\{-\mathrm{i} \frac{\beta_k}{2} \sigma^i_x\right\}
                \prod\limits_{i < j = 1}^n
                \exp\left\{-\mathrm{i} \frac{a_k^{ij}}{2} \sigma^i_z \sigma^j_z \right\} 
                ,
            \end{equation}
            we see that the operator $\sigma^1_x \dots \sigma^n_x$ commutes with every term in the products: it commutes with all the terms $\exp\left\{-\mathrm{i} \frac{\beta_k}{2} \sigma^i_x\right\}$ because they are a function of the same operator $\sigma_x$, and it commutes with the terms $\exp\left\{-\mathrm{i} \frac{a_k^{ij}}{2} \sigma^i_z \sigma^j_z \right\}$ due to Equation (\ref{eq:commutator-sigmax-Rzz}).
        
        \end{proof}

        This lemma intuitively shows that, if the Hamiltonian has only quadratic terms $s_i s_j$, with $s \in \{-1,1\}^n$, then it does not matter if we first apply one layer of the QAOA and then we flip every qubit applying $\sigma^1_x \dots \sigma^n_x$ or if we first flip the qubits and then apply the layer. Indeed every term $s_is_j$ would yield the same contribution if the sign of $s_i$ and $s_j$ is flipped since, trivially, $(-s_i)(-s_j) = s_is_j$. The same is not true if there is a linear term.
        
        We are now ready to prove the periodicity of $\beta$.
        
        \begin{proposition}[QAOA Periodicity for QUBO problems]\label{prop:period}
        If the final Hamiltonian $\mathcal{H}_\mathrm{f}$ of the QAOA has only quadratic terms in the spins then its associated energy landscape is periodic in its variables $\vec{\beta}$ with period $\pi$. Namely,
        \begin{equation}\label{eq:main-claim}
            \braket{\mathcal{H}_\mathrm{f}}(\beta_1, \dots, \beta_i + \pi, \dots \beta_{p}, \vec{\gamma}) = \braket{\mathcal{H}_\mathrm{f}}(\vec{\beta}, \vec{\gamma})
            \;\;\;
            \forall i
            .
        \end{equation}
    \end{proposition}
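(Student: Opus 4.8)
The plan is to reduce the periodicity claim, via Lemma \ref{lemma:commutation}, to a single algebraic identity: a $\pi$-shift in one mixing angle $\beta_i$ acts on the circuit as the global spin-flip operator $\sigma^1_x \cdots \sigma^n_x$ times a global phase. First I would record how the mixer unitary transforms. Using the decomposition $U_\mathrm{i}(\beta) = \prod_{j=1}^n R^j_x(\beta)$ with $R_x(\beta) = \cos(\beta/2)\,\mathbb{I} - \mathrm{i}\sin(\beta/2)\,\sigma_x$ from Equation (\ref{eq:exp-of-sigmax}), together with $\cos(\beta/2 + \pi/2) = -\sin(\beta/2)$ and $\sin(\beta/2 + \pi/2) = \cos(\beta/2)$, a one-line computation gives $R_x(\beta + \pi) = -\mathrm{i}\,\sigma_x R_x(\beta)$. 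Since the factors $R^j_x$ act on distinct qubits and thus commute, this yields
\begin{equation}
U_\mathrm{i}(\beta + \pi) = (-\mathrm{i})^n \left(\sigma^1_x \cdots \sigma^n_x\right) U_\mathrm{i}(\beta),
\end{equation}
and because shifting $\beta_i$ leaves the $\mathcal{H}_\mathrm{f}$-part of the layer untouched, $L_i(\beta_i + \pi, \gamma_i) = (-\mathrm{i})^n \left(\sigma^1_x \cdots \sigma^n_x\right) L_i(\beta_i, \gamma_i)$.

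Next I would push the operator $\sigma^1_x \cdots \sigma^n_x$ through the remainder of the circuit. Substituting the last identity into $\psi(\beta_1, \dots, \beta_i + \pi, \dots, \beta_p, \vec\gamma) = L_p \cdots L_1\, \psi_{0,\mathrm{i}}$ and invoking Lemma \ref{lemma:commutation} — applicable precisely because $\mathcal{H}_\mathrm{f}$ has only quadratic terms in the spins — I can commute $\sigma^1_x \cdots \sigma^n_x$ past every layer $L_i, L_{i-1}, \dots, L_1$ until it reaches the initial state. Since $\sigma_x \ket{+} = \ket{+}$, we have $\left(\sigma^1_x \cdots \sigma^n_x\right) \psi_{0,\mathrm{i}} = \psi_{0,\mathrm{i}}$, so
\begin{equation}
\psi(\beta_1, \dots, \beta_i + \pi, \dots, \beta_p, \vec\gamma) = (-\mathrm{i})^n\, \psi(\vec\beta, \vec\gamma).
\end{equation}
Feeding this into the definition (\ref{eq:energy-landscape}) of the energy landscape, the scalar $(-\mathrm{i})^n$ enters once in the ket and once conjugated in the bra, so its unit modulus cancels and $\braket{\mathcal{H}_\mathrm{f}}(\beta_1, \dots, \beta_i + \pi, \dots, \beta_p, \vec\gamma) = \braket{\mathcal{H}_\mathrm{f}}(\vec\beta, \vec\gamma)$. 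Since $i$ was arbitrary, this establishes (\ref{eq:main-claim}) for every component of $\vec\beta$.

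I do not expect a genuine obstacle here: the Lemma does the real work, and what remains is careful bookkeeping — correctly producing the phase $(-\mathrm{i})^n$ from the identity $R_x(\beta+\pi) = -\mathrm{i}\,\sigma_x R_x(\beta)$, and making sure the spin-flip operator is commuted through \emph{all} the layers $L_1, \dots, L_p$ (not merely those following layer $i$) so that it can be absorbed into $\psi_{0,\mathrm{i}}$. The only point to watch is that the commutation must be with each full layer $L_k$, which combines the $\mathcal{H}_\mathrm{i}$ and $\mathcal{H}_\mathrm{f}$ exponentials; Lemma \ref{lemma:commutation} already packages exactly this, so no dependence on the absence of linear terms can sneak back in at this stage.
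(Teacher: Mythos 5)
Your proposal is correct and follows essentially the same route as the paper: establish $L_i(\beta_i+\pi,\gamma_i) = (-\mathrm{i})^n\,\sigma^1_x\cdots\sigma^n_x\,L_i(\beta_i,\gamma_i)$, commute the spin-flip operator through the earlier layers via Lemma \ref{lemma:commutation}, absorb it into $\psi_{0,\mathrm{i}} = \ket{+}^{\otimes n}$ as a $+1$ eigenstate, and discard the global phase in the expectation value. The only (cosmetic) difference is that you derive the $\pi$-shift identity from the $R_x$ gate decomposition rather than directly from the exponential $e^{-\mathrm{i}\frac{\pi}{2}\sum_k \sigma^k_x}$, and note that the operator only needs to pass through $L_i,\dots,L_1$, not all $p$ layers as your closing remark suggests.
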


    \begin{proof}
        We preliminary notice that 
        \begin{equation}
            \begin{aligned}
                L_i(\beta_i + \pi, \gamma_i) 
                = 
                 e^{-\mathrm{i} \left(\frac{\beta_i}{2} + \frac{\pi}{2}\right) \mathcal{H}_\text{i}}
                e^{-\mathrm{i} \frac{\gamma_i}{2} \mathcal{H}_\text{f}}
                =
                e^{-\mathrm{i} \frac{\pi}{2} \sum\limits_{k = 1}^n \sigma^k_x}
                L_i(\beta_i, \gamma_i)
                =
                (-\mathrm{i})^n  \sigma^1_x, \dots, \sigma^n_x
                L_i(\beta_i, \gamma_i) 
                ,
            \end{aligned}
        \end{equation}
        where we can neglect the term $(-\mathrm{i})^n$ since it is a global phase and thus will not change the energy landscape. We compute
        \begin{equation}\label{eq:state-periodicity}
        \begin{aligned}
            \psi(\beta_1, \dots, \beta_i + \pi, \dots, \beta_p, \vec{\gamma})
            &=
            L_p(\beta_p, \gamma_p) 
            \dots
            L_i(\beta_i + \pi, \gamma_i)
            \dots
            L_1(\beta_1, \gamma_1) \psi_{0, \text{i}} 
            \\
            &=
            L_p(\beta_p, \gamma_p) 
            \dots
            \sigma^1_x, \dots, \sigma^n_x
            L_i( \beta_i, \gamma_i)            
            \dots
            L_1(\beta_1, \gamma_1)
            \psi_{0, \text{i}}
            \\
            &\stackrel{(a)}=
            L_p(\beta_p, \gamma_p) 
            \dots
            L_1(\beta_1, \gamma_1)
            \sigma^1_x, \dots, \sigma^n_x
            \psi_{0, \text{i}}
            \\
            &\stackrel{(b)}=
            L_p(\beta_p, \gamma_p) 
            \dots
            L_1(\beta_1, \gamma_1)
            \psi_{0, \text{i}}
            \\
            &= 
            \psi(\vec{\beta}, \vec{\gamma})
            ,
        \end{aligned}
        \end{equation}
        where in equality (a) we let the operator $\sigma^1_x, \dots, \sigma^n_x$ commute through all the layers because of Lemma \ref{lemma:commutation} and in the equality (b) we used the fact that $\psi_{0, \text{i}}$ is an eigenstate of the operator $\sigma^1_x, \dots, \sigma^n_x$ with eigenvalue 1. From the result of Equation (\ref{eq:state-periodicity}) the main claim of Equation (\ref{eq:main-claim}) directly follows.
        
    \end{proof}

     We conclude recalling that, in principle, the initial Hamiltonian $H_\text{i}$ is arbitrary, and one might think of choosing a more suitable one. We highlight that these symmetries could then be lost, since they both rely on the chosen $H_\text{i}$: the lowest state of this Hamiltonian, namely $\psi_{0, \text{i}} = \ket{+}^{\otimes n}$, is real valued and is an eigenstate of $\sigma^1_x \dots \sigma^n_x$, which, respectively, allowed us to apply the property $(\text{i})$ in Proposition \ref{prop:simmetry} and to make the operator $\sigma^1_x \dots \sigma^n_x$ disappear in the last step of Proposition \ref{prop:period}. A different $H_\text{i}$ might be endowed with different symmetries, but not necessarily the ones shown here.

    \subsection{Parameter Space Restriction}\label{sec:param-space-restriction}
    Because of these symmetries we can further restrict the parameter space of the QAOA.
    \begin{enumerate}
        \item Taking into account the periodicity of $2\pi$ for all the variables, we restrict $(\vec{\beta}, \vec{\gamma}) \in [-\pi,\pi]^p\times[-\pi, \pi]^p$. The $2p$-dimensional volume of the parameter space is $V=(2\pi)^p (2\pi)^p$.
        
        \item Proposition \ref{prop:simmetry} holds for every Hamiltonian and thus we can always restrict the optimization parameters to $(\vec{\beta}, \vec{\gamma}) \in [0,\pi]^p\times[-\pi, \pi]^p$ since $\braket{\mathcal{H}_\text{f}}(-\beta, \gamma) = \braket{\mathcal{H}_\text{f}}(\beta, -\gamma)$. 
        Taking the symmetry with respect to the origin into account, the $2p-$dimensional volume becomes $V_\mathrm{s}=(\pi)^p(2\pi)^p = V/2^p$.
        
        \item Proposition \ref{prop:period} only holds when the spin Hamiltonian presents only quadratic terms. In this case we can further restrict the parameters to $(\vec{\beta}, \vec{\gamma}) \in [0,\pi]^p\times[0, \pi]^p$, since $\braket{\mathcal{H}_\text{f}}(\beta - \pi, \gamma) = \braket{\mathcal{H}_\text{f}}(\beta, \gamma)$. The $2p-$dimensional volume taking into account the symmetry and the periodicity of $\pi$ in $\beta$ is $V_{\mathrm{s},\mathrm{p}}=(\pi)^p (\pi)^p = V/2^{2p}$.
    \end{enumerate}

    Using only one layer, thus taking $p=1$, we see that $V_{\mathrm{s},\mathrm{p}}=V/4$ and $V_\mathrm{s} = V/2$, thus recovering the reduction to $1/4$ and $1/2$ of the parameter space highlighted by the dotted lines in the plots 1B and 2B in Figure \ref{fig:qaoa-energylandscape}. We see that the advantage increases exponentially with the number of layers $p$.

    The parameter space can thus be restricted depending on the chosen optimization method, which usually, in the context of VQCs, is some alteration of the gradient descent. Because of these symmetries, the parameters can be initialized randomly in the interval $(\vec{\beta}, \vec{\gamma}) \in [0,\pi]^p \times[-\pi, \pi]^p$ or $(\vec{\beta}, \vec{\gamma}) \in [0,\pi]^p \times[0, \pi]^p$, depending on the presence of the linear term in the Hamiltonian. 
    
    We can also enforce the periodicity by applying two arbitrary functions whose image is the desired interval. As an example, we can choose $\sigma_\pi : \mathbb{R} \to [0, \pi]$ and $\sigma_{2\pi} : \mathbb{R} \to [-\pi, \pi]$ such that
    \begin{equation}\label{eq:scaling-func}
    \begin{aligned}
         \sigma_\pi(x) = \frac{\pi}{2}(\tanh(x)+1) \;\;\;\text{and}\;\;\; \sigma_{2\pi}(x) = \pi\tanh(x)
         .
    \end{aligned}
    \end{equation}
    The scaling function $\sigma_\pi$ will always be applied to the parameters $\vec{\beta}$, while $\sigma_\pi$ or $\sigma_{2\pi}$ will be applied to the parameters $\vec{\gamma}$ depending on the presence of the linear term in the final Hamiltonian $\mathcal{H}_\text{f}$.

    With a relatively low number of QAOA layers these scaling functions are not expected to drastically improve the convergence, but indeed with a high number of layers this is expected to be the case, considering the exponential reduction of the search space shown for by $V_\mathrm{s}$ and $V_{\mathrm{s},\mathrm{p}}$. 
    
\section{Implementation and Results}

    In this section, we present the algorithm implementation and the results of the Max Cut and the Knapsack problems. 
    
    \subsection{The Algorithm}
    We summarize the QAOA in Algorithm \ref{alg:qaoa}.

    \begin{algorithm}[H]
    \caption{Quantum Approximate Optimization Algorithm}\label{alg:qaoa}
    \begin{algorithmic}
    \Require
    \State Final Hamiltonian $\mathcal{H}_\text{f}$.
    \State Number of layers $p$.
    \State Number of shots $N_s$.
    \State Initial weights $\theta^{0} = \beta^0_1,\gamma^0_1, \dots, \beta^0_{p}, \gamma^0_{p}$.
    
    \While{the convergence of $\braket{\mathcal{H}_\text{f}} = \bra{\psi(\theta)} \mathcal{H}_\text{f} \ket{\psi(\theta)}$ is not achieved}
      
        \State Place the first layer of $H^{\bigotimes n}$ gates.
        \State Place the layers $L_1(\beta_1, \gamma_1), \dots, L_{p}(\beta_{p}, \gamma_{p})$
        \State Run the circuit $N_s$ times and sample the final state $\psi(\theta)$.
        \State Compute $\braket{\mathcal{H}_\text{f}}$.
        \State Update the rotation parameters $\theta$ with $\braket{\mathcal{H}_\text{f}}$ as cost function.
    \EndWhile
    \State Run the circuit a statistically significant number of times.
    \State Extract the state with highest probability, which is the best estimate of the optimal solution.
    \end{algorithmic}
    \end{algorithm}

    We stress the fact that, in principle, the order of $U_{\mathrm{i}}(\beta)$ and $U_{\mathrm{f}}(\gamma)$ does not matter, since we can apply the first order Trotterization of Equation (\ref{eq:final-drop-of-limit}) to $\mathcal{H}(t) =  (1-t)\mathcal{H}_\text{i} + t \mathcal{H}_\text{f}$, which yields
    \begin{equation}
       	\begin{aligned}
       		\lim\limits_{p \to +\infty} 
       		\mathcal{T}
       		\prod\limits
       		_{k=1}^{p} e^{- \mathrm{i} (1-t_k) \Delta t \mathcal{H}_\text{i} - \mathrm{i} t_k \Delta t  \mathcal{H}_\text{f}}
       		&\approx
       		\mathcal{T}
       		\prod_{k=1}^{p} 
       		e^{-\mathrm{i} (1-t_k) \Delta t \mathcal{H}_\mathrm{i}  }
       		e^{- \mathrm{i} t_k \Delta t \mathcal{H}_\mathrm{f} }
       		,
       	\end{aligned}
    \end{equation}
    or to $\mathcal{H}(t) =  t \mathcal{H}_\text{f} + (1-t)\mathcal{H}_\text{i}$, which, instead, leads to
    \begin{equation}
    	\begin{aligned}
    		\lim\limits_{p \to +\infty} 
    		\mathcal{T}
    		\prod\limits
    		_{k=1}^{p} e^{- \mathrm{i} t_k \Delta t  \mathcal{H}_\text{f}- \mathrm{i} (1-t_k) \Delta t \mathcal{H}_\text{i}}
    		&\approx
    		\mathcal{T}
    		\prod_{k=1}^{p} 
    		e^{- \mathrm{i} t_k \Delta t \mathcal{H}_\mathrm{f} }
    		e^{-\mathrm{i} (1-t_k) \Delta t \mathcal{H}_\mathrm{i}  }
    		.
    	\end{aligned}
    \end{equation}
     Introducing the free parameters $\beta_k/2, \gamma_k/2$, in the first case we obatin $L_k(\beta_k, \gamma_k) = U_{\mathrm{i}}(\beta_k)U_{\mathrm{f}}(\gamma_k)$ while in the second $L_k(\beta_k, \gamma_k) = U_{\mathrm{f}}(\gamma_k)U_{\mathrm{i}}(\beta_k)$. This does not imply that $U_\mathrm{i}$ and $U_\mathrm{f}$ commute in general, but only that there is no a priori reason to choose one ordering of $U_\mathrm{i}$ and $U_\mathrm{f}$ over the other. However, in practice, it could happen that the order affects the optimization performance.

    The optimization step is performed through an optimizer, which usually, but not necessarily, represents some variation of the gradient descent. In particular, for a quantum variational circuit, the gradient can be updated exploiting the parameter-shift rule shown in Equation (\ref{eq:parameter-shift-rule}).

    \subsection{Implementation and Results}

    We implement one instance of both the Max Cut and the Knapsack problem and apply the QAOA to find their optimal solution. The Max Cut problem is implemented as a randomly connected graph with five vertices $V = \{0, 1, 2, 3, 4\}$ and edges $E = \{(0,1), (0,2), (0,4), (1,2), (1,3), (2,3), (2,4), (3,4)\}$. The QAOA layer is repeated ten times. The Knapsack problem is implemented with five objects with randomly chosen values $V = \{4, 4, 2, 2, 4\}$ and weights $W = \{4, 3, 1, 2, 1\}$. The QAOA layer is repeated 15 times. 

    The results are shown in Figure \ref{fig:results}. In the left column we show the convergence of the cost function $\braket{\mathcal{H}_\text{f}}(\beta, \gamma)$ while on the right the distribution sampled from the output state, where the binary strings have been sorted in increasing order of the energy. The first and second rows refer to, respectively, the Max Cut and Knapsack problem. For both problems the QAOA algorithm has been run ten times with ten different sets of initial angles sampled uniformly from the interval, respectively, $[0, \pi]^5 \times [0, \pi]^5$ and $[0, \pi]^5 \times [-\pi, \pi]^5$, as discussed in Section \ref{sec:param-space-restriction}. In the cost function plot the confidence interval of one standard deviation is shown and the histogram is averaged among the ten different runs. The optimizer used is the Simultaneous Perturbation Stochastic Approximation (SPSA) optimizer with 2000 optimization steps.

    \begin{figure*}[ht]
        \centering
        \includegraphics[width = 460pt]{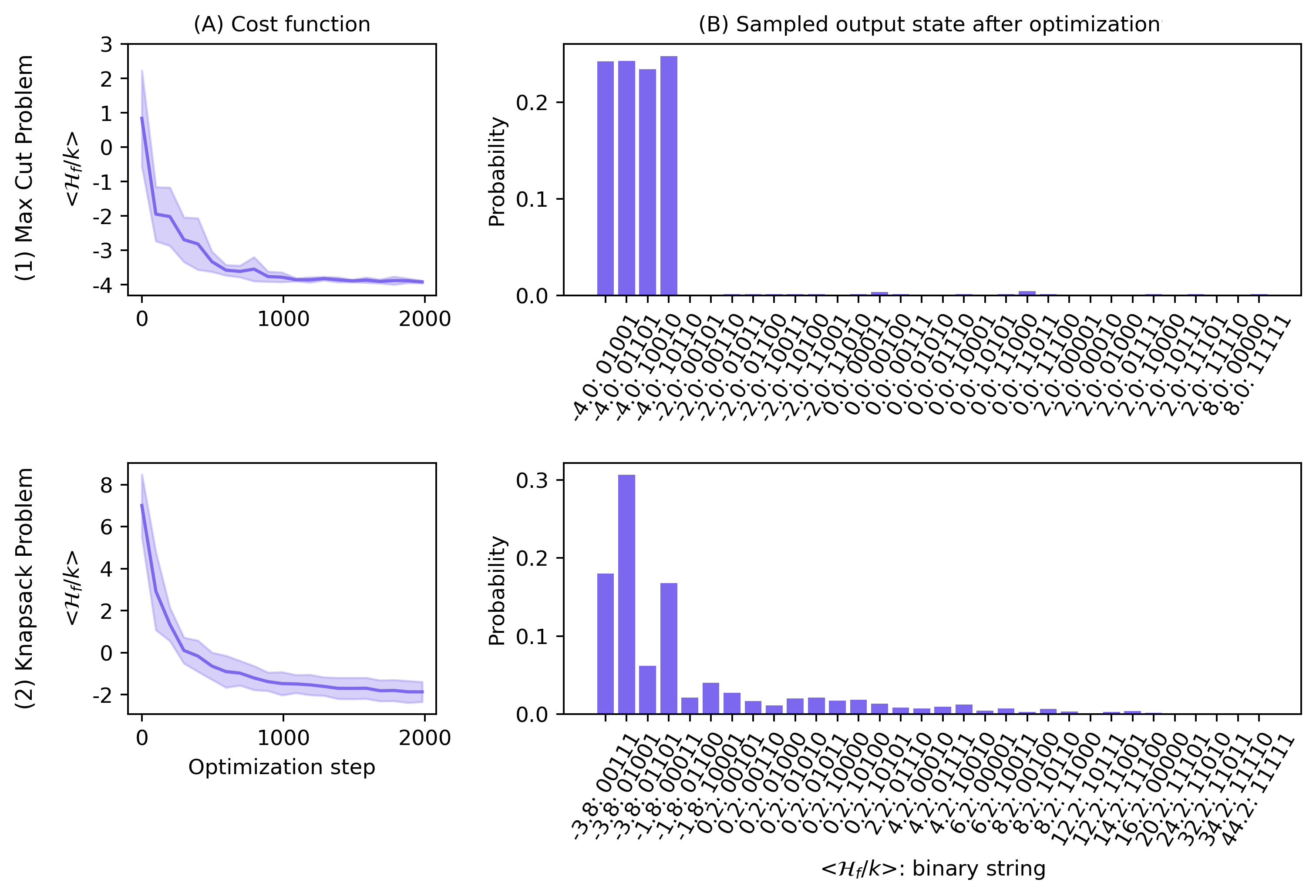}
        \vspace{-10pt}
        \caption{Plots of the results of the optimization. Row 1 and row 2: plots of the results of the QAOA circuit for, respectively, the Max Cut and the Knapsack problem run ten times each with different sets of initial angles. Column A: plots of the cost function $\braket{\mathcal{H}_\text{f}}$ decreasing during the optimization of the QAOA circuit. The standard deviation across  the ten initializations is represented by the shaded area. Column B: histogram of the sampled output state, where each binary string is associated with the scaled Hamiltonian evaluated on that string. The binary strings are organized from the lowest energy (left) to the highest energy (right). The histogram takes into account the same ten runs.}
        \label{fig:results}
    \end{figure*}

    \definecolor{light-gray}{gray}{0.95}
    \newcommand{\code}[1]{\colorbox{light-gray}{\texttt{#1}}}

\section{Polynomial Unconstrained Binary Optimization Problems}
    In this section, we extend the formulation of QUBO problems to the more general class of Polynomial Unconstrained Binary Optimization (PUBO) problems. We also outline how the QAOA can be generalized to Hamiltonians that include polynomial interaction terms.

    \subsection{PUBO Hamiltonian}

    We write the $d-$th degree cost function $C(x)$ with binary variables $x \in \{0,1\}^n$ as
    \begin{equation}
        C(x) 
        =
        \sum\limits_{k=1}^{d} 
        \sum\limits_{i_1 \ldots i_k=1}^n
        q_{k, i_1 \ldots i_k}
        x_{i_1} \ldots x_{i_k}
        ,
    \end{equation}
    where we denoted by $q_k$ the vector containing the coefficients of the $k$-th term in the sum, and by $q_{k, i_1 \ldots i_k}$ its components. 
    Following the same procedure as in Equation~(\ref{eq:qubo-ham}), we can express this cost function in terms of the spin variables $s \in \{-1, 1\}^n$ as $\mathcal{H}(s) = C\left(\frac{s+1}{2}\right)$. The change of variables yields
    \begin{equation}\label{eq:pubo-with-s+1}
    \begin{aligned}    
        \mathcal{H}(s)
        &=
        \sum\limits_{k=1}^{d} 
        \sum\limits_{i_1 \ldots i_k=1}^n
        2^{-k}
        q_{k,i_1 \ldots i_k}
        (s_{i_1} + 1) \ldots (s_{i_k} + 1)
        .
    \end{aligned}
    \end{equation}
    The coefficient tensors $q_1, \dots, q_d$ can always be symmetrized with respect to their indices, ensuring invariance under any permutation of those indices. 
    This property generalizes the symmetry of the matrix $Q$ in QUBO formulations. 
    Therefore, in the following, we assume without loss of generality that the coefficients of $q_1, \dots, q_d$ are permutation invariant.

    In Appendix~\ref{appendix:calculations}, we expand the Hamiltonian given in Equation~(\ref{eq:pubo-with-s+1}) and collect terms according to the number of spin variables in each product. The resulting Hamiltonian can be written as
    \begin{equation}\label{eq:ham-with-f}
        \mathcal{H}(s)
        =
        \sum\limits_{k = 1}^d
        \sum\limits_{i_1 \ldots i_k=1}^n
        f_{k, i_1 \dots i_k}
        s_{i_1} \ldots s_{i_k}
        ,
    \end{equation}
    with $f_{k, i_1 \dots i_k}$ linear functions of the original coefficients $q_1, \dots, q_d$ given by
    \begin{equation}\label{eq:f-coefficients}
        f_{k, i_1 \dots i_k}
        =
        \sum\limits_{i_1 \dots i_k = 1}^n
        2^{-k}
        \left(
            q_{k, i_1 \dots i_k}
            +
            \sum\limits_{h = 1}^{d-k} 
            \sum\limits_{i_{k+1} \dots i_{k+h} = 1}^n
            \binom{k+h}{k}
            q_{k, i_1 \dots i_{k+h}}
        \right)
        .
    \end{equation}
    Since we assumed invariance under permutations, we have that $q_{k, \sigma(i_1 \ldots i_k)} = q_{k, i_1 \ldots i_k}$, where $\sigma$ is a particular permutation of the indices $i_1, \dots, i_k$. Using this property it is easy to verify in Equation (\ref{eq:f-coefficients}) that also $f_{k, \sigma(i_1 \dots i_k)} = f_{k, i_1 \dots i_k}$. We can then remove the redundant terms from the sum of Equation (\ref{eq:ham-with-f}) by defining some coefficients $a_{k, i_1 \ldots i_k}$ that count the redundancy. Namely, we define
    \begin{equation}
        a_{k, i_1 \ldots i_k}
        :=
        |\sigma(i_1, \ldots, i_k)| f_{k,i_1 \dots i_k}
        ,
    \end{equation}
    where $|\sigma(i_1, \ldots, i_k)|$ is the number of permutations of the indices $i_1, \dots, i_k$. This allows us to rewrite Equation (\ref{eq:ham-with-f}) with ordered indices as
    \begin{equation}\label{eq:ham-pubo-spin}
        \mathcal{H}(s)
        =
        \sum\limits_{k = 1}^d
        \sum\limits_{i_1 < \ldots < i_k=1}^n
        a_{k, i_1 \ldots i_k}
        s_{i_1} \ldots s_{i_k}
        ,
    \end{equation}
    which is the generalization of Equation (\ref{eq:ham-cost-function}). We will set Equation (\ref{eq:ham-pubo-spin}) as the final Hamiltonian $\mathcal{H}_\text{f}$ substituting the $s$ variables with Pauli spins $\sigma_z$ operators. Namely, 
    \begin{equation}\label{eq:ham-pubo-spin-with-sigmas}
        \mathcal{H}_\text{f}(s)
        =
        \sum\limits_{k = 1}^d
        \sum\limits_{i_1 < \ldots < i_k=1}^n
        a_{k, i_1 \ldots i_k}
        \sigma_z^{i_1} \ldots \sigma_z^{i_k}
        .
    \end{equation}
    It is clear that in many cases it could be more convenient, when possible, to directly phrase the optimization cost function in terms of the spin variables $s \in \{-1,1\}^n$ rather than in terms of the binary variables $x \in \{0,1 \}^n$, so to avoid dealing with Equation (\ref{eq:f-coefficients}).

    \subsection{QAOA for PUBO Problems}
    We now show how to generalize the QAOA algorithm to PUBO problems. The framework developed so far will remain the same: we will implement the Trotterization of the exponential of the evolving Hamiltonian $\mathcal{H}(t) = (1-t)\mathcal{H}_\text{i} + t \mathcal{H}_\text{f} \; \text{ with } t \in [0,1]$ and compute the general $p-$th layer operator $L_p = e^{-\mathrm{i} \frac{\beta_p}{2} \mathcal{H}_\text{i}}e^{-\mathrm{i}\frac{\gamma_p}{2} \mathcal{H}_\text{f}} = U_{\mathrm{i}}(\beta_p) U_{\mathrm{f}}(\gamma_p)$. We use the same initial Hamiltonian, which, as shown in Equation (\ref{eq:exp-of-sigmax}), leads to 
    \begin{equation}
        U_{\mathrm{i}}(\beta_p) =\prod\limits_{i = 1}^n R^i_x(\beta_p).
    \end{equation}
    The only difference is the term $U_{\mathrm{f}}(\gamma_p)$, which is given by 
    \begin{equation}
        U_{\mathrm{f}}(\gamma_p) = e^{-\mathrm{i}\frac{\gamma_p}{2} \mathcal{H}_\text{f}},
    \end{equation}
    with $\mathcal{H}_\text{f}(s)$ as defined in Equation (\ref{eq:ham-pubo-spin-with-sigmas}). We then compute 
    \begin{equation}\label{eq:U-f-PUBO}
        \begin{aligned}
            U_{\mathrm{f}}(\gamma_p) 
            &=
            e^{-\mathrm{i} \frac{\gamma_p}{2} \mathcal{H}_\text{f}} 
            \\
            &=
            \exp \left\{
            -\mathrm{i} \sum\limits_{k = 1}^d
            \sum\limits_{i_1 < \ldots < i_k=1}^n
            \frac{\gamma_p a_{k, i_1 \ldots i_k}}{2}
            \sigma_z^{i_1} \ldots \sigma_z^{i_k}
            \right\} 
            \\
            &=
            \prod\limits_{\substack{k = 1}}^d
            \prod\limits_{i_1 < \ldots < i_k=1}^n
            \exp\left\{-\mathrm{i} \frac{\gamma_p a_{k, i_1 \ldots i_k}}{2} \sigma_z^{i_1} \ldots \sigma_z^{i_k} \right\} 
            ,
        \end{aligned} 
        \end{equation}
        where in the last equality we separated the exponential of the sums in the product of the exponential since $[\sigma_z^{i}, \sigma_z^{j}]=0$ $\forall 
        i,j$. Note that, a priori, the order of the exponential functions does not matter since they all commute.

        We generalize the $R_{ZZ}(\theta)$ gate of Equation (\ref{eq:Rzz}) by defining 
        \begin{equation}\label{eq:R-Z-k}
            R_{Z^k}(\theta) := \exp\left\{-\mathrm{i} \frac{\theta}{2} \sigma_z^{i_1} \ldots \sigma_z^{i_k} \right\}
            .
        \end{equation}
        In Appendix \ref{appendix:pubo-qaoa} we compute the explicit form of $R_{Z^k}$, which is given by
        \begin{equation}\label{eq:R-Z-k-decomposed.}
            R_{Z^k}(\theta) = 
            \text{CNOT}(i_1, i_k) \dots \text{CNOT}(i_{k-1}, i_k)
            \cdot
            \mathbb{I} \otimes \ldots \otimes \mathbb{I} \otimes R_Z(\theta)
            \cdot
            \text{CNOT}(i_{k-1}, i_k) \dots \text{CNOT}(i_1, i_k)
            .
        \end{equation}
        While in Appendix \ref{appendix:pubo-qaoa} we performed explicitly the (admittedly tedious) calculations that lead to Equation (\ref{eq:R-Z-k-decomposed.}), we mention that an alternative and intuitive derivation of the same result is discussed in \cite{nielsen-chuang}, although it differs in that it makes use of an ancilla qubit. Here, the authors note that $R_{Z^k}$ applies a phase shift of $e^{-\mathrm{i}\theta/2}$ or $e^{\mathrm{i}\theta/2}$ depending on the total parity $\pm1$ given by the product $\sigma_z^{i_1}\ldots\sigma_z^{i_k}$. To implement the gate, we can then compute the total parity via CNOTs, store it in the ancilla qubit indexed by $k+1$, apply the conditional phase $\exp\left\{-\mathrm{i}(\theta/2)\sigma_z^{i_{k+1}}\right\}$ on the ancilla qubit, and then uncompute its parity to restore its initial value.
        
        \subsection{The Circuit and its Symmetries}

        The generic $p-$th layer operator $L_p(\beta_p, \gamma_p)$ of the QAOA algorithm for a PUBO problem is
        \begin{equation}\label{eq:layer-PUBO}
            L_p(\beta_p, \gamma_p) 
            = 
            U_{\mathrm{i}}(\beta_p) U_{\mathrm{f}}(\gamma_p)
            =
            \left(
                \prod\limits_{i = 1}^{n}
                R^i_x(\beta_p)
            \right)
            \left(
                \prod\limits_{k=1}^d
                \prod\limits_{i_1 < \ldots < i_k = 1 }^{n} 
                R_{Z^k}(\gamma_p a_{k, i_1, \dots, i_k})
            \right)
            .
        \end{equation}
        While the $U_{\mathrm{i}}$ operator stays the same regardless of the degree of the final Hamiltonian $\mathcal{H}_{\text{f}}$, $U_{\mathrm{f}}$ depends on it. For each term of the Hamiltonian with spins $k=1, \dots, d$ we associate a $R_{Z^k}$ gate, which is in turn decomposed into a series of CNOT gates and one single $R_Z$ gate as shown in Equation (\ref{eq:R-Z-k-decomposed.}).
        
        The complete circuit that implements the QAOA for PUBO problems has the same architecture represented in Figures \ref{fig:qaoa-circuit} and \ref{fig:KSP-circuit} but with a different $U_\mathrm{f}$ operator, which is the one shown in Figure \ref{fig:Uf-PUBO-decomposition}.

        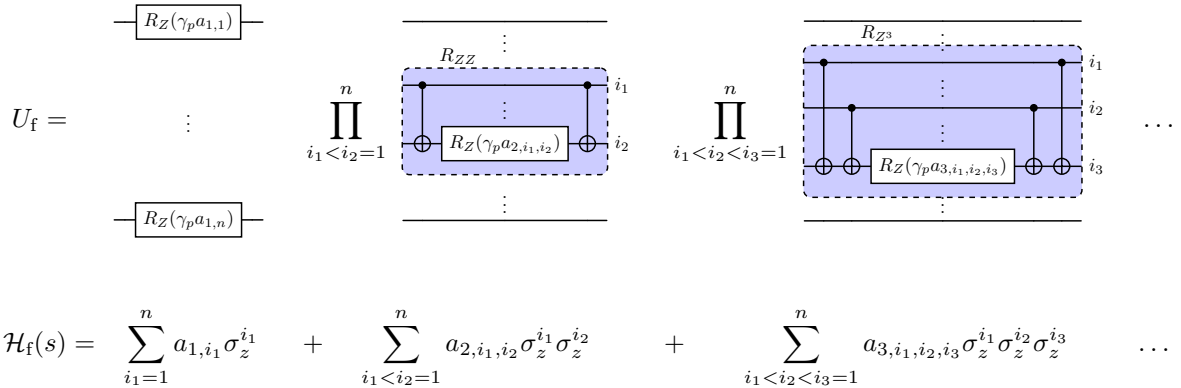
\begin{figure}[h!]

        $$
        \newcommand{\CircuitDots}{\gate[wires=1,style={fill=none,draw=none,text height=1cm}]{\substack{\vdots \\[0.1cm]}}}
        \begin{aligned}
            U_\mathrm{f}
            &=
            \hspace{10pt}
            \begin{tikzpicture}[baseline={([yshift=-.5ex]current bounding box.center)},vertex/.style={anchor=base,
            circle,fill=black!25,minimum size=10pt,inner sep=0pt}]
            \node[scale=0.7] {
                \begin{quantikz}[row sep=-0.1cm, column sep=0.2cm]
                                &  & \gate{R_Z(\gamma_p a_{1, 1})} &  & \\[34pt]
                \setwiretype{n} &  & \CircuitDots         &  & \\[34pt]
                                &  & \gate{R_Z(\gamma_p a_{1, n})} &  &
                 \end{quantikz}
                };
            \end{tikzpicture}
            \hspace{10pt}
            \prod\limits_{i_1 < i_2 = 1}^n
            \begin{tikzpicture}[baseline={([yshift=-.5ex]current bounding box.center)},vertex/.style={anchor=base,
            circle,fill=black!25,minimum size=18pt,inner sep=2pt}]
            \node[scale=0.7] {
                \begin{quantikz}[row sep=-0.1cm, column sep=0.2cm]
                                &        &                           &          & \\
                \setwiretype{n} &          & \CircuitDots       &          & \\[12pt]
                                & \ctrl{2}\gategroup[3,steps=3,style={dashed,rounded corners, fill=blue!20, inner xsep=+3pt}, background, label style={xshift=-25pt}]{{\sc $R_{ZZ}$}} &                           & \ctrl{2} & \rstick{$i_1$} \\
                \setwiretype{n} &          & \CircuitDots              &          & \\
                                & \targ{}  & \gate{R_Z(\gamma_p a_{2,i_1,i_2})} & \targ{}  & \rstick{$i_2$}  \\[12pt]
                \setwiretype{n} &          & \CircuitDots              &          & \\
                                &          &                           &          & 
                \end{quantikz}
                };
            \end{tikzpicture}
            \hspace{10pt}
            \prod\limits_{i_1 < i_2 < i_3 = 1}^n
            \begin{tikzpicture}[baseline={([yshift=-.5ex]current bounding box.center)},vertex/.style={anchor=base,
            circle,fill=black!25,minimum size=18pt,inner sep=2pt}]
            \node[scale=0.7] {
                \begin{quantikz}[row sep=-0.1cm, column sep=0.2cm]
                                &          &          &                               &          &          & \\
                \setwiretype{n} &          &          & \CircuitDots                  &          &          & \\
                                & \ctrl{4}\gategroup[5,steps=5,style={dashed,rounded corners, fill=blue!20, inner xsep=+3pt}, background, label style={xshift=-35pt}]{{\sc $R_{Z^3}$}} &          &                               &          & \ctrl{4} & \rstick{$i_1$} \\
                \setwiretype{n} &          &          & \CircuitDots                  &          &          & \\
                                &          & \ctrl{2} &                               & \ctrl{2} &          & \rstick{$i_2$} \\
                \setwiretype{n} &          &          & \CircuitDots                  &          &          & \\
                                & \targ{}  & \targ{}  & \gate{R_Z(\gamma_p a_{3,i_1,i_2,i_3})} & \targ{}  & \targ{}  & \rstick{$i_3$} \\
                \setwiretype{n} &          &          & \CircuitDots                  &          &          & \\
                                &          &          &                               &          &          & 
                \end{quantikz}
                };
            \end{tikzpicture}
            & \dots
        \end{aligned}
        $$
        
        $$
        \newcommand{\CircuitDots}{\gate[wires=1,style={fill=none,draw=none,text height=1cm}]{\substack{\vdots \\[0.1cm]}}}
        \begin{aligned}
            \hspace{-5pt}
            \mathcal{H}_\text{f}(s)
            &=
            \hspace{8pt}
            \sum\limits_{i_1=1}^n
            a_{1, i_1}
            \sigma^{i_1}_z
            \hspace{14pt}
            +
            \hspace{12pt}
            \sum\limits_{i_1 < i_2=1}^n    
            a_{2, i_1, i_2}
            \sigma^{i_1}_z\sigma^{i_2}_z
            \hspace{25pt}
            +
            \hspace{20pt}
            \sum\limits_{i_1 < i_2 < i_3 = 1}^n    
            a_{3, i_1, i_2, i_3}
            \sigma^{i_1}_z\sigma^{i_2}_z\sigma^{i_3}_z
            \hspace{16pt}
            &\hspace{0pt} \dots
        \end{aligned}
        $$
        \caption{Quantum gate decomposition of the $U_\mathrm{f}$ operator. To every term with $k$ interactions of the spins in the final Hamiltonian corresponds one $R_{Z^k}$ gate. Note that, since the $R_{Z^k}$ gates all commute, their order must not necessarily be the one shown here.}
        \label{fig:Uf-PUBO-decomposition}
        \end{figure}
    Once again, the parameters $\beta_p, \gamma_p$ are symmetric with respect to the origin. Namely, we can claim the following proposition.

    \begin{proposition}[QAOA Symmetry for PUBO problems]\label{prop:simmetry-pubo}
        The energy landscape of the QAOA that implements a PUBO problem is symmetric with respect to the origin. Namely 
        \begin{equation}\label{eq:simmetry-pubo}
            \braket{\mathcal{H}_\mathrm{f}}(-\vec{\beta}, -\vec{\gamma}) = \braket{\mathcal{H}_\mathrm{f}}( \vec{\beta}, \vec{\gamma})
            .
        \end{equation}
    \end{proposition}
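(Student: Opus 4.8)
The plan is to observe that the proof of Proposition \ref{prop:simmetry} carries over almost verbatim, since it relied only on three facts, all of which remain valid in the PUBO setting: (i) the initial state $\psi_{0,\mathrm{i}} = \ket{+}^{\otimes n}$ is real-valued; (ii) the layer operator satisfies $L^*(-\beta,-\gamma) = L(\beta,\gamma)$; and (iii) the final Hamiltonian $\mathcal{H}_\mathrm{f}$ is a real matrix. Fact (i) is unchanged because the initial Hamiltonian $\mathcal{H}_\mathrm{i} = -\sum_i \sigma^i_x$ is the same as in the QUBO case, so the only things that genuinely need checking are (ii) and (iii) for the PUBO final Hamiltonian of Equation (\ref{eq:ham-pubo-spin-with-sigmas}).

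For (iii), I would note that $\mathcal{H}_\mathrm{f}$ is a real linear combination of the operators $\sigma^{i_1}_z \cdots \sigma^{i_k}_z$: the coefficients $a_{k, i_1 \ldots i_k}$ are real, being (up to the integer multiplicity $|\sigma(i_1,\ldots,i_k)|$) the linear functions $f_{k, i_1 \ldots i_k}$ of the real PUBO coefficients $q_1,\ldots,q_d$ given in Equation (\ref{eq:f-coefficients}); and each $\sigma_z$ is real (indeed diagonal), so every product $\sigma^{i_1}_z \cdots \sigma^{i_k}_z$ is real and $\mathcal{H}_\mathrm{f}^* = \mathcal{H}_\mathrm{f}$. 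Since $\mathcal{H}_\mathrm{i}$ is also real, fact (ii) then follows exactly as before: $L^*(-\beta,-\gamma) = \bigl(e^{\mathrm{i}(\beta/2)\mathcal{H}_\mathrm{i}} e^{\mathrm{i}(\gamma/2)\mathcal{H}_\mathrm{f}}\bigr)^* = e^{-\mathrm{i}(\beta/2)\mathcal{H}_\mathrm{i}} e^{-\mathrm{i}(\gamma/2)\mathcal{H}_\mathrm{f}} = L(\beta,\gamma)$, because complex conjugation commutes with the matrix exponential of a real matrix.

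With these in hand I would reproduce the chain of equalities from Proposition \ref{prop:simmetry}: starting from $\braket{\mathcal{H}_\mathrm{f}}(-\vec{\beta},-\vec{\gamma}) = \braket{\psi_{0,\mathrm{i}} | L^\dagger(-\vec{\beta},-\vec{\gamma})\, \mathcal{H}_\mathrm{f}\, L(-\vec{\beta},-\vec{\gamma}) | \psi_{0,\mathrm{i}}}$, use that this quantity is real to replace it by its complex conjugate, distribute the conjugate onto the three factors (legitimate by property (i) applied to the real state $\ket{+}^{\otimes n}$), and then apply (ii) and (iii) to identify the result with $\braket{\psi_{0,\mathrm{i}} | L^\dagger(\vec{\beta},\vec{\gamma})\, \mathcal{H}_\mathrm{f}\, L(\vec{\beta},\vec{\gamma}) | \psi_{0,\mathrm{i}}} = \braket{\mathcal{H}_\mathrm{f}}(\vec{\beta},\vec{\gamma})$. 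The passage from one layer to $p$ layers is immediate, since conjugation distributes over products: $U^*(-\vec{\beta},-\vec{\gamma}) = L^*_p(-\beta_p,-\gamma_p)\cdots L^*_1(-\beta_1,-\gamma_1) = L_p(\beta_p,\gamma_p)\cdots L_1(\beta_1,\gamma_1) = U(\vec{\beta},\vec{\gamma})$.

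I expect no real obstacle here — the statement is a direct corollary of the structure already used for QUBO. The only point requiring a moment's care is confirming the reality of the coefficients $a_{k, i_1 \ldots i_k}$, which is transparent from Equation (\ref{eq:f-coefficients}); note in particular that the higher-order gate decomposition of $R_{Z^k}$ (CNOTs plus a single $R_Z$, Equation (\ref{eq:R-Z-k-decomposed.})) never needs to be invoked, since the entire argument lives at the level of the Hamiltonians rather than the circuit.
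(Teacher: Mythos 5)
Your proposal is correct and follows exactly the route the paper takes: the paper likewise observes that the proof of Proposition \ref{prop:simmetry} used nothing about $\mathcal{H}_{\mathrm{f}}$ beyond its reality, so it applies verbatim to the PUBO Hamiltonian. You merely make explicit the verification that the coefficients $a_{k, i_1 \ldots i_k}$ and the products $\sigma^{i_1}_z \cdots \sigma^{i_k}_z$ are real, which the paper leaves implicit.
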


    Proposition \ref{prop:simmetry-pubo} can be proven in the same way as in Proposition \ref{prop:simmetry}, since in the previous proof we did not assume anything about the form of $\mathcal{H}_{\text{f}}$, except for the fact that it is real. The results thus also apply for the PUBO Hamiltonian and more in general for real target Hamiltonians.

    Analogously to Proposition \ref{prop:period}, we can in this case claim the following result.

    \begin{proposition}[QAOA Periodicity for PUBO problems]\label{prop:period-pubo}
        If the final Hamiltonian $\mathcal{H}_\mathrm{f}$ of the QAOA has only terms with an even number of interactions in the spins then its associated energy landscape is periodic in its variables $\vec{\beta}$ with period $\pi$. Namely,
        \begin{equation}\label{eq:main-claim-pubo}
            \braket{\mathcal{H}_\mathrm{f}}(\beta_1, \dots, \beta_i + \pi, \dots \beta_{p}, \vec{\gamma}) = \braket{\mathcal{H}_\mathrm{f}}(\vec{\beta}, \vec{\gamma})
            \;\;\;
            \forall i
            .
        \end{equation}
    \end{proposition}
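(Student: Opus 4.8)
The plan is to mirror the proof of Proposition~\ref{prop:period} almost verbatim, with Lemma~\ref{lemma:commutation} replaced by a parity-based generalization. The key observation is that the global spin-flip operator $\sigma^1_x \dots \sigma^n_x$ anticommutes with each single $\sigma_z^i$, so conjugating a product $\sigma_z^{i_1}\dots\sigma_z^{i_k}$ of $k$ distinct $\sigma_z$-factors by it produces a sign $(-1)^k$; this sign is $+1$ exactly when $k$ is even. Hence, under the hypothesis that $\mathcal{H}_\mathrm{f}$ in Equation~(\ref{eq:ham-pubo-spin-with-sigmas}) contains only monomials with an even number of spin interactions, every such monomial commutes with $\sigma^1_x \dots \sigma^n_x$, which is precisely the ingredient that drove the QUBO proof.

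First I would establish the generalized commutation lemma: for $k$ even,
\begin{equation}
    [\sigma^1_x \dots \sigma^n_x,\; \sigma_z^{i_1}\dots\sigma_z^{i_k}] = 0 .
\end{equation}
This follows by factoring $\sigma^1_x \dots \sigma^n_x$ into the part acting on the qubits $\{i_1,\dots,i_k\}$ and the part acting on the remaining qubits; the latter commutes trivially because it acts on disjoint wires, while for the former one computes $(\sigma_x^{\otimes k})(\sigma_z^{\otimes k})(\sigma_x^{\otimes k}) = (-1)^k\,\sigma_z^{\otimes k} = \sigma_z^{\otimes k}$, using $\sigma_x\sigma_z = -\sigma_z\sigma_x$ on each of the $k$ slots (this is the $k=2$ computation of Equation~(\ref{eq:commutator-proof}) repeated $k$ times). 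Since $[A,B]=0$ implies $[A,f(B)]=0$ by Taylor expansion, $\sigma^1_x \dots \sigma^n_x$ then commutes with each factor $R_{Z^k}(\gamma_p a_{k,i_1\ldots i_k}) = \exp\{-\mathrm{i}\tfrac{\gamma_p a_{k,i_1\ldots i_k}}{2}\sigma_z^{i_1}\dots\sigma_z^{i_k}\}$ of $U_\mathrm{f}(\gamma_p)$, hence with the whole of $U_\mathrm{f}(\gamma_p)$; and it obviously commutes with $U_\mathrm{i}(\beta_p) = \prod_i R_x^i(\beta_p)$, a function of the $\sigma_x^i$ alone. Therefore $[\sigma^1_x \dots \sigma^n_x,\, L_p(\beta_p,\gamma_p)] = 0$ for every layer of the PUBO QAOA circuit of Equation~(\ref{eq:layer-PUBO}).

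Next I would reproduce the shift computation from Proposition~\ref{prop:period}: writing $L_i(\beta_i+\pi,\gamma_i) = e^{-\mathrm{i}\frac{\pi}{2}\sum_k\sigma_x^k}\,L_i(\beta_i,\gamma_i) = (-\mathrm{i})^n\,\sigma^1_x \dots \sigma^n_x\,L_i(\beta_i,\gamma_i)$ and discarding the global phase $(-\mathrm{i})^n$, one inserts this into $\psi(\beta_1,\dots,\beta_i+\pi,\dots,\beta_p,\vec\gamma)$, commutes $\sigma^1_x \dots \sigma^n_x$ leftward through all the layers using the lemma just proven, and finally uses that $\psi_{0,\mathrm{i}} = \ket{+}^{\otimes n}$ is a $+1$ eigenstate of $\sigma^1_x \dots \sigma^n_x$. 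The output state is then unchanged up to a global phase, and Equation~(\ref{eq:main-claim-pubo}) follows. The only step with genuine content — and the one worth writing out carefully — is the sign bookkeeping $(-1)^k = 1 \iff k$ even in the generalized commutation lemma; everything else is a direct transcription of the QUBO argument, just as the symmetry proof of Proposition~\ref{prop:simmetry-pubo} was. As a consistency check one recovers the quadratic case $k=2$ of Proposition~\ref{prop:period} as a special instance, which also explains why the odd linear term there had to be excluded.
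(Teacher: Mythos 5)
Your proposal is correct and follows essentially the same route as the paper: the key step in both is the sign computation $\sigma_x^{\otimes k}\sigma_z^{\otimes k} = (-1)^k\,\sigma_z^{\otimes k}\sigma_x^{\otimes k}$ (you phrase it as conjugation, the paper as a commutation relation, but these are equivalent since $\sigma_x^{\otimes k}$ is an involution), followed by lifting the commutation to $R_{Z^k}$ via $[A,B]=0\Rightarrow[A,f(B)]=0$ and then repeating the shift argument of Proposition~\ref{prop:period} verbatim. No gaps.
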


    \begin{proof}
        We start by generalizing the calculation in Equation (\ref{eq:commutator-proof}). We have that
        \begin{equation}
            \sigma_x^{\otimes k} \cdot \sigma_z^{\otimes k} = (\sigma_x \sigma_z)^{\otimes k} = (-\sigma_z \sigma_x)^{\otimes k} = (-1)^{k} \sigma_z^{\otimes k} \cdot \sigma_x^{\otimes k} 
            ,
        \end{equation}
       which shows that, if $k$ is even, then $[\sigma_x^{\otimes k}, \sigma_z^{\otimes k}] = 0$. As discussed in the proof of Proposition \ref{lemma:commutation}, this implies that 
       \begin{equation}\label{eq:commutator-sigmax-Rz^k}
            \left[
                \sigma_x^{\otimes k}, 
                R_{Z^k}(\theta)
            \right] = 0
            ,
        \end{equation}
        and consequently that 
        \begin{equation}
            \left[\sigma^1_x \dots \sigma^n_x, L_k(\beta_k, \gamma_k)\right] = 0
            .
        \end{equation}
        The rest of the proof in Proposition \ref{prop:period} can be followed analogously to obtain the same result. 

      \end{proof}

	\section{Further Reading}
	Up to this point, we have introduced the foundations of QAOA and examined several of its mathematical properties. Nonetheless, important questions remain: How significant is the error from the Trotter approximation? How tractable is the optimization landscape? And can variations of QAOA improve performance? In this section, we briefly outline these issues and highlight relevant material in the literature.
	
	\subsection*{Trotter Error}
	In the field of quantum simulation, Trotterization is one of the most widely used techniques for approximating time evolution, and its accuracy has been analyzed across many settings \cite{1912.08854, PhysRevLett.128.210501, 2405.01131, PhysRevResearch.6.033285}. When introducing the Trotter approximation shown in Proposition \ref{prop:trotter}, it is natural to ask how well this approximation performs when dealing with a Hamiltonian of interest.
	We know that increasing the number of Trotter layers systematically reduces the approximation error, and in the limit of infinitely many layers the sequence of layers converges to the exact unitary time evolution. How big is the error when truncating the limit?
	Suppose we have a Hamiltonian $\mathcal{H} = \sum\limits_{k = 1}^m \mathcal{H}_k$ which generates an exact time evolution $U(t) =e^{- \mathrm{i} t \mathcal{H}_k}$ and its Trotterized time evolution $T_1(t) := \prod\limits_{k = 1}^m e^{- \mathrm{i} t \mathcal{H}_k}$ and define the Trotter error as the norm of the difference between the approximated evolution and the exact one, namely $E:= \|T_1(t) -  e^{- \mathrm{i} t \mathcal{H}}\|$, where the norm $\| \cdot \|$ that is usually chosen is the spectral norm.
	%The study of the Trotter error is very important in quantum computing applications, where this approximation is introduced in
	In \cite{1912.08854} it has been shown that this error scales quadratically in time, as we expect from Proposition  \ref{prop:trotter}, and is determined by the commutator structure of the Hamiltonian. In particular, the following bound has been proven:
	\begin{equation}
		\bigl\| T_1(t)-U(t)\bigr\|
		\;\le\;
		\frac{t^{2}}{2}
		\sum_{k_1=1}^{m}
		\left\|
		\left[
		\sum_{k_2=k_1+1}^{m}
		H_{k_2},
		\,H_{k_1}
		\right]
		\right\|
		,
	\end{equation}
	which shows that the error, when applying the first-order Trotterization, grows like $t^{2}$ and is governed entirely by how strongly the Hamiltonian terms fail to commute. If many terms commute than the Trotter error becomes smaller.
	
	\subsection*{Expressivity and Barren Plateaus}
	Another important consideration when working with VQCs, such as those used in the QAOA algorithm, is that gradient-based optimization methods can suffer from the \textit{barren plateau} phenomenon. This phenomenon, which has been extensively studied \cite{PRXQuantum.3.010313,Ragone2024, 2307.08956,PRXQuantum.2.040316,1903.05076}, can severely hinder the optimization of VQCs by causing the gradients to vanish exponentially with system size. More precisely, consider a VQC with parameters $\vec{\theta}$ and a cost function $C(\vec{\theta}) = \operatorname{Tr}\left[ \rho(\vec{\theta}) \mathcal{H} \right]$,
	where $\rho(\vec{\theta}) = U(\vec{\theta})\rho U^\dagger(\vec{\theta})$ is the state prepared by the circuit and $\mathcal{H}$ is some Hermitian operator.
	The gradient with respect to a single parameter $\theta_i$ is $\partial_{\theta_i}C
	:= \frac{\partial C}{\partial \theta_i}  \operatorname{Tr}\left[ \rho(\vec{\theta}) \mathcal{H} \right].$
	It can be shown that, for sufficiently deep or highly expressive circuits, the expectation of this gradient over random parameter initializations satisfies
	\begin{equation}
		\mathbb{E}\left[\partial_{\theta_i}C\right] = 0.
	\end{equation}
	The barren plateau phenomenon consists in the fact that the variance of this gradient, for a circuit whose architecture is capable of approximating a wide variety of unitary matrices, vanishes exponentially with the number of qubits $n$:
	\begin{equation}
		\mathrm{Var}\left[\partial_{\theta_i}C\right] \sim \mathcal{O}\left( \frac{1}{2^n} \right).
	\end{equation}
	As a result, combining the facts that the gradient has zero mean and a variance that exponentially goes to zero, this implies that the gradient quickly becomes extremely small when increasing the number of qubits, making gradient-based optimization infeasible for large systems. As mentioned, the results hold on average and for architectures that are highly expressive, which means that they can approximate many unitary matrices. More precisely, this results hold for 2-designs, which are defined as a set of unitary operators $\{U_k\}$ acting on an $n$ qubits such that, for any operator $O$, the average over the set reproduces the Haar-random second moment:
	\begin{equation}
		\frac{1}{|\{U_k\}|} \sum_k U_k O U_k^\dagger \otimes U_k O U_k^\dagger
		=
		\int dU_{\mathrm{Haar}} \, U O U^\dagger \otimes U O U^\dagger,
	\end{equation}
	where the Haar measure is the group invariant measure, which is the equivalent of a uniform distribution on groups, and the integral is carried out on the group of unitary matrices. The Haar measure finds many applications in quantum information theory \cite{2307.08956} since, playing the role of a uniform distribution, it can be used to estimate expectation values on the circuit architectures.
	
	\subsection*{Variants of the QAOA}
	A vast ecosystem of QAOA variations has been proposed \cite{2306.09198} to overcome its limitations, each introducing distinct modifications to the ansatz structure or the parameter optimization strategy. Here we briefly mention three variations to the QAOA algorithm and highlight their main differences and advantages.
	
	A straightforward yet powerful enhancement is the \textit{Multi-Angle QAOA} (ma-QAOA). While the standard QAOA uses a single parameter $\beta$ for the unitary $U_\mathrm{i}$ and $\gamma$ for the unitary $U_\mathrm{f}$ at each step $p$, ma-QAOA introduces a unique parameter for each individual term in the Hamiltonians. Formally, the unitaries become:
	\begin{equation}
		U_{\mathrm{i}}(\vec{\beta}_p)
		=
		\prod\limits_{i = 1}^n e^{-\mathrm{i} \frac{\beta_{p,i}}{2} \sigma^i_x}
		\hspace{15pt}
		\text{and} 
		\hspace{15pt}
		U_{\mathrm{f}}(\vec{\gamma}_p)
		=
		\prod\limits_{\substack{i,j = 1 \\ i\leq j}}^n
		\exp\left\{-\mathrm{i} \frac{\gamma_{p,ij}a^{ij}}{2} \sigma^i_z \sigma^j_z \right\} 
		\prod\limits_{i= 1}^n R^i_z\left(\gamma_{p,i} b^{i}\right) 
	\end{equation}
	This increases the number of classical parameters and improves upon QAOA by providing a much more expressive architecture, allowing the algorithm to fine-tune each $R_{Z^k}$ gate independently.
	
	Another approach, inspired by methods to speed up quantum processes, is called \textit{Digitized Counterdiabatic QAOA} (DC-QAOA). The standard QAOA tries to mimic a slow, smooth transition to the solution, which often requires a very deep circuit (a high value of $p$) to work well. DC-QAOA adds a ``shortcut" to this process by introducing an extra component, known as a counterdiabatic (CD) term. This modifies the standard gate $L(\beta, \gamma)$ to $L(\beta, \gamma,\alpha)$, where $\alpha$ controls an additional quantum operator
	\begin{equation}
		U_{\mathrm{CD}}(\alpha)=\prod_{j=1}^{L}\exp(-i\alpha A^{q}_{t}),
	\end{equation}
	where \(A_t^{q}\) is a \(q\)-local operator drawn from a commutator-based operator basis constructed from \(\mathcal{H}_{\mathrm{i}}\) and \(\mathcal{H}_{\mathrm{f}}\). The commutators capture the leading directions along which non-adiabatic transitions occur due to the non-commutativity of the Hamiltonians. Including the unitary \(e^{-i\alpha A_t^q}\) rotates the state along these directions, effectively canceling diabatic leakage and allowing QAOA to stay closer to the instantaneous ground state, achieving high-quality solutions with smaller depth \(p\).
	
	Finally, \textit{Warm-Starting QAOA} (WS-QAOA), tackles the problem of poor initialization. Instead of starting from the uniform superposition $\left|+\right\rangle^{\otimes n}$, it initializes the state using a classical solution to a relaxed version of the problem. For a solution $c_i^* \in [0,1]$ from a continuous relaxation, the initial state is
	\begin{equation}
		\ket{s^{*}}=\bigotimes_{i=1}^{n}R_{y}(\theta_{i})\ket{0}^{\otimes n} \quad \text{where} \quad \theta_{i}=2\arcsin\left(\sqrt{c_{i}^{*}}\,\right).
	\end{equation}
	The initial Hamiltonian is then modified to $\mathcal{H}_{\mathrm{i}}^{(\text{ws})}=-\sin(\theta_{i})X-\cos(\theta_{i})Z$ to preserve this initial state. This should improve QAOA, especially at low depths $p$, by providing a high-quality starting point: it guarantees that the quantum solution is at least as good as the best classical approximation from the start, effectively giving QAOA a ``head start''.

    \appendix
    \section{Appendix}
    
    \subsection{Approximation of the Time-Dependent Hamiltonian Evolution}\label{app:approximation-of-dyson}

    In this section, we take a closer look at the approximation performed in Equation (\ref{eq:approximation-of-U}). We begin by deriving the solution to the Schrödinger equation with a time-dependent Hamiltonian and we proceed discussing how the time evolution operator can be approximated. In the following we set $\hbar = 1$.
    
    \subsection*{Dyson Series}
    The dynamic of a pure quantum state is given by the Schrödinger equation
    \begin{equation}\label{eq:Schr-eq}
    	\frac{d}{d t} \psi(t) = - \mathrm{i} \mathcal{H}(t)\psi(t)
    	.
    \end{equation}
    With the ansatz $\psi(t) = U(t) \psi_0$ we can rewrite Equation (\ref{eq:Schr-eq}) as 
    \begin{equation}\label{eq:schr-U}
    	\frac{d}{d t} U(t)= - \mathrm{i} \mathcal{H}(t)U(t)
    	.
    \end{equation}
    If $\mathcal{H}(t) = \mathcal{H}$ has no explicit dependence on time, then the solution to $\frac{d}{dt} U(t) = -\mathrm{i} \mathcal{H} U(t)$ is given by $U(t) = \exp\{-it\mathcal{H}\}$, which can be easily expanded in Taylor series and derived to confirm it. If, instead, $\mathcal{H}(t)$ has an explicit dependence on time, then $U(t) = \exp\{-it\mathcal{H}(t)\}$ is not a valid solution. In fact by deriving it we obtain
    \begin{equation}
    	\frac{d}{dt}U(t) = \frac{d}{dt} \sum\limits_{k = 0}^\infty \frac{1}{k!}(-\mathrm{i}t\mathcal{H}(t))^k = 
    	-\mathrm{i}\mathcal{H}(t)U(t)-\mathrm{i}tU(t)\frac{d}{dt}\mathcal{H}(t) \neq- \mathrm{i}\mathcal{H}(t)U(t)
    \end{equation}
    To obtain a solution for Equation (\ref{eq:schr-U}) we start by rewriting it as an integral equation: we integrate on both sides and, considering the initial condition $U(t_0) = \mathbb{I}$, we get to the form
    \begin{equation}
    	U(t) = \mathbb{I} - \mathrm{i} \int_{t_0}^t ds \mathcal{H}(s) U(s)
    	.
    \end{equation}
    We can then iteratively plug $U(t)$ so defined in the right-hand side of the integral equation and thus obtain
    \begin{align}
    	U_1(t) &= \mathbb{I} - \mathrm{i} \int_{t_0}^t dt_1 \, \mathcal{H}(t_1), \\[1mm]
    	U_2(t) &= \mathbb{I} - \mathrm{i} \int_{t_0}^t dt_2 \, \mathcal{H}(t_2) 
    	\left( \mathbb{I} - \mathrm{i} \int_{t_0}^{t_2} dt_1 \, \mathcal{H}(t_1) \right) \\[1mm]
    	&= \mathbb{I} - \mathrm{i} \int_{t_0}^t dt_2 \, \mathcal{H}(t_2) 
    	+ (-\mathrm{i})^2 \int_{t_0}^t dt_2 \int_{t_0}^{t_2} dt_1 \, \mathcal{H}(t_2) \mathcal{H}(t_1), \\[1mm]
    	&\dots \\[1mm]
    	U_n(t) &= \mathbb{I} + \sum_{k=1}^n (-\mathrm{i})^k 
    	\int_{t_0}^t dt_k \int_{t_0}^{t_k} dt_{k-1} \cdots \int_{t_0}^{t_2} dt_1 \; 
    	\mathcal{H}(t_k) \mathcal{H}(t_{k-1}) \cdots \mathcal{H}(t_1)
    	.
    \end{align}   
    The operator $U(t)$ is obtained in the limit $n \to \infty$, which yields
    \begin{equation}\label{eq:U-without-T}
    	U(t) = \mathbb{I} + \sum_{k=1}^\infty (-\mathrm{i})^k 
    	\int_{t_0}^t dt_k \int_{t_0}^{t_k} dt_{k-1} \cdots \int_{t_0}^{t_2} dt_1 \; 
    	\mathcal{H}(t_k) \mathcal{H}(t_{k-1}) \cdots \mathcal{H}(t_1)
    	,
    \end{equation}
    where we stress that $t_0 \leq t_1 \leq t_2 \leq \dots\leq t_k \leq t$, so that the operators $\mathcal{H}(t_k) \dots \mathcal{H}(t_1)$ are time-ordered in descending time. Usually, the expression in Equation~(\ref{eq:U-without-T}) is manipulated in such a way that it can be rewritten as an exponential function of the Hamiltonian, so that we obtain a solution similar in notation to the time-independent case. 
    This is done by first decoupling the nested integrals so that they can be replaced by multiple integrals with integration variables that range independently from $t_0$ to $t$. In doing this, every $i-$th variable is set in the range $t_0 \leq t_i \leq t$, which implies that the operators $\mathcal{H}(t_k) \dots \mathcal{H}(t_1)$ might not be in time-descending order when evaluated at some particular combination of $t_1, \dots, t_k$. In order to decouple the integrals we then also introduce an operator, called time-ordering operator, and denoted as $\mathcal{T}$, which reorders the terms $\mathcal{H}(t_k) \dots \mathcal{H}(t_1)$ in time-descending order. Also, we notice that the volume delimited by $t_0 \leq t_1 \leq t_2 \leq \dots\leq t_k \leq t$ is a $k$-simplex (in two dimensions this reduces to a triangle) inside the hypercube of volume $(t-t_0)^k$. The volume of the $k$-simplex is $(t-t_0)^k/k!$, and if we allow integration over the whole hypercube delimited by $t_0 \leq t_i \leq t$ $\forall i$ we must divide the result by $k!$ to obtain the original volume. In conclusion we obtain an expression similar to Equation (\ref{eq:U-without-T}) but where we let every variable freely run from $t_0$ to $t$ by ensuring the right time ordering of the Hamiltonian operators and where de divided the hyper-volume by the factor $k!$. We then rewrite $U(t)$ as 
    \begin{equation}
    	U(t) = \mathbb{I} + \sum_{k=1}^\infty \frac{(-\mathrm{i})^k}{k!} 
    	\int_{t_0}^t dt_k \int_{t_0}^{t} dt_{k-1} \cdots \int_{t_0}^{t} dt_1 \; 
    	\mathcal{T} \left[\mathcal{H}(t_k) \mathcal{H}(t_{k-1}) \cdots \mathcal{H}(t_1)\right]
    	,
    \end{equation}
    which we can write compactly as 
    \begin{equation}\label{eq:time-ordered-exp}
    	U(t) = \mathcal{T} \exp\!\Bigg(-\mathrm{i} \int_{t_0}^t \mathcal{H}(s)\, ds \Bigg).
    \end{equation}
    A more in depth discussion regarding the time-ordering operator and the Dyson series can be found in \cite{dyson}.
    
    \subsection*{Approximation of the Evolution Operator}
    In this subsection we show how we can approximte the time evolution operator $U(t)$ of Equation (\ref{eq:time-ordered-exp}). We start with the following proposition. 
    \begin{proposition}[Dyson Series as a Product of Exponentials]
    	Let $\mathcal{H}(t)$ be a time-dependent Hamiltonian. Then, the time-ordered exponential can be expressed as the limit of a product of exponentials:
    	\begin{equation}\label{eq:dyson-approx}
    		\mathcal{T} \exp\!\Bigg(-\mathrm{i} \int_{t_0}^t \mathcal{H}(s)\, ds \Bigg) 
    		= 
    		\lim_{p \to +\infty} 
    		\mathcal{T}
    		\prod_{k=1}^{p} 
    		\exp\big\{- \mathrm{i} \Delta t \mathcal{H}(t_k)\big\},
    	\end{equation}
    	where $\Delta t := t/p$ and $t_k = k \Delta t$.
    \end{proposition}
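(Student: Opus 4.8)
The plan is to reduce the identity to the composition (cocycle) law of the exact propagator plus a one-step Euler estimate, and then control the accumulated error. First I would introduce the two-parameter propagator $U(b,a)$, defined as the unique solution of $\partial_b U(b,a) = -\mathrm{i}\,\mathcal{H}(b)\,U(b,a)$ with $U(a,a)=\mathbb{I}$; by the same argument that led to Equation~(\ref{eq:time-ordered-exp}) one has $U(b,a)=\mathcal{T}\exp\{-\mathrm{i}\int_a^b \mathcal{H}(s)\,ds\}$, so that $U(t,t_0)$ is exactly the left-hand side of Equation~(\ref{eq:dyson-approx}). Uniqueness of solutions of the linear equation~(\ref{eq:schr-U}) gives the cocycle identity $U(c,b)\,U(b,a)=U(c,a)$. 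Partitioning $[t_0,t]$ by the nodes $t_k=t_0+k\,\Delta t$ with $\Delta t=(t-t_0)/p$ (so $t_p=t$, which matches $\Delta t=t/p$ when $t_0=0$), the cocycle identity yields precisely the ordered product $U(t,t_0)=\mathcal{T}\prod_{k=1}^{p}U(t_k,t_{k-1})$. It therefore suffices to show that each exact step $U(t_k,t_{k-1})$ may be replaced by $\exp\{-\mathrm{i}\,\Delta t\,\mathcal{H}(t_k)\}$ in the limit $p\to\infty$.

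For the single-step estimate, set $M:=\sup_{s\in[t_0,t]}\|\mathcal{H}(s)\|$, which is finite since $\mathcal{H}$ is continuous on a compact interval and acts on a finite-dimensional space. From the integral form $U(t_k,t_{k-1})=\mathbb{I}-\mathrm{i}\int_{t_{k-1}}^{t_k}\mathcal{H}(s)\,U(s,t_{k-1})\,ds$ together with the elementary bounds $\|U(s,t_{k-1})\|\le e^{M\Delta t}$ and $\|U(s,t_{k-1})-\mathbb{I}\|\le e^{M\Delta t}-1=O(\Delta t)$, one obtains $U(t_k,t_{k-1})=\mathbb{I}-\mathrm{i}\int_{t_{k-1}}^{t_k}\mathcal{H}(s)\,ds+O(\Delta t^2)$, while expanding the exponential gives $\exp\{-\mathrm{i}\Delta t\,\mathcal{H}(t_k)\}=\mathbb{I}-\mathrm{i}\Delta t\,\mathcal{H}(t_k)+O(\Delta t^2)$. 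Subtracting and bounding $\|\int_{t_{k-1}}^{t_k}(\mathcal{H}(s)-\mathcal{H}(t_k))\,ds\|\le \Delta t\,\omega(\Delta t)$ through the modulus of continuity $\omega$ of $\mathcal{H}$ on $[t_0,t]$ yields $\|U(t_k,t_{k-1})-\exp\{-\mathrm{i}\Delta t\,\mathcal{H}(t_k)\}\|\le \Delta t\,\omega(\Delta t)+C\Delta t^2=:\varepsilon_p$, uniformly in $k$, with $C$ depending only on $M$ and $t-t_0$.

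Finally I would accumulate the errors with the standard telescoping identity: writing $A_k:=U(t_k,t_{k-1})$ and $B_k:=\exp\{-\mathrm{i}\Delta t\,\mathcal{H}(t_k)\}$, and using $\|A_k\|,\|B_k\|\le e^{M\Delta t}$, the difference $A_p\cdots A_1-B_p\cdots B_1=\sum_{k=1}^{p}A_p\cdots A_{k+1}\,(A_k-B_k)\,B_{k-1}\cdots B_1$ has norm at most $e^{M(t-t_0)}\,p\,\varepsilon_p$. Since $p\,\varepsilon_p=(t-t_0)\,\omega(\Delta t)+C(t-t_0)\,\Delta t\to 0$ as $p\to\infty$, the two ordered products converge to the same limit, which is Equation~(\ref{eq:dyson-approx}). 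The main obstacle is exactly this uniform bookkeeping: one needs the single-step error to be $o(\Delta t)$, not merely $O(\Delta t)$, so that the factor $p$ in the accumulation step is harmless, and this is precisely where uniform continuity of $\mathcal{H}$ on the compact interval and finiteness of $M$ (i.e.\ the finite-dimensional setting) are used. As an alternative route, one can instead expand both sides as in Equation~(\ref{eq:U-without-T}): the right-hand side produces, at each order $m$, a Riemann sum over grid points of the time-ordered simplex $t_0\le s_1\le\cdots\le s_m\le t$ for the corresponding iterated integral appearing on the left, the repeated-index contributions being of one lower order in $\Delta t$ and vanishing in the limit, with the series tails dominated uniformly by $\sum_m (M(t-t_0))^m/m!=e^{M(t-t_0)}$, so that termwise convergence transfers to the full sum.
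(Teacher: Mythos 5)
Your proof is correct, and it takes a genuinely different route from the paper. The paper works directly with series: it expands each factor $e^{-\mathrm{i}\Delta t\,\mathcal{H}(t_k)}$ to first order, multiplies out the ordered product, recognizes the $n$-th order terms $(-\mathrm{i}\Delta t)^n\sum_{k_1<\dots<k_n}\mathcal{H}(t_{k_n})\cdots\mathcal{H}(t_{k_1})$ as Riemann sums over the time-ordered simplex, and identifies their limits with the nested integrals of the Dyson series (Equation~(\ref{eq:U-without-T})) term by term --- essentially the ``alternative route'' you sketch in your last sentence. You instead prove convergence to the propagator itself: you factor the exact evolution via the cocycle law $U(t,t_0)=U(t_p,t_{p-1})\cdots U(t_1,t_0)$, establish a single-step Euler estimate $\|U(t_k,t_{k-1})-e^{-\mathrm{i}\Delta t\,\mathcal{H}(t_k)}\|\le \Delta t\,\omega(\Delta t)+C\Delta t^2$, and telescope. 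Your version is the more rigorous of the two: the paper's line ``$+\,\mathcal{O}(\Delta t)$'' silently absorbs $p$ separate $\mathcal{O}(\Delta t^2)$ remainders plus all repeated-index cross terms, and it interchanges the $p\to\infty$ limit with the infinite sum over orders without justification; your telescoping identity with the uniform bounds $\|A_k\|,\|B_k\|\le e^{M\Delta t}$ handles exactly this bookkeeping, and correctly isolates the crucial point that the local error must be $o(\Delta t)$ (here $\Delta t\,\omega(\Delta t)$, in fact $\mathcal{O}(\Delta t^2)$ for the Lipschitz interpolation $\mathcal{H}(t)=(1-t)\mathcal{H}_{\mathrm{i}}+t\mathcal{H}_{\mathrm{f}}$ actually used in the QAOA). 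What the paper's approach buys is pedagogical continuity --- it reuses the Dyson series just derived and makes the simplex/Riemann-sum picture explicit --- while yours buys a quantitative global error bound $e^{M(t-t_0)}\bigl((t-t_0)\,\omega(\Delta t)+C(t-t_0)\,\Delta t\bigr)$ and a cleaner logical structure. The only mild caveat is that you use continuity of $s\mapsto\mathcal{H}(s)$ and finiteness of $M=\sup_s\|\mathcal{H}(s)\|$, hypotheses the proposition does not state but which hold in the setting of the paper and are implicitly needed by its proof as well.
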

    
    \begin{proof}
    	We start from the right-hand side of Equation (\ref{eq:dyson-approx}) and expand each exponential:
    	\begin{equation}
    		\begin{aligned} 
    			&\lim_{p \to +\infty}
    			\mathcal{T} 
    			\prod_{k=1}^{p} \Big( \mathbb{I} - \mathrm{i} \Delta t \mathcal{H}(t_k) + \mathcal{O}(\Delta t^2) \Big) \\
    			&= \lim_{p \to +\infty} \Bigg[ 
    			\mathbb{I} 
    			+ \sum_{n=1}^{p} (-\mathrm{i} \Delta t)^n 
    			\sum_{0 \le k_1 < \dots < k_n \le p-1} 
    			\mathcal{H}(t_{k_n}) \cdots \mathcal{H}(t_{k_1}) 
    			+ \mathcal{O}(\Delta t)
    			\Bigg] \\
    			&= \mathbb{I} + \sum_{n=1}^\infty (-\mathrm{i})^n 
    			\int_{t_0}^t dt_n \int_{t_0}^{t_n} dt_{n-1} \cdots \int_{t_0}^{t_2} dt_1 \; 
    			\mathcal{H}(t_n) \mathcal{H}(t_{n-1}) \cdots \mathcal{H}(t_1) \\
    			&= \mathcal{T} \exp\Big(-\mathrm{i} \int_{t_0}^t \mathcal{H}(t)\, dt \Big),
    		\end{aligned}
    	\end{equation}
    	which proves the proposition.
    \end{proof}
    
    We now show how we can further approximate the term 
    \begin{equation}\label{eq:single-term-trotter}
    	\exp\left\{- \mathrm{i} \Delta t \mathcal{H}(t_k)\right\} = 
    	\exp\left\{- \mathrm{i} (1-t_k) \Delta t  \mathcal{H}_\mathrm{i} - \mathrm{i} t  \Delta t  \mathcal{H}_\mathrm{f}\right\}
    	,
    \end{equation}
    where we inserted the time-evolving Hamiltonian $\mathcal{H}(t) = (1-t) \mathcal{H}_\mathrm{i} + t \mathcal{H}_\mathrm{f}$. To do that we will use the Lie-Trotter approximation at the first order.
    
    \begin{proposition}[First-order Lie-Trotter approximation]\label{prop:trotter}
    	Let $A$ and $B$ be two (possibly non-commuting) operators. Then, for a small time step $\Delta t$, the exponential of their sum can be approximated as
    	\begin{equation}
    		\exp\big\{\mathrm{i} A \Delta t + \mathrm{i} B \Delta t\big\} 
    		= \exp\big\{\mathrm{i} A \Delta t\big\} \, \exp\big\{\mathrm{i} B \Delta t\big\} + \mathcal{O}(\Delta t^2).
    	\end{equation}
    \end{proposition}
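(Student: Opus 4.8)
The plan is to prove the identity by a direct power-series comparison of the two sides in the small parameter $\Delta t$. I would work in the finite-dimensional setting relevant to this tutorial — $A$ and $B$ are operators on finitely many qubits, hence bounded — so that every exponential is an entire, norm-convergent function of $\Delta t$ and all remainder terms may be controlled uniformly in the spectral norm on any bounded interval of $\Delta t$.

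First I would expand the left-hand side using $\exp\{X\} = \mathbb{I} + X + \tfrac{1}{2}X^2 + \dots$ with $X = \mathrm{i}(A+B)\Delta t$, giving
\begin{equation}
    \exp\{\mathrm{i}A\Delta t + \mathrm{i}B\Delta t\} = \mathbb{I} + \mathrm{i}(A+B)\Delta t + \mathcal{O}(\Delta t^2),
\end{equation}
where the remainder gathers all powers $\Delta t^n$ with $n \ge 2$ and is bounded by a constant multiple of $\Delta t^2$ because $\|A+B\| < \infty$. Next I would expand each factor on the right-hand side in the same fashion, $\exp\{\mathrm{i}A\Delta t\} = \mathbb{I} + \mathrm{i}A\Delta t + \mathcal{O}(\Delta t^2)$ and likewise for $B$, and multiply the two series:
\begin{align}
    \exp\{\mathrm{i}A\Delta t\}\exp\{\mathrm{i}B\Delta t\}
    &= \mathbb{I} + \mathrm{i}(A+B)\Delta t + (\mathrm{i}A\Delta t)(\mathrm{i}B\Delta t) + \mathcal{O}(\Delta t^2) \\
    &= \mathbb{I} + \mathrm{i}(A+B)\Delta t + \mathcal{O}(\Delta t^2),
\end{align}
since the cross term $-AB\,\Delta t^2$ is already $\mathcal{O}(\Delta t^2)$ and is absorbed into the remainder; note that no hypothesis on whether $A$ and $B$ commute enters at this order.

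Finally I would subtract the two expansions term by term: the order-$\Delta t^0$ pieces $\mathbb{I}$ cancel, the order-$\Delta t^1$ pieces $\mathrm{i}(A+B)\Delta t$ cancel, and what survives is a difference of two $\mathcal{O}(\Delta t^2)$ operators, hence itself $\mathcal{O}(\Delta t^2)$, which is precisely the assertion. I would also note in passing that carrying the same bookkeeping to second order reproduces the Baker--Campbell--Hausdorff correction, whose leading term $-\tfrac{1}{2}[A,B]\Delta t^2$ is where non-commutativity first appears and which pins down the size of the first-order Trotter error discussed elsewhere in this work. The step requiring the most care is not mathematical but a matter of precision: one must state clearly that $\mathcal{O}(\cdot)$ denotes an operator-norm bound uniform for $\Delta t$ in a bounded interval, and that the argument as written relies on boundedness of $A$ and $B$ (automatic here); for genuinely unbounded generators the correct statement is instead the Trotter--Kato theorem, established on a common dense domain rather than by naive series manipulation.
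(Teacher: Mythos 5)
Your proposal is correct and follows essentially the same route as the paper's own proof: expand both sides as power series in $\Delta t$, observe that they agree through first order, and absorb everything from $\Delta t^2$ onward into the remainder. The extra care you take about boundedness, the operator-norm meaning of $\mathcal{O}(\Delta t^2)$, and the Baker--Campbell--Hausdorff leading correction is a welcome refinement but does not change the argument.
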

    
    \begin{proof}
    	A direct expansion shows that:
    	\begin{align}
    		\exp\big\{\mathrm{i} A \Delta t + \mathrm{i} B \Delta t\big\} 
    		&= \mathbb{I} + \mathrm{i} A \Delta t + \mathrm{i} B \Delta t + \mathcal{O}(\Delta t^2), \\
    		\exp\big\{\mathrm{i} A \Delta t\big\} \exp\big\{\mathrm{i} B \Delta t\big\} 
    		&= \left( \mathbb{I} + \mathrm{i} A \Delta t + \mathcal{O}(\Delta t^2) \right)
    		\left( \mathbb{I} + \mathrm{i} B \Delta t + \mathcal{O}(\Delta t^2) \right) \\
    		&= \mathbb{I} + \mathrm{i} A \Delta t + \mathrm{i} B \Delta t + \mathcal{O}(\Delta t^2).
    	\end{align}
    \end{proof}
    
    By choosing $A = - (1-t_k)\mathcal{H}_\mathrm{i}$ and $B = - t_k \mathcal{H}_\mathrm{f}$ and plugging them in Equation (\ref{eq:single-term-trotter}) we obtain 
    \begin{equation}\label{eq:distribute-the-Delta}
    	\lim\limits_{p \to +\infty} 
    	\mathcal{T}
    	\prod\limits
    	_{k=1}^{p} \exp\left\{- \mathrm{i} \Delta t \mathcal{H}(t_k)\right\}
    	=
    	\lim\limits_{p \to +\infty}
    	\mathcal{T}
    	\prod\limits
    	_{k=1}^{p}
    	\left[
    	\exp\left\{-\mathrm{i} (1-t_k)\mathcal{H}_\mathrm{i} \Delta t \right\}
    	\exp\left\{- \mathrm{i} t_k \mathcal{H}_\mathrm{f} \Delta t\right\}
    	+ O(\Delta t^2)
    	\right]
    	= \dots
    \end{equation}
    For notational simplicity, we now define 
    \begin{equation}
    	X_k := \exp\left\{-\mathrm{i} (1-t_k)\mathcal{H}_\mathrm{i} \Delta t \right\}
    	\exp\left\{- \mathrm{i} t_k \mathcal{H}_\mathrm{f} \Delta t\right\}
    \end{equation}
    and we continue from Equation (\ref{eq:distribute-the-Delta}) writing
    \begin{equation}
    	\begin{aligned}
    		\dots = \lim_{p \to +\infty} \mathcal{T} \prod_{k=1}^{p} \left( X_k + O(\Delta t^2) \right)
    		&= \lim_{p \to +\infty} 
    		\mathcal{T}
    		\Bigg[
    		\prod_{k=1}^{p} X_k
    		+ \sum_{j=1}^{p} \left( \prod_{k=j+1}^{p} X_k \right) O(\Delta t^2) \left( \prod_{k=1}^{j-1} X_k \right)
    		\Bigg] \\
    		&= \lim_{p \to +\infty} 
    		\mathcal{T}
    		\prod_{k=1}^{p} X_k + O(p \Delta t^2) \\
    		&\approx 
    		\mathcal{T}
    		\prod_{k=1}^{p} X_k \\
    		&=
    		X_p \dots X_1.
    	\end{aligned}
    \end{equation}
    By plugging in the definition of $X_k$ we obtain
    \begin{equation}\label{eq:final-drop-of-limit}
    	\begin{aligned}
    		\lim\limits_{p \to +\infty} 
    		\mathcal{T}
    		\prod\limits
    		_{k=1}^{p} \exp\left\{- \mathrm{i} \Delta t \mathcal{H}(t_k)\right\}
    		&\approx
    		\mathcal{T}
    		\prod_{k=1}^{p} 
    		\exp\left\{-\mathrm{i} (1-t_k)\mathcal{H}_\mathrm{i} \Delta t \right\}
    		\exp\left\{- \mathrm{i} t_k \mathcal{H}_\mathrm{f} \Delta t\right\}
    		.
    	\end{aligned}
    \end{equation}
    In conclusion, by putting together Equations (\ref{eq:dyson-approx}) and (\ref{eq:final-drop-of-limit}) we obtain the result of Equation (\ref{eq:approximation-of-U}), namely
    \begin{equation}\label{eq:final-approx}
    	U(t) = 
    	\mathcal{T} \exp\!\Bigg(-\mathrm{i} \int_{t_0}^t H(s)\, ds \Bigg) 
    	\approx 
    	\mathcal{T}
    	\prod_{k=1}^{p} 
    	e^{-\mathrm{i} (1-t_k)\mathcal{H}_\mathrm{i} \Delta t}
    	e^{- \mathrm{i} t_k \mathcal{H}_\mathrm{f} \Delta t}
    \end{equation}
    \subsection{PUBO Hamiltonian Coefficients}\label{appendix:calculations}

    We show the steps that lead from the Hamiltonian of Equation (\ref{eq:pubo-with-s+1}), where the change of variable has been performed, to Equation (\ref{eq:ham-with-f}), where the spin terms of the same degree have been collected.

    We start by rewriting the product $(s_{i_1} + 1) \dots (s_{i_k} + 1)$ as 
    \begin{equation}\label{eq:s+1-terms}
        (s_{i_1} + 1) \dots (s_{i_k} + 1) 
        =
        \sum\limits_{h=1}^k     
        \sum\limits_{\substack{S \subseteq I_k \\ |S|=h}}
        \prod\limits_{l \in S} 
        s_l
        + \bcancel{1}
        ,
    \end{equation}
    where $I_k = \{i_1, \dots, i_k\}$ is the set of indices and the 1 is removed since it is a constant. The expression in Equation (\ref{eq:s+1-terms}) yields all the terms with $h=1, \dots, k$ product of spins, each one repeated with different indices taken as a subset of $I_h$. We give a concrete example for $k=3$.
    \begin{equation}
        (s_{i_1} + 1)(s_{i_2} + 1)(s_{i_3} + 1) 
        = 
        \underbrace{s_{i_1} + s_{i_2} + s_{i_3}}_\text{
            $\sum\limits_{\substack{S \subseteq \{i_1,i_2,i_3\} \\ |S| = 1}} \prod\limits_{l \in S} s_l$
        }
        + 
        \underbrace{s_{i_1} s_{i_2} + s_{i_2} s_{i_3} + s_{i_1} s_{i_3}}_\text{
            $\sum\limits_{\substack{S \subseteq \{i_1,i_2,i_3\} \\ |S| = 2}} \prod\limits_{l \in S} s_l$
        } 
        + 
        \underbrace{s_{i_1} s_{i_2} s_{i_3}}_\text{
            $\sum\limits_{\substack{S \subseteq \{i_1,i_2,i_3\} \\ |S| = 3}} \prod\limits_{l \in S} s_l$
        } 
        +
        1
        .
    \end{equation}
    For notational purposes we introduce $q'_k := 2^{-k} q_k$ and $\sigma_{k,h} := \binom{k}{h}$. We plug Equation (\ref{eq:s+1-terms}) into Equation (\ref{eq:pubo-with-s+1}) and expand the sum:
    \begin{equation}\label{eq:comp1}
    \begin{aligned}
        \mathcal{H}(s)
        &=
        \sum\limits_{k=1}^{d} 
        \sum\limits_{i_1 \ldots i_k=1}^n 
        q'_{k,i_1 \ldots i_k}
        \sum\limits_{h=1}^k     
        \sum\limits_{\substack{S \subseteq I_k \\ |S|=h}}
        \prod\limits_{l \in S} 
        s_l
        \\
        &\stackrel{(a)}{=}
        \sum\limits_{k=1}^{d} 
        \sum\limits_{i_1 \ldots i_k=1}^n
        q'_{k,i_1 \ldots i_k}
        \left(
            \sum\limits_{\substack{S \subseteq I_k \\ |S|=1}}
            \prod\limits_{l \in S} 
            s_l
            +
            \dots
            +
            \sum\limits_{\substack{S \subseteq I_k \\ |S|=k}}
            \prod\limits_{l \in S} 
            s_l
        \right)
        \\
        &\stackrel{(b)}{=}
        \sum\limits_{i_1=1}^n   
        \;\; q'_{1,i_1} \;\;\;\;\;\;
        \left(
            \sum\limits_{\substack{S \subseteq I_1 \\ |S|=1}}
            \prod\limits_{l \in S} 
            s_l
        \right)
        +
        \\
        & \;\;\;
        \sum\limits_{i_1 i_2=1}^n   
        \; q'_{2,i_1 i_2} \hspace{9 pt}
        \left(
            \sum\limits_{\substack{S \subseteq I_2 \\ |S|=1}}
            \prod\limits_{l \in S} 
            s_l
            +
            \sum\limits_{\substack{S \subseteq I_2 \\ |S|=2}}
            \prod\limits_{l \in S} 
            s_l
        \right)
        +
        \\
        & \;\;\;\;\;\;\;\; \vdots
        \\
        & \;\;
        \sum\limits_{i_1 \dots i_d=1}^n    
        q'_{d,i_1 \dots i_d}
        \left(
            \sum\limits_{\substack{S \subseteq I_d \\ |S|=1}}
            \prod\limits_{l \in S} 
            s_l
            +
            \sum\limits_{\substack{S \subseteq I_d \\ |S|=2}}
            \prod\limits_{l \in S} 
            +
            \dots
            +
            \sum\limits_{\substack{S \subseteq I_d \\ |S|=d}}
            \prod\limits_{l \in S} 
            s_l
        \right)
        = \ldots
    \end{aligned}
    \end{equation}
    where in (a) we expanded the sum $\sum\limits_{h=1}^k$ and in (b) the sum $\sum\limits_{k=1}^d$. We now consider the fact that, due to the symmetry under permutation of $q_k$, we can redefine the indices of the spin products $\prod\limits_{l \in S} s_l$ so that only ordered sequences appear. As an example on a single term, redefining the indices $i_1 \leftrightarrow i_2$ and using the fact that $q'_{2,i_1, i_2} = q'_{2,i_2, i_1}$ we obtain
    \begin{equation}
        \sum\limits_{i_1 i_2=1}^n q'_{2,i_1 i_2} s_{i_2} 
        =
        \sum\limits_{i_2 i_1=1}^n q'_{2,i_2 i_1} s_{i_1} 
        =
        \sum\limits_{i_1 i_2=1}^n q'_{2,i_1 i_2} s_{i_1} 
        .
    \end{equation} 
    Because of this we can remove the sum $\sum_{\substack{S \subseteq I_k \\ |S|=h}}$ provided that we count all its terms, which are the subsets of $I_k$ of size $h$, namely the combinations $\sigma_{k,h}$. If we substitute the sum with the number of combinations and write only the ordered sequence of spins we can continue from Equation (\ref{eq:comp1}) as
    \begin{equation}
    \begin{aligned}
        \\
        \dots
        &=
        \sum\limits_{i_1=1}^n  
        \;\; q'_{1,i_1} \;\;\;\;\;\;
        \left(
            \sigma_{1,1}
            s_{i_1}
        \right)
        +
        \\
        & \;\;\;
        \sum\limits_{i_1 i_2=1}^n
        \; q'_{2,i_1 i_2} \hspace{9 pt}
        \left(
            \sigma_{2,1}
            s_{i_1}
            +
            \sigma_{2,2}
            s_{i_1}s_{i_2}
        \right)
        +
        \\
        & \;\;\;\;\;\;\;\; \vdots
        \\
        & \;\;
        \sum\limits_{i_1 \dots i_d=1}^n
        q'_{d,i_1 \dots i_d}
        \left(
            \sigma_{d,1}
            s_{i_1}
            +
            \sigma_{d,2}
            s_{i_1} s_{i_2}
            +
            \dots
            +
            \sigma_{d,d}
            s_{i_1} s_{i_2} \ldots s_{i_d}
        \right)
        = 
        \dots
    \end{aligned}
    \end{equation}
    We now distribute the sums and collect the terms column-wise, summing all the coefficients that multiply the same spins. In addition, we collect the sum on the index $i_1$ in the first column, the sum on $i_1, i_2$ in the second, and so on until $i_1, \dots, i_d$. We obtain
    \begin{equation}
    \begin{aligned}
        \dots
        =
        &\sum\limits_{i_1=1}^n
        \hspace{4.5pt}
        \left(
            \sigma_{1,1} q'_{1, i_1} + 
            \sigma_{2,1} \sum\limits_{i_2=1}^n q'_{2,i_1 i_2} +
            \dots +
            \sigma_{d,1} \sum\limits_{i_2 \ldots i_d =1}^n q'_{d,i_1 \dots i_d}
        \right)
        s_{i_1}
        +
        \\
        & \hspace{-3.5pt} \sum\limits_{i_1 i_2=1}^n
        \hspace{1.5pt}
        \left(
            \hspace{67pt}
            \sigma_{2,2} q'_{2,i_1 i_2} +
            \dots +
            \sigma_{d,2} \sum\limits_{i_3 \ldots i_d =1}^n q'_{d,i_1 \dots i_d}
        \right)
        s_{i_1}s_{i_2}
        +
        \\
        & \hspace{9pt} \vdots
        \\
        & \hspace{-8pt} \sum\limits_{i_1 \dots i_d=1}^n 
        \left(
            \hspace{181pt}
            \sigma_{d,d} q'_{k,i_1 \dots i_d}
        \right)
        s_{i_1} \dots s_{i_d}
        \\
        =
        &\sum\limits_{k=1}^d
        \sum\limits_{i_1 \dots i_k = 1}^n
        \left[
            q'_{k, i_1 \dots i_k}
            +
            \sum\limits_{h = 1}^{d-k} 
            \sum\limits_{i_{k+1} \dots i_{k+h} = 1}^n
            \binom{k+h}{k}
            q'_{k, i_1 \dots i_{k+h}}
        \right]
        s_{i_1} \dots s_{i_k}
        ,
    \end{aligned}    
    \end{equation}
    where in the last step we used the fact that $\sigma_{k,k} = 1$. This result corresponds to the one of Equation (\ref{eq:ham-with-f}), provided the definition of $f_{k,i_1,\dots, i_k}$ as the term in square brackets.
    
    \subsection{$R_{Z^k}$ Gate Decomposition}\label{appendix:pubo-qaoa}
    \subsubsection*{$R_{Z^2}$ decomposition} 
        We begin by revisiting the decomposition of \(R_{Z^2}\) from Equation~\eqref{eq:exp-term}, using a different notation that will facilitate the generalization of this calculation to \(R_{Z^k}\). We write
        \begin{equation}\label{eq:Rzz-revisited}
            \begin{aligned}
                \exp\left\{-\mathrm{i} \frac{\theta}{2} \sigma^{i_1}_z \sigma^{i_2}_z \right\} 
                &=
                \exp\left\{
                    -\mathrm{i} \frac{\theta}{2}
                    \text{diag}\left\{1, -1, -1, 1\right\}
                \right\} 
                \\
                &=
                \text{diag}\left\{e^{-\mathrm{i} \theta/2}, e^{\mathrm{i} \theta/2}, e^{\mathrm{i} \theta/2}, e^{-\mathrm{i} \theta/2}\right\}
                \\
                &\stackrel{(a)}{=}
                \text{diag}\left\{R_Z, R_Z^{-1}\right\}
                \\
                &\stackrel{(b)}{=}
                \text{diag}\left\{\mathbb{I}, X\right\}\text{diag}\left\{R_Z, R_Z\right\} \text{diag}\left\{\mathbb{I}, X\right\},
             \end{aligned}
        \end{equation}
    where in (a) we used the fact that $R_Z := R_Z(\theta) = \text{diag}\left\{ e^{-\mathrm{i} \theta/2}, e^{\mathrm{i} \theta/2} \right\}$, while in (b) that $R_Z^{-1} := R_Z(-\theta) = X R_Z X $, as can be explicitly computed. We conclude by noticing that 
    \begin{equation}\label{eq:finding-the-cnot}
    \begin{aligned}
        &
        \text{diag}\left\{R_Z, R_Z\right\}
        =
        \mathbb{I} \otimes R_Z
        \\
        &
        \text{diag}\left\{\mathbb{I}, X\right\}
        =
        \ket{0}\bra{0}_{i_1} \otimes \mathbb{I}_{i_2} + \ket{1}\bra{1}_{i_1} \otimes X_{i_2} = \text{CNOT}(i_1, i_2)
        ,
    \end{aligned}
    \end{equation}
    which can be computed directly.
    Plugging Equation (\ref{eq:finding-the-cnot}) into Equation (\ref{eq:Rzz-revisited}) we obtain the same results of Equation (\ref{eq:exp-term}).
    \subsubsection*{$R_{Z^k}$ decomposition} 
        We now generalize this calculation to $R_{Z^k}$, which is given by
        \begin{equation}\label{eq:Rzk-1}
            \begin{aligned}
                R_{Z^k}
                =
                \exp\left\{-\mathrm{i} \frac{\theta}{2} \sigma^{i_1}_z \otimes \dots \otimes \sigma^{i_k}_z \right\} 
             \end{aligned}
        \end{equation}
        We first explicitly compute the term $\sigma^{i_1}_z \otimes \dots \otimes \sigma^{i_k}_z$. This will give a diagonal matrix with values $1$ and $-1$. Its explicit form can be found by writing each $\sigma_z$ as 
        \begin{equation}
            \sigma_z = \sum\limits_{i=0}^1 (-1)^i \ket{i} \bra{i}
            .
        \end{equation}     
        We can then expand $\sigma_z^{\otimes k}$
        \begin{equation}\label{eq:sigma1-tensor-sigmak-pt1}
            \begin{aligned}
                \sigma^{i_1}_z \otimes \dots \otimes \sigma^{i_k}_z 
                &=
                \sum\limits_{i_1=0}^1 (-1)^{i_1} \ket{i_1}\bra{i_1} \otimes \dots \otimes \sum\limits_{i_k=0}^1 (-1)^{i_k} \ket{i_k}\bra{i_k}
                \\
                &=
                \sum\limits_{i_1 \dots i_k = 0}^1 (-1)^{i_1 + \dots +i_k} \ket{i_1}\bra{i_1} \otimes \dots \otimes \ket{i_k}\bra{i_k} 
                \\
                &=
                \text{diag}
                \{
                    (-1)^{0 + \dots + 0}, 
                    (-1)^{0 + \dots + 1},
                    \dots,
                    (-1)^{1 + \dots + 1}
                \}
                .
             \end{aligned}
        \end{equation}
        The $i$-th entry of the above diagonal matrix, which is the $i$-th element of the sum, is given by
        \begin{equation}
            (-1)^{i_1 + \dots + i_k} \ket{i_1}\bra{i_1}\otimes \dots \otimes \ket{i_k}\bra{i_k}
            ,
        \end{equation}
        where $i_1, \dots, i_k$ are the digits of the binary representation of $i$. The exponent of the coefficient, namely $i_1 + \dots + i_k$, counts the number of ones in the binary representation of $i$. This is known as the Hamming weight of $i$, which we denote by $h(i) := \sum\limits_{l = 1}^k i_l$.

        In conclusion we can write 
        \begin{equation}\label{eq:sigma1-tensor-sigmak-pt3}
            \begin{aligned}
                \sigma^{i_1}_z \otimes \dots \otimes \sigma^{i_k}_z 
                =
                \text{diag}
                \{
                    (-1)^{h(0)}, 
                    (-1)^{h(1)},
                    \dots,
                    (-1)^{h(2^{k}-1)}
                \}
                ,
             \end{aligned}
        \end{equation}
        where $(-1)^{h(i)} = 1$ if $h(i)$ is even and $(-1)^{h(i)} = -1$ if $h(i)$ is odd. We can also rewrite $\sigma^{\otimes k}_z$ using Equation (\ref{eq:sigma1-tensor-sigmak-pt3}) only on the first $k-1$ sigmas. Namely
        \begin{equation}\label{eq:sigma1-tensor-sigmak-pt4}
            \begin{aligned}
                \sigma^{\otimes k}_z
                &=
                \sigma^{\otimes k-1}_z \otimes \sigma_z 
                \\
                &=
                \text{diag}
                \{
                    (-1)^{h(0)}, 
                    (-1)^{h(1)},
                    \dots,
                    (-1)^{h(2^{k-1}-1)}
                \}
                \otimes
                \text{diag}\{1, -1\}
                \\
                &=
                \text{diag}
                \{
                    (-1)^{h(0)}, (-1)^{h(0)+1}, 
                    (-1)^{h(1)}, (-1)^{h(1)+1},
                    \dots, 
                    (-1)^{h(2^{k-1}-1)}, (-1)^{h(2^{k-1}-1)+1}
                \}
                ,
             \end{aligned}
        \end{equation}
        which will be a more useful form for $\sigma^{\otimes k}_z$ in the following. To compute $R_{Z^k}$ we start from Equation (\ref{eq:Rzk-1}) and write 
        \begin{equation}\label{eq:Rzk-3}
            \begin{aligned}
                R_{Z^k}
                &=
                \exp\left\{-\mathrm{i} \frac{\theta}{2} \sigma^{i_1}_z \otimes \dots \otimes \sigma^{i_k}_z \right\} 
                \\
                &=
                \exp\left\{
                   -\mathrm{i} \frac{\theta}{2}
                   \text{diag}
                   \{
                    (-1)^{h(0)}, (-1)^{h(0)+1}, 
                    \dots, 
                    (-1)^{h(2^{k-1}-1)}, (-1)^{h(2^{k-1}-1)+1}
                   \}                
                \right\}
                \\
                &=
                \text{diag}
                \left\{
                    e^{-\mathrm{i} \frac{\theta}{2} (-1)^{h(0)}},
                    e^{\mathrm{i} \frac{\theta}{2} (-1)^{h(0)}},
                    \dots, 
                    e^{-\mathrm{i} \frac{\theta}{2} (-1)^{h(2^{k-1}-1)}},
                    e^{\mathrm{i} \frac{\theta}{2} (-1)^{h(2^{k-1}-1)}}
               \right\} 
                \\
                &=
                \text{diag}
                \left\{
                    R_Z^{(-1)^{h(0)}}, 
                    \dots,
                    R_Z^{(-1)^{h(2^{k-1}-1)}}
               \right\} 
               = 
               \dots
        \end{aligned}
        \end{equation}
    where if $h(i)$ is even then $R_Z^{(-1)^{h(i)}} = R_Z$; if $h(i)$ is odd then $R_Z^{(-1)^{h(i)}} = R_Z^{-1}$. We now remember that $R_Z^{-1} = X R_Z X$, which we can use as we did in equality (b) of Equation (\ref{eq:Rzz-revisited}). If, then, $h(i)$ is odd we write $R_Z^{(-1)^{h(i)}} =  X R_Z X$, if it is even we write $R_Z^{(-1)^{h(i)}} =  \mathbb{I} R_Z \mathbb{I}$. In general this can be written as 
    \begin{equation}
        R_Z^{(-1)^{h(i)}} =  X^{\sigma(i)} R_Z X^{\sigma(i)}, 
        \text{    with }
        \sigma(i) = 
        \begin{cases}
            1 \text{ if } h(i) \text{ is odd}\\ 
            0 \text{ if } h(i) \text{ is even}\\ 
        \end{cases}
        .
    \end{equation}
    With this notation, we can continue from Equation (\ref{eq:Rzk-3}) as
    \begin{equation}\label{eq:Rzk-4}
            \begin{aligned}
                \dots
                &=
                \text{diag}
                \left\{
                    X^{\sigma(0)}, 
                    \dots,
                    X^{\sigma(2^{k-1}-1)}
               \right\} 
                \text{diag}
                \left\{
                    R_Z, 
                    \dots,
                    R_Z
               \right\} 
               \text{diag}
                \left\{
                    X^{\sigma(0)}, 
                    \dots,
                    X^{\sigma(2^{k-1}-1)}
               \right\} 
               .
        \end{aligned}
        \end{equation}
    Analogously to Equation (\ref{eq:finding-the-cnot}) we now proceed to further decompose these block-diagonal matrices, namely
    \begin{equation}\label{eq:almost-decomposed-rzk}
        \text{diag}\left\{R_Z, \dots, R_Z\right\}
        =
        \mathbb{I} \otimes \dots \otimes \mathbb{I} \otimes R_Z
        ,
    \end{equation}
    where the identities act on qubits $1, \dots, k-1$ while $R_Z$ acts on qubit $k$. 
    The remaining terms on the left- and right-hand sides of Equation (\ref{eq:almost-decomposed-rzk}) can instead be decomposed as
    \begin{equation}\label{eq:finding-the-cnot-pubo2}
    \begin{aligned}
        \text{diag}
            \left\{
                X^{\sigma(0)}, 
                \dots,
                X^{\sigma(2^{k-1}-1)}
            \right\} 
        &=
        \sum\limits_{i=0}^{2^{k-1}-1}
        \ket{i}\bra{i} \otimes X^{\sigma(i)}
        =
        \dots
    \end{aligned}
    \end{equation}
    where we stress the fact that in Equation (\ref{eq:finding-the-cnot-pubo2}) the projector $\ket{i}\bra{i}$ is on the first $k-1$ qubits, while the operator $X^{\sigma(i)}$ acts on the last one, namely the qubit $k$. We continue from Equation (\ref{eq:finding-the-cnot-pubo2}) as
    \begin{equation}
    \begin{aligned}
        \dots
        &=
        \sum\limits_{i_1, \dots, i_{k-1} = 0 }^1
        \ket{i_1, \dots, i_{k-1}}\bra{i_1, \dots, i_{k-1}} \otimes X^{\sigma(i)}_{i_k}
        \\
        &=
        \sum\limits_{i_1, \dots, i_{k-1} = 0 }^1
        \ket{i_1}\bra{i_1} \otimes \dots \otimes \ket{i_{k-1}}\bra{i_{k-1}} \otimes X^{\sigma(i)}_{i_{k}}
        \\
        &\stackrel{(a)}{=}
        \left(
            \ket{0}\bra{0}_{i_1} \otimes \mathbb{I}_{i_{k}} + \ket{1}\bra{1}_{i_1} \otimes X_{i_{k}}
        \right)
        \dots
        \left(
            \ket{0}\bra{0}_{i_{k-1}} \otimes \mathbb{I}_{i_{k}} + \ket{1}\bra{1}_{i_{k-1}} \otimes X_{i_{k}}
        \right)
        \\
        &=
        \text{CNOT}(i_1, i_{k}) \dots \text{CNOT}(i_{k-1}, i_{k})
        .
    \end{aligned}
    \end{equation}
    Indeed, we can see on the right hand side of equality (a) that, once we computed all the products, if we select a particular state in the sum, for every $\ket{1}\bra{1}$ there is a $X_{i_k}$ operator. If there is an even number of $\ket{1}\bra{1}$, then we get $X_{i_k}^{2n} = \mathbb{I}$, else $X_{i_k}^{2n+1} = X_{i_k}$, which is the result expressed as $X^{\sigma(i)}_{i_k}$.

    \bibliography{refs}

\end{document}